\providecommand{\tabularnewline}{\\}
\def\RSsubtxt{section~}\newref{sub}{name = \RSsubtxt}}
\def\RSthmtxt{theorem~}\newref{thm}{name = \RSthmtxt}}
\def\RSlemtxt{lemma~}\newref{lem}{name = \RSlemtxt}}
  \theoremstyle{definition}
  \newtheorem{defn}{\protect\definitionname}[section]
  \theoremstyle{plain}
  \newtheorem{prop}{\protect\propositionname}[section]
   \newenvironment{proof}[1][\proofname]{\par
     \normalfont\topsep6\p@\@plus6\p@\relax
     \trivlist
     \itemindent\parindent
     \item[\hskip\labelsep
           \scshape
       #1]\ignorespaces
   }{%
     \endtrivlist\@endpefalse
   }
   \providecommand{\proofname}{Proof}
  \theoremstyle{plain}
  \newtheorem{conjecture}{\protect\conjecturename}[section]
  \theoremstyle{remark}
  \newtheorem{rem}{\protect\remarkname}[section]
\newenvironment{lyxlist}[1]
{\begin{list}{}
{\settowidth{\labelwidth}{#1}
 \setlength{\leftmargin}{\labelwidth}
 \addtolength{\leftmargin}{\labelsep}
 }}
{\end{list}}
  \theoremstyle{plain}
  \newtheorem{thm}{\protect\theoremname}[section]
  \theoremstyle{plain}
  \newtheorem{lem}{\protect\lemmaname}[section]
  \theoremstyle{plain}
  \newtheorem{cor}{\protect\corollaryname}[section]
 \newenvironment{codesmall}
     {\par\begin{scriptsize}\begin{list}{}{
         \setlength{\rightmargin}{\leftmargin}
         \setlength{\listparindent}{0pt}
         \raggedright
         \setlength{\itemsep}{0pt}
         \setlength{\parsep}{0pt}
         \normalfont\ttfamily}%
         \item[]}
     {\end{list}\end{scriptsize}}
\authors{Jason Spencer}
\keywords{pseudorandom generator, cellular automata, cryptography, provable security}
\newcommand \Nat{\mathbb{N}}
\providecommand{\conjecturename}{Conjecture}
\providecommand{\corollaryname}{Corollary}
\providecommand{\definitionname}{Definition}
\providecommand{\lemmaname}{Lemma}
\providecommand{\propositionname}{Proposition}
\providecommand{\remarkname}{Remark}
\providecommand{\theoremname}{Theorem}
\begin{document}

\title{Cellular Automata in Cryptographic Random Generators}

\author{Jason Spencer}

\date{May 1, 2013}

\maketitle
\begin{romanpages}

\begin{abstract}
Cryptographic schemes using one-dimensional, three-neighbor cellular
automata as a primitive have been put forth since at least 1985. Early
results showed good statistical pseudorandomness, and the simplicity
of their construction made them a natural candidate for use in cryptographic
applications. Since those early days of cellular automata, research
in the field of cryptography has developed a set of tools which allow
designers to prove a particular scheme to be as hard as solving an
instance of a well-studied problem, suggesting a level of security
for the scheme. However, little or no literature is available on whether
these cellular automata can be proved secure under even generous assumptions.
In fact, much of the literature falls short of providing complete,
testable schemes to allow such an analysis. 

In this thesis, we first examine the suitability of cellular automata
as a primitive for building cryptographic primitives. In this effort,
we focus on pseudorandom bit generation and noninvertibility, the
behavioral heart of cryptography. In particular, we focus on cyclic
linear and non-linear automata in some of the common configurations
to be found in the literature. We examine known attacks against these
constructions and, in some cases, improve the results.

Finding little evidence of provable security, we then examine whether
the desirable properties of cellular automata (i.e. highly parallel,
simple construction) can be maintained as the automata are enhanced
to provide a foundation for such proofs. This investigation leads
us to a new construction of a finite state cellular automaton (FSCA)
which is NP-Hard to invert. Finally, we introduce the Chasm pseudorandom
generator family built on this construction and provide some initial
experimental results using the NIST test suite.\end{abstract}
\begin{acknowledgements}
Thanks to all my friends and family for excusing the many absences
over the years.\medskip{}

Thanks to my colleagues at Volcano for listening to me drone on about
this project for nigh on two years.\medskip{}

Thanks Dr. Phillip Rogaway at UC Davis for getting me started in provable
security and inspiring this line of inquiry.\medskip{}

Special thanks to Dr. Chris Peikert at Georgia Tech for engaging in
many useful exchanges with a determined stranger.\medskip{}

Most of all, thanks to Dr. Marcus Schaefer for his useful suggestions,
detailed feedback, willingness to go where no one else would, and
gentle hand in dealing with a stubborn narcissist.
\end{acknowledgements}

\tableofcontents{}

\begin{center}
\vfill{}
Copyright © 2013 Jason Spencer. All rights reserved.\listoffigures

\par\end{center}

\listoftables

\listofalgorithms

\begin{dedication}
This work is dedicated to my girls, without whose sacrifice it would
not have been possible.\end{dedication}
\end{romanpages}

\section{Introduction}
\begin{quotation}
\begin{flushright}
\medskip{}
\textsl{\small \textquotedbl{}Any one who considers arithmetical methods
of producing random digits is, of course, in a state of sin.\textquotedbl{}--John
von Neumann.}
\par\end{flushright}{\small \par}
\end{quotation}
\smallskip{}
The modern world increasingly hinges on communication. Business is
ruled more and more by e-Commerce. Our computers run largely on downloaded
open-source software. Students turn in their distance-learning homework
through e-mail and websites. Nations attack each other's domestic
infrastructure over the internet. Notice that at least one party in
each of these scenarios has an interest in ensuring the secrecy and/or
the authenticity of the communication. In an age where information
itself becomes a prime mover, protecting that information becomes
more important. So it is seems fair to say that \emph{secure} communication
is increasingly critical in our daily life. And though security requires
an array of solutions to many challenging problems, certainly good
cryptography is one cornerstone.

The foundation of much of modern security and cryptography is good
random number generation. Pseudorandom generators (PRGs) are used
for exchanging session keys, creating public/private key pairs, creating
symmetric keys from user input, and generating nonces and initialization
vectors for various modes of encryption. Good PRGs are also at the
heart of many cryptographic primitives such as stream ciphers, block
ciphers, and hash functions. PRGs used in this setting must be carefully
designed lest they compromise the entire cryptosystem. Even the best
crypto primitives become useless when operated with poorly generated
pseudorandom data. We mention the cases of Netscape \cite{Goldberg96},
Kerberos \cite{Minar96}, the GSM wireless network \cite{Biryukov00},
and the Sony PlayStation 3 \cite{Debusschere12} as evidence. These
were all failures in seeding PRGs or generating random values. While
block ciphers and the like get much of the attention, these deterministic
components are almost boring when not fed sufficiently random data.

Randomness is crucial to many activities besides cryptography. Monte
Carlo simulations allow mathematical modeling of systems and functions
that are too complicated or expensive to solve directly. Such a simulation
gives us a statistically qualified numerical value by evaluating the
system or function at a number of randomly chosen inputs. This random
selection is often done with a pseudorandom generator for reasons
of cost and speed. Monte Carlo methods are a mainstay in fields as
varied as computational physics, financial modeling, numerical optimization,
and a number of engineering disciplines. 

In fact, it was the application of Monte Carlo methods to integrated
circuit (IC) testing in the 1980's that drove a body of research to
find better and cheaper pseudorandom bits \cite{Pries86,Gloster88,Hortensius89,Bardell90,Serra90,Cattell96,GuanT04}.
Because of the combinatoric nature of the possible failures in an
IC, most problems in this kind of testing are NP-Complete. Monte Carlo
methods offered a path to high-confidence test results without the
need to wrestle with these combinatoric problems. The common solution
was to add some circuitry to accept static test vectors from a test
fixture and produce some sort of pass/fail check inside the IC. In
addition, a separate mode was supported where a short seed sent to
the IC would generate a much longer yet deterministic sequence of
pseudorandom test vectors. These would be run and the same pass/fail
check performed. This cut down transmitted data to the IC and greatly
increased the speed of the test.

Since this circuitry was on the IC itself but was not directly valuable
to the customer, there was pressure to reduce the resources it consumed--transistors,
connection lengths, CMOS area, etc. The PRG used had traditionally
been a linear feedback shift register (LFSR), but in the late 1980's
researchers began to compare LFSRs with one-dimensional cellular automata
(CA). After understanding some behavioral basics such as how to achieve
a maximal period, experiments showed that CA gave better test coverage
with lower cost in ICs than LFSRs \cite{Gloster88,Hortensius89}.
Their abilities as PRGs were established.

It was natural then to investigate CA as primitives for cryptographic
schemes, and through the 1990's and 2000's, many such schemes were
put forward. As is often the case with cryptographic schemes, many
have been broken and none have really garnered widespread attention.
This may be due in part to the lack of specifics in many of these
schemes about key scheduling and initialization that would facilitate
implementation and focused cryptanalysis. 

If so, it would be somewhat ironic since this same period has seen
the opening and continued growth of standards-based algorithm selection
(for AES, SHA-1, 2, \& 3, and eStream), governing body standardization,
and the increasing popularity of provable security in cryptographic
theory. This last area began in the early 1980's as the application
of the techniques from complexity theory to problems in cryptography,
even in concrete situations of constant size. Proofs concerning cryptographic
primitives and protocols are given in relation to mathematically precise
definitions and (hopefully) minimal assumptions about hardness. Proofs
of protocol security are often provided as reductions from breaking
the primitive they employ. This allows assumptions to be minimal and
explicit so that effort spent on cryptanalysis can be focused and
re-used for these primitives.

In this thesis, we will attempt to bring a provable security approach
to cryptography based on CA. Specifically, we will consider the ability
of CA constructions presented in the literature to act as cryptographic
PRGs since PRGs are fundamental to all of cryptography. As we will
see, this ability depends almost entirely on the noninvertibility
of CA. The remainder of this section gives the necessary definitions
and criteria for this evaluation. In \secref{CA} we present an overview
of CA and related terminology then review some important schemes and
seminal breaks of CA constructions from the literature. \secref{AnalysisNonLinearCA}
presents new analysis of CA using non-linear rules, including a new
algorithm to invert certain non-linear rules in two-state, three-neighbor
cyclic CA as well as a proposed algorithm to derandomize cryptanalysis
of non-linear CA when an output sequence over time from a single cell
is known. Unconvinced of the suitability of simple CA for use as primitives
in secure cryptographic systems, we then examine how they may be enhanced
to allow for proofs of hardness and, eventually, security. A new construction
called finite state cellular automaton which adds a minimal amount
of complexity to cells is shown to be provably hard to invert in \secref{NewConstruction}.
We then use this theoretical construction to build the Chasm family
of concrete PRGs and give some experimental results in \secref{FscaBasedPrg}.

\subsection{Definitions of Security}

Let us first make clear the context in which we consider PRGs and
exactly what we mean by that term. Randomness in modern cryptography
begins with Shannon's information theory \cite{Shannon49a}, where
randomness is something of a measure of the lack of information. Kolmogorov
(and Chaitin) added the notion of descriptive complexity \cite{Kolmogorov65},
which classifies a string to be no more random than the program required
to generate it. Both of these notions allow us to quantify randomness,
but when we do so we find ``perfect'' randomness only at extreme,
theoretical limits. These concepts do not help us create or assess
practical randomness.

A third notion defines randomness relative to an observer. A string
is ``random'' to an observer if it cannot be distinguished from
a truly random string within the bounds of that observer's computational
resources. This in a sense defines randomness as the extent to which
the observer is unable to compute any meaningful information from
a string. Since this is a different concept from truly random, we
instead use the term \emph{pseudorandom}. The observers we are concerned
with are algorithms whose running time are bounded in some way.

Pseudorandom generators were the first primitives to be defined using
this notion of pseudorandomness. The original contemporaneous definitions
are due to Yao \cite{Yao82} and also Blum and Micali \cite{Blum84}.
We give a more common, modern definition.
\begin{defn}
\label{def:PRG}A deterministic polynomial-time algorithm $G:\{0,1\}^{n}\rightarrow\{0,1\}^{\ell(n)}$
is a \emph{pseudorandom generator} with stretching function $\ell:\Nat\rightarrow\Nat$
such that $\ell(n)>n$ if for any probabilistic polynomial-time algorithm
$A$, for any positive polynomial $p$:
\[
\left|\Pr\left[A(G(U_{n}))=1\right]-\Pr\left[A(U_{\ell(n)})=1\right]\right|<\frac{1}{p(n)}
\]
for all sufficiently large $n$ where $U_{k}$ is a $k$-bit string
drawn uniformly at random from $\{0,1\}^{k}$ and the probabilities
are taken over the respective $U_{k}$ and over the coins of $A$.
\end{defn}
Use of the term pseudorandom in this thesis should be assumed to imply
this notion of computational indistinguishability. Pseudorandomness
subject to a fixed set of statistical tests will be referred to as
\emph{statistical pseudorandomness}.\emph{ }

Using this definition, we can see that a statistical test against
the output of a generator $G$ is simply a special kind of distinguishing
algorithm $A$. While we can run a battery of statistical tests, we
still cannot truly satisfy this definition without some assumption
on the hardness of inverting $G$. If $G$ were easy for $A$ to invert,
$A$ could:
\begin{itemize}
\item Assume its input $I_{A}$ is the output of $G$, and invert $G$ for
that output, to arrive at an assumed input $I_{G}$.
\item Run $G(I_{G})$ and compare the result to $I_{A}$. If they are the
same, $A$ outputs a 1.
\end{itemize}
This algorithm would let $A$ distinguish the outputs of $G$ quite
easily and so $G$ would not be a pseudorandom generator as defined.
Thus, some notion of noninvertibility or one-wayness of $G$ is essential.
We define this notion as follows:
\begin{defn}
\label{def:OneWay}A function $f:\{0,1\}^{*}\rightarrow\{0,1\}^{*}$
is said to be \emph{one-way} if $f$ is polynomial-time computable
and for every probabilistic polynomial-time algorithm $A$, for any
positive polynomial $p$:
\[
\Pr_{x\in\{0,1\}^{n}}\left[f(A(f(x)))=f(x)\right]<\frac{1}{p(n)}
\]
for all sufficiently large $n$ where the probability is taken uniformly
over the choices of $x$ and the coins of $A$.
\end{defn}
It turns out that this notion is the key to PRGs. Impagliazzo, Levin,
and Luby prove in \cite{Impagliazzo89} that the existence of one-way
functions is a necessary and sufficient condition for the existence
of PRGs. Note that $f$ must be generally hard to invert as the probabilities
are taken over all $x$. This is juxtaposed with NP-Complete problems,
where having occasional strings which cannot be decided in polynomial-time
is sufficient to consider the whole language ``hard.'' Note also
that by this definition, $A$ need not compute $x$ exactly, just
any pre-image of $f(x)$. If this pre-image is not unique the function
is noninvertible by some definitions of that term. To be clear, we
will use noninvertible to mean one-way and inversion to mean finding
any pre-image of $f(x)$.

Other properties of PRGs are very useful in practice, especially in
scenarios where the internal state of the generator (especially software-based
generators) may become known to an attacker. In this scenario, the
attacker may be able to predict future pseudorandom outputs and/or
recreate past outputs. If the generator is shared among users (as
is /dev/random on Linux systems), and if those outputs are used in
another user's cryptosystem (as a key or initialization vector), that
system can easily be compromised. Security against this scenario is
captured by the following properties:
\begin{description}
\item [{Backward-Secure:}] Future outputs are secure against a compromise
of the internal state of the generator which occurred in the past.
Equivalently, given the current internal state of the generator, an
attacker is unable to predict future outputs with non-negligible success.
As PRGs are deterministic algorithms, this property is difficult for
a generator to display intrinsically, and is usually achieved only
be external re-seeding of the generator.
\item [{Forward-Secure:}] Previous outputs are secure against a compromise
of the internal state of the generator which may occur in the future.
Given the current internal state, an attacker is unable to guess past
outputs with non-negligible success. This implies the PRG's function
is effectively one-way.
\end{description}
The notion of forward security was first formalized by Bellare and
Yee \cite{Bellare03}, and is also applied to symmetric encryption
schemes regarding key compromise. As seen in our example of shared
/dev/random, this property is very powerful, and is to be expected
in modern PRGs.

We can see that we need noninvertibility to guarantee pseudorandom
behavior going forward, and we also need it to protect previous outputs.
Thus noninvertibility is a must-have property of any primitive used
as a cryptographic PRG.

\section{\label{sec:CA}Cellular Automata}

A Cellular Automaton is a discrete time and space dynamical system
consisting of an array of cells, each of which implements a (usually
simple) automaton. These cells can be arranged over one or two dimensions
and connected to their neighbors in various configurations. The cells
use the state of their neighbors and their own state to decide a next
state to transition to. In some cases, very simple configurations
of cells using very simple rules display surprisingly complex behavior.

Cellular Automata were first proposed by John von Neumann while working
at Los Alamos on the problem of building self-replicating systems%
\footnote{See \cite{Wolfram02} and \cite{Sarkar00} for a more complete history.%
}. His 1953 construction had 200,000 cells over two-dimensions where
each cell used 29 states to model various operations of the robot.
Stanislaw Ulam picked up this concept again in the 1960's with work
on recursively-defined geometrical objects. He noted that in two-state
two-dimensional CA with simple rules, a single non-conforming cell
generated complex patterns which may model biologic interactions.
John Conway experimented further with connected cells using simple
rules and developed ``The Game of Life,'' which popularized two-state
two-dimensional CA. Many other constructions of one and two dimensions
are useful for a host of applications too numerous to mention.

Pseudorandom generators and cryptography using CA flow from work by
Stephen Wolfram in the 1980s and the response to it. This is the trail
we will follow.

\subsection{Definitions}

Formally, a CA is a vector $F=\langle f_{1},f_{2},\ldots,f_{n}\rangle$
of functions $f_{i}:\{0,1\}^{N}\mapsto\{0,1\}$ for some $N\in\mathbb{N}$.
Each $f_{i}$ is evaluated at some discrete time $t$ to produce a
vector $S^{(t)}=\langle s_{1}^{(t)},s_{2}^{(t)},\ldots,s_{n}^{(t)}\rangle$
of values $s_{i}^{(t)}$. The values $s_{i}$ are also known as the
\emph{state} (in the sense of stored value) of cell $i$, and so $S$
is sometimes called the \emph{state vector}.\emph{ }When each value
$s_{i}^{(t)}\in\{0,1\}$, the CA is referred to as a \emph{two-state}
CA.\emph{ }Each $f_{i}$ is traditionally called the \emph{transition
function} of the cell (as in Finite State Automata).\emph{ }The $N$
inputs to cell $i$ normally come from its own output at the previous
time step as well as $(N-1)/2$ of its immediate neighbors on either
side. Thus, in an $N$\emph{-neighbor }CA, the value of cell $i$
at time $t+1$ is defined by 
\[
s_{i}^{(t+1)}=f_{i}(s_{i-(N-1)/2}^{(t)},\ldots,s_{i}^{(t)},\ldots,s_{i+(N-1)/2}^{(t)})
\]
For convenience, we will denote a neighborhood of values at time $t$
from cell $i$ to cell $j$ inclusive with $S_{i:j}^{(t)}$, and likewise
for neighborhoods of functions.

To determine what happens at the outer-most cells, a CA must specify
an input for the missing neighbor. A \emph{null-boundary} CA provides
a constant 0 for these neighbors, while a \emph{cyclic-boundary} CA
provides the value from the outer-most cell from the opposite end
of the array. We will be concerned only with cyclic-boundary CA.

Two-state, 3-neighbor, cyclic-boundary CA are of particular interest
in the literature. This is the simplest configuration shown to have
complex behavior, depending on $F$. Most authors refer to the transition
function of the CA as the ``rule,'' following the numbering convention
of Wolfram\cite{Wolfram86b}. Rules are numbered by considering the
output bits for each of the $2^{N}$ possible inputs as a binary number,
then interpreting that number in decimal. For example, a function
that, on input strings of $111,110,\ldots,000$, produces output bits
$0,0,0,1,1,1,1,0$ respectively is called rule 30. Using this scheme,
the vector $F$ will be referred to as the \emph{rule set} or the
\emph{rule vector}.

We will often work in standard Boolean algebra to describe these rules,
using $\cdot$ or concatenation to indicate AND, $+$ for OR, and
$\oplus$ for XOR. Rules using only XOR giving $f$ the form $f(x_{1},x_{2},x_{3})=a_{0}\oplus a_{1}x_{1}\oplus a_{2}x_{2}\oplus a_{3}x_{3}$
we define as \emph{affine} in GF(2). Affine rules having $a_{0}=0$
are called linear. For a linear rule, complementing one of the inputs
is equivalent to setting $a_{0}=1$, and so affine rules are linear
rules with an additive offset (as in other domains). This leads to
the occasional use of the term \emph{additive} as a synonym for affine.
Rules that are not linear are called \emph{nonlinear}. Though all
rules have three arguments by definition in a 3-neighbor CA, not all
rules make use of all inputs. We will describe those rules that make
use of only two inputs as \emph{binary} and those that use all three
as \emph{ternary}. The rules that have attracted the most attention
are: 
\begin{eqnarray*}
\text{rule 30:\quad}s_{i}^{(t+1)} & = & s_{i-1}^{(t)}\,\oplus\,(s_{i}^{(t)}\,+\, s_{i+1}^{(t)})\\
\text{rule 150:\quad}s_{i}^{(t+1)} & = & s_{i-1}^{(t)}\,\oplus\, s_{i}^{(t)}\,\oplus\, s_{i+1}^{(t)}\\
\text{rule 90:\quad}s_{i}^{(t+1)} & = & s_{i-1}^{(t)}\,\oplus\, s_{i+1}^{(t)}\\
\text{rule 105:\quad}s_{i}^{(t+1)} & = & 1\oplus s_{i-1}^{(t)}\,\oplus\, s_{i}^{(t)}\,\oplus\, s_{i+1}^{(t)}\\
\text{rule 165:\quad}s_{i}^{(t+1)} & = & 1\oplus s_{i-1}^{(t)}\,\oplus\, s_{i+1}^{(t)}
\end{eqnarray*}
These and some slight variations will be the focus of our analysis. 

Even such a small set of rules still allows a variety of options in
constructing CAs. A \emph{uniform} CA is one where each cell applies
the same rule at each time step. In such cases, the rule vector may
be denoted $F_{r}$ to indicate that Wolfram rule $r$ is used for
each cell. Alternatively, a \emph{hybrid} CA may assign different
rules to different cells. A hybrid CA where the rules are symmetric
about a center cell will be referred to as \emph{symmetric}; those
hybrid CA without this property are called \emph{asymmetric}. Uniform
and hybrid CAs keep the assigned rules constant across time steps.
CAs which vary a given cell's rule over time according to some scheme
are called \emph{programmable}. A CA (uniform or hybrid) which uses
only linear rules is known as a \emph{linear CA}, and likewise for
nonlinear rules.

The use of XOR on just a single input can be of critical importance.
A rule $f$ is said to be \emph{left-toggle} if the following property
holds for all $S\in\{0,1\}^{3}$:
\[
1\oplus s_{i}^{(t+1)}=f(1\oplus s_{i-1}^{(t)},s_{i}^{(t)},s_{i+1}^{(t)})
\]
 A rule is \emph{right-toggle} if instead
\[
1\oplus s_{i}^{(t+1)}=f(s_{i-1}^{(t)},s_{i}^{(t)},1\oplus s_{i+1}^{(t)})
\]
holds. Rules combining the left or right bits with XOR will be left-
or right-toggle, respectively. It will be important in the analysis
of toggle rules to have the following lemma.
\begin{prop}
\label{prop:ToggleProp}Let $S\in\{0,1\}^{n}$ be the state vector
of an $n$-cell CA, $f:\{0,1\}^{3}\rightarrow\{0,1\}$ be a left-toggle
rule, and $g:\{0,1\}^{3}\rightarrow\{0,1\}$ be a right-toggle rule.
After applying $f$ to cell $i$ at time $t$, $s_{i-1}^{(t)}=f(s_{i}^{(t+1)},s_{i}^{(t)},s_{i+1}^{(t)})$.
Similarly, after applying $g$ to cell $i$ at time $t$, $s_{i+1}^{(t)}=f(s_{i-1}^{(t)},s_{i}^{(t)},s_{i}^{(t+1)})$.\end{prop}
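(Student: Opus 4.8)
The plan is to exploit the fact that the left-toggle property makes $f$ behave like an involution in its first argument. First I would recall what ``applying $f$ to cell $i$ at time $t$'' means, namely that $s_i^{(t+1)} = f(s_{i-1}^{(t)}, s_i^{(t)}, s_{i+1}^{(t)})$. The goal is then to show that feeding the newly computed value $s_i^{(t+1)}$ back into $f$ in place of the (now treated as unknown) left neighbor recovers $s_{i-1}^{(t)}$.

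The key observation is that the defining identity $1 \oplus s_i^{(t+1)} = f(1\oplus s_{i-1}^{(t)}, s_i^{(t)}, s_{i+1}^{(t)})$ says that flipping the first input flips the output, which forces $f$ to depend on its first coordinate only through XOR; concretely $f(a,b,c) = a \oplus h(b,c)$ where $h(b,c) := f(0,b,c)$. I would establish this normal form first. Writing $a = s_{i-1}^{(t)}$, $b = s_i^{(t)}$, $c = s_{i+1}^{(t)}$, and $d = s_i^{(t+1)} = a \oplus h(b,c)$, the quantity to evaluate is $f(d,b,c) = d \oplus h(b,c) = a \oplus h(b,c) \oplus h(b,c) = a = s_{i-1}^{(t)}$, since the two copies of $h(b,c)$ cancel over GF(2). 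That is precisely the claimed identity.

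If one prefers to avoid the normal form, the same conclusion follows from a two-case split on whether $d = a$ or $d = 1 \oplus a$: in the first case $f(d,b,c) = f(a,b,c) = d = a$, and in the second the toggle property gives $f(d,b,c) = f(1\oplus a, b, c) = 1 \oplus f(a,b,c) = 1 \oplus d = a$. Either way $f(d,b,c) = s_{i-1}^{(t)}$, which I find the cleaner presentation.

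The second statement is handled by the mirror-image argument: the right-toggle property forces $g(a,b,c) = c \oplus h'(a,b)$, so substituting $s_i^{(t+1)}$ for the right neighbor and cancelling $h'$ recovers $s_{i+1}^{(t)}$ by the identical GF(2) cancellation. (The display for this case should read $s_{i+1}^{(t)} = g(\ldots)$ rather than $f$.) I do not anticipate a genuine obstacle here: the content is a one-line cancellation once the first-coordinate normal form is recognized. The only point requiring care is confirming that the toggle identity holds for \emph{every} fixed pair $(b,c)$, which is exactly what the quantifier ``for all $S\in\{0,1\}^3$'' guarantees, so the normal form is valid on the whole domain rather than merely on the particular neighborhood at hand.
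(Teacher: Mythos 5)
Your proposal is correct, including your parenthetical fix of the statement's typo (the second display should indeed read $g$, not $f$). Your lead argument, though, takes a different route from the paper's. The paper proves the identity by a case split on whether $s_{i}^{(t+1)}=s_{i-1}^{(t)}$ or $s_{i}^{(t+1)}\neq s_{i-1}^{(t)}$ --- which is precisely your fallback two-case argument on $d=a$ versus $d=1\oplus a$, although the paper dresses it in the auxiliary displayed identity $(s_{i}^{(t+1)}\oplus s_{i-1}^{(t)})\oplus s_{i}^{(t+1)}=f((s_{i}^{(t+1)}\oplus s_{i-1}^{(t)})\oplus s_{i-1}^{(t)},s_{i}^{(t)},s_{i+1}^{(t)})$, which makes the argument look more elaborate than it is. Your primary route instead first extracts the normal form $f(a,b,c)=a\oplus h(b,c)$ with $h(b,c):=f(0,b,c)$ --- legitimate exactly because the toggle identity is quantified over all of $\{0,1\}^{3}$, a point you rightly flag --- after which the inversion identity is a one-line cancellation over GF(2). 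The normal form buys more than brevity: it is the converse of the paper's passing remark that rules XORing in the left bit are left-toggle, so it characterizes left-toggle rules as exactly those of the form $S_{i-1}\oplus g(S_{i},S_{i+1})$, which is the very hypothesis under which the paper later states \textsc{InvertToggleRule}; the paper never makes that equivalence explicit, so your lemma would actually tighten the link between this proposition and the later algorithm. What the paper's case split buys in exchange is only the avoidance of a separate structural lemma. Both arguments are sound, and your mirrored treatment of the right-toggle case is the same one-line adaptation the paper waves at with ``the proof for right-toggle rules is similar.''
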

\begin{proof}
We first address the case of a left-toggle rule, and consider

\begin{equation}
(s_{i}^{(t+1)}\oplus s_{i-1}^{(t)})\oplus s_{i}^{(t+1)}=f((s_{i}^{(t+1)}\oplus s_{i-1}^{(t)})\oplus s_{i-1}^{(t)},s_{i}^{(t)},s_{i+1}^{(t)}).\label{eq:ToggleProof}
\end{equation}
 Either $s_{i}^{(t+1)}=s_{i-1}^{(t)}$ or $s_{i}^{(t+1)}\neq s_{i-1}^{(t)}$.
In the first case, we have
\[
0\oplus s_{i}^{(t+1)}=f(0\oplus s_{i-1}^{(t)},s_{i}^{(t)},s_{i+1}^{(t)})
\]
which is just the definition of $f$. But since $s_{i}^{(t+1)}=s_{i-1}^{(t)}$,
we can exchange these values and write (\ref{eq:ToggleProof}) as
\[
s_{i-1}^{(t)}=f(s_{i}^{(t+1)},s_{i}^{(t)},s_{i+1}^{(t)})
\]
which is the identity we seek. In the second case, we have 
\[
1\oplus s_{i}^{(t+1)}=f(1\oplus s_{i-1}^{(t)},s_{i}^{(t)},s_{i+1}^{(t)}).
\]
Since $s_{i}^{(t+1)}\neq s_{i-1}^{(t)}$, $1\oplus s_{i}^{(t+1)}=s_{i-1}^{(t)}$
and $1\oplus s_{i-1}^{(t)}=s_{i}^{(t+1)}$. Substituting these identities
gives us 
\[
s_{i-1}^{(t)}=f(s_{i}^{(t+1)},s_{i}^{(t)},s_{i+1}^{(t)})
\]
This proves the Lemma for all $f$. The proof for right-toggle rules
is similar. 
\end{proof}
Thus, a left-toggle rule allows us to substitute the rule output for
the left neighbor in order to solve for that left neighbor. The reflected
observation holds for right-toggle rules as well.

A \emph{temporal sequence} is a sequence of the output values of a
single cell taken over multiple time steps. We say a CA \emph{produces
}a temporal sequence (or just sequence) $\sigma_{i}$ at cell $i$
if the values $s_{i}$ over time match the elements of $\sigma_{i}$.
Of particular interest in the literature is the central temporal sequence,
used as a pseudorandom stream in many CA-based constructs. Unless
otherwise noted, we assume $n$ is odd to make clear which is the
central temporal sequence. The \emph{right-adjacent} sequence of a
temporal sequence is the temporal sequence one position to its right,
and similarly for the \emph{left-adjacent} sequence. 

Finally, it bears mentioning that cellular automata can do some funny
things. We refer to the state vector at time step $t=0$ as the \emph{initial
state} and also the \emph{seed}, a term frequently used for the initial
input to random generators. Some CA are not capable of generating
their initial state at a later time step. Such a state vector is known
as a \emph{garden of Eden} state as it can only exist in the beginning.
Some CA evolve into a single, fixed state (usually all 0s), which
is known as a \emph{dead-end state} for that CA.

\subsection{A Brief Overview of Research on Pseudorandom Generation with CA }

CA entered the cryptography and random number generation domains with
Stephen Wolfram's claims about rule 30. In \cite{Wolfram86b}, he
proposed a uniform $n$-cell cyclic arrangement using rule 30 with
the temporal sequence of the center cell used as a random stream.
The periods of sequences produced depend on the number of cells, but
also on the initial seed. The maximal period was estimated to be $2^{0.61(n+1)}$.
This generator was shown to pass a suite of 7 statistical tests, performing
best when the sequence was much shorter than the period of the CA.
The CA performed better than an LFSR of the same size, but not as
well as a linear congruential generator or the bytes of $\sqrt{2}$,
$e$, or $\pi$. 

Hortensius, et. al. \cite{Hortensius89} compared this rule 30-based
generator with a hybrid configuration using rules 90 and 150, based
on work in \cite{Pries86}, and also with traditional Linear Feedback
Shift Registers (LFSRs). The focus of these experiments was generation
of test vectors in VLSI manufacturing where layout concerns are important.
They showed that both kinds of CA performed better than LFSRs in statistical
tests. The null boundary hybrid CA had on average longer periods than
cyclic boundary uniform rule 30 CAs. They also catalog the configuration
of rule sets which produce maximal periods in the hybrid configurations,
along with periods for other configurations tested. These results
were arrived at through exhaustive search.

Serra, et. al. took up the question of how to synthesize a null-boundary
linear CA given a primitive polynomial in \cite{Serra90}. Given a
linear CA, each new state vector can be described as a linear system,
and so can be represented as a matrix operation over GF(2). This matrix
is called the transition matrix of the automaton. The authors first
establish an isomorphism between the transition matrix of an LFSR
with that of a CA by showing they are similar (i.e. have the same
characteristic polynomial) and so describe the same linear transform
under different bases. They then give an algorithm to produce the
transition matrix for a null-boundary linear CA using only rules 90
and 150 (from which the rule set is clear) given a characteristic
polynomial. While this algorithm is based on searching a space of
$2^{\left\lfloor n/2\right\rfloor }$ vectors, later algorithms in
\cite{Cattell96} and \cite{Cho07} improve these results. The first
solves a quadratic congruence on subpolynomials using Euclid's algorithm
in a finite field, the second uses a revised Lanczos tridiagonalization
method in GF(2) to find one of two possible CA for the given polynomial.

These results are only for null-boundary linear CA. Bardell in \cite{Bardell90}
shows that the outputs of linear CA and LFSRs are identical when a
phase shift between output bit sequences is accounted for. Bardell
also conjectures that no \emph{cyclic}-boundary linear CA has maximal
period. This conjecture is proved by Nandi in \cite{Nandi96}, who
reported that the characteristic polynomial of any transition matrix
using cyclic boundary conditions is factorizable, and so cannot be
primitive. Nandi also claims that periodic boundary linear CA provide
better statistical randomness than null boundary, due to the fixed
0s at the ends. This claim is supported by \cite{Shin09}.

Nandi, et. al. examine the group behavior of hybrid CAs over rules
51, 153, and 195 with null boundary conditions in \cite{Nandi94}
and show these rules lead to CA which are even permutations. By combining
several rule sets in a programmable CA, they create transformations
which generate an alternating group of even permutations. These transformations
then become primitives on which they base block and stream ciphers.These
cryptosystems were broken in \cite{Blackburn97}, which showed that
the groups formed are actually a subset of the affine group, not the
alternating group, of degree $n$ and therefore are easily recreated
with sufficient plaintext/ciphertext pairs. This was improved in \cite{Mihaljev97}
using ciphertext only.

In the same year, Sipper and Tomassini gave a genetic algorithm approach
to evolving a single ``good'' CA by using an entropy metric and
introducing mutations of rule changes randomly. They looked at 3-neighbor,
two-state, cyclic hybrid CA with $n=50$ over 300 random initial states
run for 4096 steps. The entropy of each cell is computed over time,
and with probability 0.001, cells would mix rules with a neighbor
by swapping the neighbor rule's output value assigned to certain input
combinations. Two resulting CA (a mixture of rules 165, 90, and 150
in one case and rules 165 and 225 in the other) were compared to a
uniform rule 30 CA and a hybrid 150/90 CA over 4 statistical tests,
showing favorable results.

The approach of evolving CA gave rise to a series of papers. See \cite{Guan03,Tomassini99,Tomassini01,Seredynski04}.
Most of these end up focusing on the main 4 linear rules, 150, 105,
90, and 165.

In \cite{Wolfram86a}, Wolfram proposed a cryptosystem based on his
rule 30 temporal sequence random generator from \cite{Wolfram86b}
as a key stream with which the plaintext could be XORed. This scheme
was broken by Meier and Staffelbach in \cite{Meier91} %
. The authors first showed that the temporal sequence is not hard
to recover in the case of known plaintext attacks. Then, given this
temporal sequence, they showed that the left half of the CA's computational
history was uniquely determined if the right-adjacent sequence could
be guessed due to the left-toggle property of rule 30. Further, the
right-adjacent sequence could be determined by guessing the right
half of the seed and running the CA forward for $n/2$ time steps.
They also showed that not all right halves of the seed are equiprobable,
so far fewer than $n/2$ guesses are required. For $n=300$, for instance,
a probabilistic algorithm requires 18.1 bits of entropy to recover
the seed with a probability of 0.5. This algorithm remains a seminal
one in cryptanalysis of CA cryptosystems, and one which we will seek
to improve.

In \cite{Koc97}, Koç and Apohan investigate a claim made by Wolfram
that recovering a seed value given a sequence of states over an $n$-cell
automaton using rule 30 was NP-Complete. Koç and Apohan present an
inversion algorithm which finds the best affine approximation of the
transition function and then solves an $n$-variable Boolean linear
equation to get a good approximation $S^{*}$of $S^{(t-1)}$. It then
checks the affine approximation by using $S^{*}$ to re-compute $S^{(t)}$.
If there are errors, the algorithm resorts to a search for combinations
that correct them. Using this algorithm, Koç and Apohan shows that
rule 30 can be inverted in time $O(n)$ for some seeds and $O(2^{n/2})$
worst case. We will improve this result.

Sen et. al. \cite{Sen02} present a (rare) fairly complete description
of an entire cryptosystem based on CAs. The algorithm is multistage,
with some key management and both linear and non-linear rules applied
to the plaintext. The system was broken by Bao in \cite{Bao03} using
only hundreds of chosen plaintexts and very little computation. Bao
presents an equivalent transform and searches a small space of one
of the parameters of the cryptosystem to decrypt any ciphertext with
probability 0.5.

A series of papers propose using programmable CAs, wherein the rule
vector $F$ applied at each time step can be controlled by external
circuitry. These schemes usually focus on 2 or 4 of the linear rules
(150, 105, 90, 165). This seems to do very well in statistical testing,
but makes little or no claims about use in cryptography. See \cite{GuanT04}.

Shin et. al. \cite{Shin09} analyzed the conditional probability distributions
of different combinations of binary operations and showed that only
XOR produces ``cryptographic'' PRGs, since all other combinations
are skewed. They show that 64-cell hybrid CA based on rules having
uniformly distributed outputs pass almost all tests from the Diehard
\cite{diehard} statistical test battery.

\section{\label{sec:AnalysisNonLinearCA}Analysis of Non-Linear CA}

When viewed from a computational complexity perspective, it's not
clear that a 2 state, 3 neighbor CA is capable of hardness at all.
Any resolution to this question would certainly depend on the rule
set of the CA. Many rules simply do not generate any complex output
patterns, and some degenerate to very small cycles or constant patterns
very quickly. The authors of \cite{Len03} conclude that no CA using
only uniform rules is suitable for use in cryptographic applications
by demonstrating the following: Of all uniform CA subjected to frequency,
serial, poker, gap, and auto-correlation statistical randomness tests
using the evolution of the center cell as a random stream, only 22
rules passed. These rules were then subjected to a linear complexity
test. The linear complexity of a sequence is defined as the length
of the shortest LFSR that produces the sequence. We know via the Berlekamp/Massey
algorithm \cite{Massey69} that an LFSR of length no more than $\ell/2$
can be synthesized for any sequence of length $\ell$. The linear
complexity test showed that only the following rules generate $\ell$-bit
sequences whose linear complexity approaches $\ell/2$ (the ideal):
30, 45, 75, 86, 89, 101, 106, 120, 135, 149, 169, 225. The authors
then observe that all of these rules are left- or right-toggle rules,
and adapt the Meier-Staffelbach algorithm to work on either side and
recover the initial state of the CA with no more than $2^{\left\lfloor n/2\right\rfloor }$
trials.

These results and the algorithms of Koç/Apohan \cite{Koc97} and Meier/Staffelbach
\cite{Meier91} for attacking rule 30 all seem to imply that $2^{n/2}$
bits are the most one would have to guess to know everything about
the CA's history one wanted. While this is still an exponential bound,
we'd like to know for certain that the security of the PRG is related
to the full seed length. Further, both algorithms often do much better
than worst case. This apparent weakness invites further investigation
into whether $2^{n/2}$ is the tightest upper bound that can be achieved.
We will focus our efforts on rule 30 due to the large body of literature
for this rule. The other rules listed above are in most cases simple
variations using negation of a term or reflection of the inputs, and
we would expect results against rule 30 to also apply there as well.
Unless otherwise noted, we are concerned with cyclic CA as they are
most commonly used in the literature.

We therefore examine the common structures using rule 30 in search
of techniques to improve the worst case bounds. Specifically, we examine
the case of a known state vector over $n$ cells for a given time
step (the problem addressed by Koç/Apohan) and the case of a known
temporal sequence from one of $n$ cell over for $n/2$ time steps
(addressed by Meier/Staffelbach). In both cases we investigate techniques
to recover the initial state.

\subsection{Improvements to Koç and Apohan}

The algorithm presented in \cite{Koc97} selects the best affine approximation
of the rule used in a CA and applies the inverse affine transform
to estimate the previous state. The success of this technique depends
primarily on the rule(s) used in the CA. Those CAs using affine rules
can be represented by a linear system in GF(2). In this case, the
system can be solved for the state vector at time $t-1$ if the entire
state vector is known at time $t$. This solution requires only $O(N^{2}n)$
operations for an $N$-neighbor CA with $n$ cells using Gaussian
elimination. To recover the state $t$ time steps ago, this process
must be repeated $t$ times. However, for affine rules, the inverse
transform can also be represented as a linear recurrence $S^{(t-1)}=MS^{(t)}+b$
in GF(2), where $M$ is an $n\times n$ Boolean matrix and the $1\times n$
offset vector $b$ models any negation operations in the rule of each
cell. Given some $S^{(t)}$, this recurrence can be composed $t$
times and solved to recover $s^{(0)}$ directly. The run time of inverting
uniform CA using known affine rules then is clearly polynomially bounded.

The affine approximation does, however, have difficulty with non-linear
rules. Applying the inverse of an affine approximation to $S^{(t)}$
gives an estimate $S^{*}$ which can differ from the true $S^{(t-1)}$.
If so, $S^{*}$ may not be a valid predecessor of $S^{(t)}$ under
the non-linear rule. The algorithm of Koç/Apohan resolves this by
searching through templates of these mismatches in the context of
the neighborhood in which they occur to find possible modifications
to the estimate vector $S^{*}$ which resolves the differences. It
is this search that pushes the bound on the running time up to $O(2^{n/2})$.

We observe that not all prior state vectors are equiprobable under
non-linear rules. Prior probabilities for 3-neighbor cells are given
in \tabref{rule30Priors}. Note that any sequence of length 3 has
only 4 possible prior states of length 5 under rule 30. Experimentation
shows this holds for other sequence lengths as well. Since rule 30
is left-toggle, choosing $s_{i}^{(t)}$ and $s_{i+1}^{(t)}$ uniquely
determine $s_{i-1}^{(t)}$ when $s_{i}^{(t+1)}$ is fixed by \propref{ToggleProp}.
But then $s_{i-1}^{(t)}$ and $s_{i}^{(t)}$ are known, so $s_{i-2}^{(t)}$
is determined if $s_{i-1}^{(t+1)}$ is known, and so on. Therefore,
the two right bits are sufficient to determine the $k+2$ predecessor
bits of any sequence of length $k$. This information can be used
to optimize the search for erroneous predecessor bits in the templates.
For instance, if the center value $s_{i}^{(t)}$ of a 3-cell neighborhood
is 1, there is a $\nicefrac{3}{4}$ chance that $s_{i-1}^{(t-1)}$
is 0. If the affine approximation does not immediately yield a unique
prior state, using such probabilities may inform the template search
and reduce the average-case search space of the algorithm.

\begin{table}
\begin{centering}
\begin{tabular}{|c|c|>{\centering}p{3.5cm}|}
\hline 
3-neighbor state & Possible 5-neighbor predecessors & \# of prior states with 1 in each position\tabularnewline
\hline 
\hline 
000 & (00000) (11101) (11110) (11111) & {[}3, 3, 3, 2, 2{]}\tabularnewline
\hline 
001 & (00001) (11010) (11011) (11100) & {[}3, 3, 1, 2, 2{]}\tabularnewline
\hline 
010 & (10101) (10110) (10111) (11000) & {[}4, 1, 3, 2, 2{]}\tabularnewline
\hline 
011 & (00010) (00011) (10100) (11001) & {[}2, 1, 1, 2, 2{]}\tabularnewline
\hline 
100 & (01101) (01110) (01111) (10000) & {[}1, 3, 3, 2, 2{]}\tabularnewline
\hline 
101 & (01010) (01011) (01100) (10001) & {[}1, 3, 1, 2, 2{]}\tabularnewline
\hline 
110 & (00101) (00110) (00111) (01000) & {[}0, 1, 3, 2, 2{]}\tabularnewline
\hline 
111 & (00100) (01001) (10010) (10011) & {[}2, 1, 1, 2, 2{]}\tabularnewline
\hline 
\end{tabular}
\par\end{centering}

\textsf{\footnotesize \caption{\label{tab:rule30Priors}\textsf{Prior states of 3 neighbors in uniform
rule 30 CA}}
}
\end{table}

More noteworthy is that rule 30 has certain patterns that always have
fixed bits in the prior state. Notice the patterns 010 and 110 centered
on cell $s_{i}^{(t)}$ have a fixed value in cell $s_{i-2}^{(t-1)}$
for all possible prior 5-neighbor states. The following proposition
shows why this must always be true.
\begin{prop}
\label{prop:1}Let $(abcde)\in\{0,1\}^{5}$ be the values in a 5-cell
neighborhood at time $t$ in a uniform rule 30 CA of $n\geq5$ cells.
Let $(xyz)\in\{0,1\}^{3}$ be the vales of the 3-cell neighborhood
resulting from evaluating the rule 30 function $f$ as $f(a,b,c),\: f(b,c,d)\: f(c,d,e)$
respectively. Then ($xyz=010)$ implies $a=1$ and $(xyz=110)$ implies
$a=0$.\end{prop}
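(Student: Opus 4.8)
The plan is to reduce the claim to a single Boolean identity governed by the nonlinear OR term of rule 30. Writing $f(x_1,x_2,x_3)=x_1\oplus(x_2+x_3)$, the three given outputs are $x=a\oplus(b+c)$, $y=b\oplus(c+d)$, and $z=c\oplus(d+e)$. Since rule 30 is left-toggle, \propref{ToggleProp} applied to the cell producing $x$ (equivalently, inverting the XOR directly) gives $a=x\oplus(b+c)$. Hence $a$ is completely determined once we know $x$ together with the single bit $b+c$: if $b+c=0$ then $a=x$, and if $b+c=1$ then $a=1\oplus x$.

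First I would observe that both target patterns share the suffix $y=1$, $z=0$, and differ only in the value of $x$ (namely $x=0$ for $010$ and $x=1$ for $110$). So the entire content of the proposition is the assertion that $y=1$ and $z=0$ force $b+c=1$. Granting this, the previous paragraph closes the argument at once: for $xyz=010$ we get $a=1\oplus 0=1$, and for $xyz=110$ we get $a=1\oplus 1=0$.

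The key step, then, is to prove that $y=1,\,z=0$ implies $b+c=1$, which I would handle by contradiction. Suppose instead $b=c=0$. Then $y=b\oplus(c+d)$ collapses to $y=d$, while $z=c\oplus(d+e)$ collapses to $z=d+e$. But $y=1$ forces $d=1$, whereas $z=0$ forces $d+e=0$ and in particular $d=0$, a contradiction. Therefore at least one of $b,c$ equals $1$, i.e. $b+c=1$, as needed.

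I expect the main obstacle to be conceptual rather than computational: because rule 30 is nonlinear, the relation $a=x\oplus(b+c)$ depends on the unknown interior bits $b$ and $c$, so one cannot read off $a$ from $x$ alone. The crux is recognizing that the observed constraints on $y$ and $z$ pin down the \emph{value} of the OR $b+c$ without revealing $b$ and $c$ individually; this rigidity of the leftmost predecessor bit is exactly the structural feature of rule 30 being exploited. A brute-force check of all $32$ choices of $(a,b,c,d,e)$ would also settle the statement, but the contradiction argument above is what explains \emph{why} it holds.
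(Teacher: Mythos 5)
Your proof is correct, and it organizes the argument differently from the paper's. The paper treats the two patterns $xyz=\mathtt{010}$ and $xyz=\mathtt{110}$ as separate cases: in each, it substitutes $c=d+e$ (from $z=0$) into the equations for $x$ and $y$ to eliminate $c$, obtaining $a=b+d+e$ (respectively $a=\overline{b+d+e}$) alongside $b=\overline{d+e}$, and then argues that $a$ cannot take the forbidden value. You instead factor out the shared suffix $y=1$, $z=0$ and prove a single lemma---that these constraints force $b+c=1$---via the contradiction $b=c=0\implies d=1$ (from $y$) and $d=0$ (from $z$); combined with the toggle inversion $a=x\oplus(b+c)$, both implications then drop out at once as $a=\overline{x}$. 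The underlying Boolean content coincides (the paper's closing observation that $b\neq d+e$ is your lemma in disguise, since $b+c=b+(d+e)$ and $b=\overline{d+e}$), but your decomposition buys two things: it dispatches both cases in one stroke rather than running two parallel substitution arguments, and it surfaces the slightly stronger uniform statement that $a=\overline{x}$ for \emph{any} pattern $x\mathtt{10}$---exactly the mechanism the paper exploits again in \propref{3}, where a cell value known to make the OR term equal $1$ forces the left neighbor to be the complement of the next output. The paper's case-by-case version, for its part, makes the explicit dependence of $a$ on $b,d,e$ visible, at the cost of some duplication.
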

\begin{proof}
First we show ($xyz=010)\implies(a=1)$. In rule 30, note the following
identities:

\begin{eqnarray}
x=0 & = & a\oplus(b+c)\implies a=b+c\label{eq:a}\\
y=1 & = & b\oplus(c+d)\implies b=\overline{c+d}\label{eq:b}\\
z=0 & = & c\oplus(d+e)\implies c=d+e\label{eq:c}
\end{eqnarray}
Then, by substituting (\ref{eq:c}) into (\ref{eq:a}), we have 
\begin{equation}
a=b+d+e
\end{equation}
Thus $a$ can only take the value 0 when $b=d=e=0$. But substituting
(\ref{eq:c}) into (\ref{eq:b}), we get

\begin{eqnarray*}
b & = & \overline{d+e+d}\\
 & = & \overline{d+e}
\end{eqnarray*}
So when $d=e=0,$ $b$ cannot be 0. Therefore, $a$ can never be 0
and must always be 1.

To see that $(xyz=110)\implies(a=0)$, 

\begin{eqnarray}
x=1 & = & a\oplus(b+c)\implies a=\overline{b+c}\label{eq:a2}\\
y=1 & = & b\oplus(c+d)\implies b=\overline{c+d}\label{eq:b2}\\
z=0 & = & c\oplus(d+e)\implies c=d+e\label{eq:c2}
\end{eqnarray}
 In particular, we note that (\ref{eq:a2}) is just the complement
of (\ref{eq:a}). For completeness, we can see

\[
a=\overline{b+d+e}
\]
 by substituting for c in (\ref{eq:a2}). But by (\ref{eq:b2}) and
(\ref{eq:c2}), $b\neq d+e$, so $a$ can never be 1.
\end{proof}
This information can be used to invert the entire state vector in
rule 30 anytime the pattern 010 occurs by using the left-toggle property
of the rule. We capture this in the following proposition.
\begin{prop}
\label{prop:2}Let $S^{(t)}\in\{0,1\}^{n}$ be the known state vector
in a cyclic boundary $n$-cell uniform rule 30 CA at time $t$. Suppose
$S_{i-1:i+1}^{(t)}=010$ for some $0\leq i<n$. Then $S^{(t-1)}$
is uniquely determined. \end{prop}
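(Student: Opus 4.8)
The plan is to show that every cell of $S^{(t-1)}$ is forced, so that the predecessor is unique. The engine of the argument is the left-toggle property of rule 30: by \propref{ToggleProp}, once we know the successor bit $s_j^{(t)}$ together with the predecessor bits $s_j^{(t-1)}$ and $s_{j+1}^{(t-1)}$, we recover the left neighbor as $s_{j-1}^{(t-1)}=f(s_j^{(t)},s_j^{(t-1)},s_{j+1}^{(t-1)})$. This is a second-order recurrence running leftward, so in general it must be seeded with two adjacent predecessor bits; a blind guess of such a pair yields up to four candidate rows (cf. \tabref{rule30Priors}). The whole point is to use the pattern $010$ to pin down two adjacent predecessor bits exactly, so that no guessing, and hence no ambiguity, remains.

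First I would extract a single forced bit. Applying \propref{1} to the window $S_{i-1:i+1}^{(t)}=010$ immediately gives $s_{i-2}^{(t-1)}=1$. The clever step, and the one I expect to be the crux, is to convert this lone bit into a determined adjacent pair. Consider the forward rule at cell $i-2$,
\[
s_{i-2}^{(t)}=s_{i-3}^{(t-1)}\oplus\bigl(s_{i-2}^{(t-1)}+s_{i-1}^{(t-1)}\bigr).
\]
Because $s_{i-2}^{(t-1)}=1$, the disjunction $s_{i-2}^{(t-1)}+s_{i-1}^{(t-1)}$ saturates to $1$ regardless of the still-unknown $s_{i-1}^{(t-1)}$, so the relation collapses to $s_{i-2}^{(t)}=s_{i-3}^{(t-1)}\oplus 1$, and therefore $s_{i-3}^{(t-1)}=\overline{s_{i-2}^{(t)}}$ is fixed by the known bit $s_{i-2}^{(t)}$. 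Thus the forced value $1$ at cell $i-2$ short-circuits the nonlinear term and hands us a second forced bit for free, giving the determined adjacent pair $\bigl(s_{i-3}^{(t-1)},\,s_{i-2}^{(t-1)}\bigr)$.

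With a genuine (not guessed) seed pair in hand, I would finish by propagation. Starting from $\bigl(s_{i-3}^{(t-1)},s_{i-2}^{(t-1)}\bigr)$, repeated application of \propref{ToggleProp} slides leftward: each step recovers one new predecessor bit $s_{j-1}^{(t-1)}$ from the known $s_j^{(t)}$ and the two most recently determined predecessor bits. Since the CA is cyclic, after $n-2$ such steps the slide wraps completely around the ring and assigns a forced value to every cell of $S^{(t-1)}$. Finally, the genuine predecessor of $S^{(t)}$ must satisfy \propref{1}, the rule at cell $i-2$, and every toggle relation used above; hence it coincides with this reconstruction, and $S^{(t-1)}$ is unique.

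A remark on hypotheses: the appeal to \propref{1} presumes $n\geq 5$, and the cyclic boundary is exactly what lets the single leftward sweep reach all $n$ cells; for the few degenerate small cases the claim is checked directly. The essential idea to isolate is that one predecessor bit known to be $1$ is enough to restart the otherwise second-order toggle recurrence, because it renders the $\textsf{OR}$ term constant and thereby produces a second forced bit without any case split.
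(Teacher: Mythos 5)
Your proof is correct and takes essentially the same route as the paper's: \propref{1} forces $s_{i-2}^{(t-1)}=1$, the saturated OR term then yields $s_{i-3}^{(t-1)}=\overline{s_{i-2}^{(t)}}$, and the left-toggle property propagates this adjacent pair leftward around the cyclic boundary to determine every remaining bit of $S^{(t-1)}$. Your added remarks (the implicit $n\geq5$ hypothesis inherited from \propref{1}, and the observation that the known bit restarts the second-order toggle recurrence by making the OR term constant) are clarifications the paper leaves unstated, but the argument itself is the same.
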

\begin{proof}
First, note that 
\[
s_{i-3}^{(t-1)}=s_{i-2}^{(t)}\oplus(s_{i-2}^{(t-1)}+s_{i-1}^{(t-1)})
\]
and that $s_{i-2}^{(t-1)}=1$ by \propref{1}. Then $s_{i-2}^{(t-1)}+s_{i-1}^{(t-1)}=1$
and so 
\[
s_{i-3}^{(t-1)}=s_{i-2}^{(t)}\oplus1=\overline{s_{i-2}^{(t)}}
\]
Since $s_{i-2}^{(t-1)}$ and $s_{i-3}^{(t-1)}$ are known, we can
then evaluate 
\[
s_{i-4}^{(t-1)}=s_{i-3}^{(t)}\oplus(s_{i-3}^{(t-1)}+s_{i-2}^{(t-1)})
\]
and likewise for $i-k$ where $k=1,\dots,n-i$ where $i-i-1$ evaluates
to $n$ because of the cyclic boundary. This allows the calculation
of all $s_{i}^{(t-1)}$ for $0\leq i<n$.
\end{proof}
Note that this technique does not work for the 110 pattern, since
knowing the middle bit $s_{i}^{(t-1)}$ in a 3-neighborhood when its
value is 0 does not determine $s_{i}^{(t-1)}+s_{i+1}^{(t-1)}$. However,
if we examine longer predecessor patterns we may be able to guess
two adjacent bits in the predecessor with high probability. Then we
can evaluate 
\[
s_{i-1}^{(t-1)}=s_{i}^{(t)}\oplus(s_{i}^{(t-1)}+s_{i+1}^{(t-1)})
\]
 and so on until we complete the previous state vector. Looking at
the 32 possible 5-neighbor blocks under rule 30, 11 have a fixed 1
position in their 7-neighbor predecessors, and another 16 reveal two
adjacent positions with probability $3/4$. Only 5 of the possible
values leave the probability of guessing two adjacent predecessor
bits at $1/2$. 

We observe that the analog of \propref{1} also holds for all left-
or right-toggle rules, and so each such rule can be inverted using
the techniques discussed above. \tabref{FixedPositions} shows the
patterns of interest for each of the rules identified. These results
suggest some general weakness in toggle rules that can be used to
find predecessors for any given state vector.

\begin{table}
\begin{centering}
\begin{tabular}{l>{\raggedright}p{1.7cm}>{\raggedright}p{4cm}>{\raggedright}p{3.5cm}}
\toprule 
{\footnotesize Rule} & {\footnotesize 3-neighbor state} & {\footnotesize Possible 5-neighbor predecessors} & {\footnotesize \# of prior states with 1 in each position}\tabularnewline
\midrule
\multirow{2}{*}{{\footnotesize 30}} & {\footnotesize 010} & {\footnotesize 10101 10110 10111 11000} & {\footnotesize {[}4, 1, 3, 2, 2{]}}\tabularnewline
 & {\footnotesize 110} & {\footnotesize 00101 00110 00111 01000} & {\footnotesize {[}0, 1, 3, 2, 2{]}}\tabularnewline
\cmidrule{2-4} 
\multirow{4}{*}{{\footnotesize 45}} & {\footnotesize 000} & {\footnotesize 11001 11100 11110 11111} & {\footnotesize {[}4, 4, 3, 2, 2{]}}\tabularnewline
 & {\footnotesize 001} & {\footnotesize 10010 10011 11000 11101} & {\footnotesize {[}4, 2, 1, 2, 2{]}}\tabularnewline
 & {\footnotesize 100} & {\footnotesize 01001 01100 01110 01111} & {\footnotesize {[}0, 4, 3, 2, 2{]}}\tabularnewline
 & {\footnotesize 101} & {\footnotesize 00010 00011 01000 01101} & {\footnotesize {[}0, 2, 1, 2, 2{]}}\tabularnewline
\cmidrule{2-4} 
\multirow{6}{*}{{\footnotesize 75}} & {\footnotesize 001} & {\footnotesize 01000 01001 01011 11110} & {\footnotesize {[}1, 4, 1, 2, 2{]}}\tabularnewline
 & {\footnotesize 010} & {\footnotesize 10010 10111 11100 11101} & {\footnotesize {[}4, 2, 3, 2, 2{]}}\tabularnewline
 & {\footnotesize 011} & {\footnotesize 10000 10001 10011 10110} & {\footnotesize {[}4, 0, 1, 2, 2{]}}\tabularnewline
 & {\footnotesize 101} & {\footnotesize 01110 11000 11001 11011} & {\footnotesize {[}3, 4, 1, 2, 2{]}}\tabularnewline
 & {\footnotesize 110} & {\footnotesize 00010 00111 01100 01101} & {\footnotesize {[}0, 2, 3, 2, 2{]}}\tabularnewline
 & {\footnotesize 111} & {\footnotesize 00000 00001 00011 00110} & {\footnotesize {[}0, 0, 1, 2, 2{]}}\tabularnewline
\cmidrule{2-4} 
\multirow{2}{*}{{\footnotesize 86}} & {\footnotesize 010} & {\footnotesize 00011 01101 10101 11101} & {\footnotesize {[}2, 2, 3, 1, 4{]}}\tabularnewline
 & {\footnotesize 011} & {\footnotesize 00010 01100 10100 11100} & {\footnotesize {[}2, 2, 3, 1, 0{]}}\tabularnewline
\cmidrule{2-4} 
\multirow{4}{*}{{\footnotesize 89}} & {\footnotesize 010} & {\footnotesize 00111 01001 10111 11101} & {\footnotesize {[}2, 2, 3, 2, 4{]}}\tabularnewline
 & {\footnotesize 011} & {\footnotesize 00110 01000 10110 11100} & {\footnotesize {[}2, 2, 3, 2, 0{]}}\tabularnewline
 & {\footnotesize 110} & {\footnotesize 00001 01101 10001 11001} & {\footnotesize {[}2, 2, 1, 0, 4{]}}\tabularnewline
 & {\footnotesize 111} & {\footnotesize 00000 01100 10000 11000} & {\footnotesize {[}2, 2, 1, 0, 0{]}}\tabularnewline
\cmidrule{2-4} 
\multirow{6}{*}{{\footnotesize 101}} & {\footnotesize 000} & {\footnotesize 00111 01111 10011 11111} & {\footnotesize {[}2, 2, 3, 4, 4{]}}\tabularnewline
 & {\footnotesize 001} & {\footnotesize 00110 01110 10010 11110} & {\footnotesize {[}2, 2, 3, 4, 0{]}}\tabularnewline
 & {\footnotesize 010} & {\footnotesize 00100 01100 10001 11100} & {\footnotesize {[}2, 2, 3, 0, 1{]}}\tabularnewline
 & {\footnotesize 011} & {\footnotesize 00101 01101 10000 11101} & {\footnotesize {[}2, 2, 3, 0, 3{]}}\tabularnewline
 & {\footnotesize 100} & {\footnotesize 00011 01001 10111 11001} & {\footnotesize {[}2, 2, 1, 2, 4{]}}\tabularnewline
 & {\footnotesize 101} & {\footnotesize 00010 01000 10110 11000} & {\footnotesize {[}2, 2, 1, 2, 0{]}}\tabularnewline
\cmidrule{2-4} 
\multirow{2}{*}{{\footnotesize 106}} & {\footnotesize 010} & {\footnotesize 00010 01010 10010 11100} & {\footnotesize {[}2, 2, 1, 3, 0{]}}\tabularnewline
 & {\footnotesize 011} & {\footnotesize 00011 01011 10011 11101} & {\footnotesize {[}2, 2, 1, 3, 4{]}}\tabularnewline
\cmidrule{2-4} 
\multirow{2}{*}{{\footnotesize 120}} & {\footnotesize 010} & {\footnotesize 00111 01000 01001 01010} & {\footnotesize {[}0, 3, 1, 2, 2{]}}\tabularnewline
 & {\footnotesize 110} & {\footnotesize 10111 11000 11001 11010} & {\footnotesize {[}4, 3, 1, 2, 2{]}}\tabularnewline
\cmidrule{2-4} 
\multirow{2}{*}{{\footnotesize 135}} & {\footnotesize 001} & {\footnotesize 10111 11000 11001 11010} & {\footnotesize {[}4, 3, 1, 2, 2{]}}\tabularnewline
 & {\footnotesize 101} & {\footnotesize 00111 01000 01001 01010} & {\footnotesize {[}0, 3, 1, 2, 2{]}}\tabularnewline
\cmidrule{2-4} 
\multirow{2}{*}{{\footnotesize 149}} & {\footnotesize 100} & {\footnotesize 00011 01011 10011 11101} & {\footnotesize {[}2, 2, 1, 3, 4{]}}\tabularnewline
 & {\footnotesize 101} & {\footnotesize 00010 01010 10010 11100} & {\footnotesize {[}2, 2, 1, 3, 0{]}}\tabularnewline
\cmidrule{2-4} 
\multirow{2}{*}{{\footnotesize 169}} & {\footnotesize 100} & {\footnotesize 00010 01100 10100 11100} & {\footnotesize {[}2, 2, 3, 1, 0{]}}\tabularnewline
 & {\footnotesize 101} & {\footnotesize 00011 01101 10101 11101} & {\footnotesize {[}2, 2, 3, 1, 4{]}}\tabularnewline
\cmidrule{2-4} 
\multirow{2}{*}{{\footnotesize 225}} & {\footnotesize 001} & {\footnotesize 00101 00110 00111 01000} & {\footnotesize {[}0, 1, 3, 2, 2{]}}\tabularnewline
 & {\footnotesize 101} & {\footnotesize 10101 10110 10111 11000} & {\footnotesize {[}4, 1, 3, 2, 2{]}}\tabularnewline
\cmidrule{2-4} 
\end{tabular}
\par\end{centering}

\textsf{\caption{\textsf{Fixed position patterns in toggle rules\label{tab:FixedPositions}}}
}
\end{table}

Rule 30 is known to have some garden of Eden state vectors and is
not strictly injective \cite{amoroso72,McIntosh91,Tyler}. In practice,
those $S\in\{0,1\}^{n}$ for which $F_{30}^{-1}(S)$ is not defined
appear only as initial conditions, so a partial inversion is sufficient
to recover a seed. Various authors \cite{Brown04,paek03} have addressed
a means to realize a partial inversion once a rule 30 CA is run past
time step 1, but none give a specific algorithm to find any possible
predecessor in a uniform cyclic rule 30 CA. The observation that toggle
rules require only two bits to make the full preceding state vector
known leads to such a partial inversion algorithm that runs in time
$\Theta(n)$. While the general mechanism may be known, there seemed
to still be some question as recently as 2011 (see chapter 10 by Wolfram
in \cite{Zenil11}). We give a formal statement and proof of the algorithm
here for completeness.
\begin{prop}
Let $S\in\{\mathtt{0},\mathtt{1}\}^{n}$ and let $f_{i}$ be a function
either of the form $f_{i}(S_{i-1:i+1})=S_{i-1}\oplus g(S_{i},S_{i+1})$
or $f_{i}(S_{i-1:i+1})=g(S_{i-1},S_{i})\oplus S_{i+1}$ for some $g:\{0,1\}^{2}\rightarrow\{0,1\}$.
Let $r$ be the rule number of $f_{i}$ and let $F_{r}$ be the rule
vector of a cyclic boundary $n$-cell CA having state vector $S^{(t)}=S$
at time $t$. Then all valid $S^{(t-1)}$ if any exist are computable
in time $\Theta(n)$. \end{prop}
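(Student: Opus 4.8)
The plan is to turn \propref{ToggleProp} into a deterministic right-to-left (or left-to-right) sweep. Re-indexing that proposition so its two consecutive times are $t-1$ and $t$, a left-toggle rule $f$ satisfies
\[
s_{j-1}^{(t-1)} = f\bigl(s_j^{(t)},\, s_j^{(t-1)},\, s_{j+1}^{(t-1)}\bigr)
\]
for every cell $j$, where $s_j^{(t)}$ is a known entry of $S$. Thus $s_{j-1}^{(t-1)}$ is a function of the two predecessor bits $s_j^{(t-1)}$ and $s_{j+1}^{(t-1)}$ alone, so fixing any two cyclically adjacent predecessor bits lets us walk leftward around the ring, pinning down one new predecessor bit at each cell.

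First I would fix a position $i$ and guess the pair $(s_i^{(t-1)}, s_{i+1}^{(t-1)}) \in \{0,1\}^2$, of which there are four. For each guess I apply the identity at cell $i$ to get $s_{i-1}^{(t-1)}$, then at cell $i-1$ to get $s_{i-2}^{(t-1)}$, and so on with indices taken mod $n$, filling a complete candidate vector in $O(n)$ steps; because each step is one rule evaluation, the candidate is uniquely determined by the guessed pair. Then I would recompute $F_r$ on the candidate and accept it exactly when the result equals $S$, a further $O(n)$ test.

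Correctness rests on a short counting argument rather than any heavy computation. The two guessed bits consume two of the $n$ cyclic equations, and the sweep consumes the other $n-2$ (cell $j$'s equation pins down bit $j-1$); the two equations left over are precisely the requirement that the ring close consistently, which the final $F_r$-check certifies. Hence every accepted candidate is a genuine predecessor. Conversely, any valid $S^{(t-1)}$ assigns some values to positions $i$ and $i+1$, so it agrees with exactly one of the four guesses and is reproduced by that branch's deterministic sweep. The four branches therefore enumerate the entire predecessor set, which has size at most four, and report nothing when $S$ is a garden-of-Eden state. The right-toggle case is the mirror image: fix $(s_{i-1}^{(t-1)}, s_i^{(t-1)})$ and sweep rightward using the right-toggle form of \propref{ToggleProp}.

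The main obstacle is getting this soundness-versus-completeness balance exactly right: I must show the sweep leaves no degrees of freedom (so at most four candidates arise) while still leaving enough unused equations that the closing check cannot be fooled. Once the bookkeeping confirms both, the running time is immediate---reading the $n$-bit input and writing an $n$-bit output already force $\Omega(n)$, and the four branches cost $O(n)$ apiece---giving the stated $\Theta(n)$ bound.
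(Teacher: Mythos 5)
Your proposal is correct and is essentially the paper's own proof: \algref{InvertToggleRule} likewise guesses the two ``seed'' predecessor bits (four branches via $\eta$), performs the same deterministic toggle-property sweep $R_{i}\gets S_{i}\oplus g(R_{i+1},R_{i+2})$ around the ring, and accepts exactly the candidates that close consistently. The only cosmetic difference is your verification step---you recompute $F_{r}$ on the full candidate in $O(n)$, whereas the paper checks the two leftover wrap-around equations in constant time via the condition $R_{1}=R_{n+1}\wedge g(R_{1},R_{2})=g(R_{n+1},R_{n+2})$---which does not affect the $\Theta(n)$ bound or the soundness/completeness argument.
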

\begin{proof}
We first consider the case of the left-toggle function $f_{i}(S_{i-1:i+1})=S_{i-1}\oplus g(S_{i},S_{i+1})$.
For $S^{(t-1)}$ to exist, it must satisfy $S_{i}^{(t)}=S_{i-1}^{(t-1)}\oplus g(S_{i}^{(t-1)},S_{i+1}^{(t-1)})$
for $2\leq i<n$ and $S_{1}^{(t)}=S_{n}^{(t-1)}\oplus g(S_{1}^{(t-1)},S_{2}^{(t-1)})$
as well as $S_{n}^{(t)}=S_{n-1}^{(t-1)}\oplus g(S_{n}^{(t-1)},S_{1}^{(t-1)})$
due to boundary conditions. Consider \algref{InvertToggleRule}, \nameref{alg:InvertToggleRule}. 

\begin{algorithm}
\textbf{Input:} $S$

\textbf{Output:} $\mathcal{P}=\{P\in\{\mathtt{0},\mathtt{1}\}^{n}\:|\: F_{r}(P)=S\}$
\begin{algor}[1]
\item [{.}] $\mathcal{P}\gets\emptyset$
\item [{.}] $R\gets\mathtt{0}^{n+2}$
\item [{for}] $\eta\in\{(\mathtt{0},\mathtt{0}),(\mathtt{0},\mathtt{1}),(\mathtt{1},\mathtt{0}),(\mathtt{1},\mathtt{1})\}$

\begin{algor}[1]
\item [{.}] $R_{n+1,n+2}\gets\eta$
\item [{for}] $i=n$ downto $1$

\begin{algor}[1]
\item [{.}] $R_{i}\gets S_{i}\oplus g(R_{i+1},R_{i+2})$
\end{algor}
\item [{endfor}]~
\item [{if}] $R_{1}=R_{n+1}\wedge g(R_{1},R_{2})=g(R_{n+1},R_{n+2})$

\begin{algor}[1]
\item [{.}] $\mathcal{P}\gets\mathcal{P}\cup R_{2:n+1}$
\end{algor}
\item [{endif}]~
\end{algor}
\item [{endfor}]~
\item [{.}] \textbf{return} $\mathcal{P}$
\end{algor}
\caption{\label{alg:InvertToggleRule}\textsc{InvertToggleRule}}
\end{algorithm}

\nameref{alg:InvertToggleRule} runs in time $\Theta(n)$, as controlled
by line 6 which executes $n$ times for each of $4$ possible values
of $\eta$. We must show that $\mathcal{P}$ is exactly the set of
possible prior state vectors leading to $S$ under $F_{r}$. 

Suppose $\mathcal{P}\neq\emptyset$ and let $P\in\mathcal{P}$. Notice
that $P$ is the inner $n$ bits of the full $R$ computed by line
6, and so $P_{1}=R_{2}$ and $P_{n}=R_{n+1}$. We show that for all
$P$, $F_{r}(P)=S$. It is clear that for any $P\in\mathcal{P}$,
$S_{i}=P_{i-1}\oplus g(P_{i},P_{i+1})$ holds for $2\leq i<n$ since
$R_{i}=P_{i-1}$ is assigned to be $S_{i}\oplus g(P_{i},P_{i+1})$
based on the left-toggle property of $f$. That leaves us to verify
just the boundaries. Since $P\in\mathcal{P}$, the condition on line
7 must hold, and so we know that $R_{n+1}=R_{1}$. From line 6, we
have $R_{1}=S_{1}\oplus g(R_{2},R_{3})$, so $R_{n+1}=S_{1}\oplus g(R_{2},R_{3})$,
and therefore $P_{n}=S_{1}\oplus g(R_{2},R_{3})$. XORing $g(R_{2},R_{3})$
to both sides gives $S_{1}=P_{n}\oplus g(P_{1},P_{2})$, which is
the definition of $f_{1}$ with cyclic boundaries. Also from line
6, we have $R_{n}=S_{n}\oplus g(R_{n+1},R_{n+2})$. By line 6 we know
$R_{n}=S_{n}\oplus g(R_{1},R_{2})$ and by line 7, this becomes $P_{n-1}=S_{n}\oplus g(P_{n},P_{1})$
(i.e. the left toggle of $f_{n}$) since $R_{1}=R_{n+1}=P_{n}$. Therefore,
$S_{n}=P_{n-1}\oplus g(P_{n},P_{1})$, the definition of $f_{n}$.
This proves $S_{i}=P_{i-1}\oplus g(P_{i},P_{i+1})$ for $1\leq i\leq n$,
so $S=F_{r}(P)$ for any $P\in\mathcal{P}$.

Suppose that there exists $Q\in\{\mathtt{0},\mathtt{1}\}^{n}$ such
that $F_{r}(Q)=S$. We can construct $Q^{\prime}=Q_{n}Q_{1}Q_{2}\ldots Q_{n}Q_{1}$
in which $Q_{1}^{\prime}=Q_{n+1}^{\prime}$ and $Q_{2}^{\prime}=Q_{n+2}^{\prime}$
so $g(Q_{1}^{\prime},Q_{2}^{\prime})=g(Q_{n+1}^{\prime},Q_{n+2}^{\prime})$.
Notice that, by the definition of $f_{i}$, $Q_{i-1}=S_{i}\oplus g(Q_{i},Q_{i+1})$
for $2\leq i<n$ and also that $Q_{n}=S_{1}\oplus g(Q_{1},Q_{2})$
and $Q_{n-1}=S_{n}\oplus g(Q_{n},Q_{1})$. Then by our construction,
$Q_{i}^{\prime}=S_{i}\oplus(Q_{i+1}^{\prime}+Q_{i+2}^{\prime})$ must
hold for all $1\leq i\leq n$. Thus $Q^{\prime}$ has exactly the
form of some $R$ generated on line 6 since any such $Q^{\prime}$
must end in one of the possible $\eta$ considered on line 3. Further,
$Q^{\prime}$ meets the criteria on line 7 of the algorithm and therefore
$Q$ must be in $\mathcal{P}$. 

It is easy to see that reversing the direction of the algorithm proves
the case of a right-toggle $f_{i}(S_{i-1:i+1})=g(S_{i-1},S_{i})\oplus S_{i+1}$. 
\end{proof}
\algref{InvertToggleRule} inverts all rules in \tabref{FixedPositions}
except rule 101 with only rules 135 and 149 requiring a slight modification
to handle negation of the toggle input. The algorithm is successful
whenever a given state vector has at least one predecessor. This would
seem to be the majority of state vectors: under rule 30 for $n=6$,
there are 12 state vectors with no predecessor, 41 states with exactly
1, 10 with 2, and 1 with 3. For $n=9$, these numbers are 57 with
no predecessor, 393 with 1, 61 with 2, and 1 with 3.

Another implication of this algorithm is that a state vector with
no predecessor discovered mid-stream in a pseudorandom generator must
be the result of re-seeding the generator or updating its entropy.
Leaking this information may be more damaging than knowledge of the
internal state itself: if entropy timing and values can be discerned,
they may become controllable.

Finally, this algorithm improves the running time to invert a rule
30 CA given by Koç/Apohan from $O(2^{n/2})$ to $\Theta(n)$. We are
not aware of any previous bound on the running time to invert other
toggle rules as presented above.

\subsection{\label{subb:ImproveMeierStaffel}Improvements to Meier and Staffelbach}

The Meier/Staffelbach algorithm can be used to recover the CA state
$S^{(0)}$ given a temporal sequence of at least $\left\lceil n/2\right\rceil $
time steps starting at time $t=0$ for the central cell $s_{c}$ where
$c=\left\lceil n/2\right\rceil $. The algorithm can be summarized
as:
\begin{itemize}
\item Guess $\left\lfloor n/2\right\rfloor $ bits for cells $S_{c+1:n}^{(0)}$
for time $t=0.$
\item Evaluate the cells $S_{c+1:n-t}^{(t)}$ for $1\leq t<c$ time steps.
Each time step, one fewer cell on the right end can be computed since
the boundary neighbor value is not known. This leaves only the central
cell known at time $\left\lceil n/2\right\rceil $. Plotted as a two-dimensional
chart over time, the computed cells now form a triangle between the
central cell's values and the right half of the initial state.
\item Using the computed right-adjacent sequence $S_{c+1}^{(0)},\ldots,S_{c+1}^{(\left\lfloor n/2\right\rfloor )}$,
solve the left triangle from $t=\left\lfloor n/2\right\rfloor $ backward
up to $t=0$ to complete the seed.
\end{itemize}
\begin{table}
\begin{centering}
\begin{tabular}{|c|c|c|c|c|c|c|c|}
\hline 
\emph{t} & \emph{c}-3 & \emph{c}-2 & \emph{c}-1 & \emph{\ \,c\,\ } & \emph{c}+1 & \emph{c}+2 & \emph{c}+3\tabularnewline
\hline 
\hline 
0 & \texttt{\textcolor{red}{1}} & \texttt{\textcolor{red}{1}} & \texttt{\textcolor{red}{0}} & \texttt{0} & \texttt{\textcolor{green}{1}} & \texttt{\textcolor{green}{1}} & \texttt{\textcolor{green}{0}}\tabularnewline
\hline 
1 &  & \texttt{\textcolor{magenta}{0}} & \texttt{\textcolor{magenta}{1}} & \texttt{1} & \texttt{\textcolor{blue}{1}} & \texttt{\textcolor{blue}{0}} & \tabularnewline
\hline 
2 &  &  & \texttt{\textcolor{magenta}{1}} & \texttt{0} & \texttt{\textcolor{blue}{0}} &  & \tabularnewline
\hline 
3 &  &  &  & \texttt{1} &  &  & \tabularnewline
\end{tabular}
\par\end{centering}

\textsf{\caption{\label{tab:ExSolving}\textsf{Example of applying the Meier-Staffelbach
algorithm in rule 30. Step 1: The seed values on the right half (in
green) are guessed. Step 2: The right triangle (in blue) is computed.
Step 3: The left triangle (in magenta) is solved to complete the left
half of the seed (in red).}}
}
\end{table}

This algorithm is illustrated in Table \ref{tab:ExSolving}. Meier
and Staffelbach show for $n=300$, the center temporal sequence of
a uniform rule 30 CA requires about 18 bits of entropy to guess a
compatible seed. We would like to see if taking advantage of the observations
made above allows us to improve this result.

Meier and Staffelbach note in \cite{Meier91} that where the temporal
sequence is a sequence of \texttt{0}s, the right-adjacent sequence
must match $\mathtt{0}^{*}\mathtt{1}^{*}$. We make a related observation
on the left-adjacent sequence.
\begin{prop}
\label{prop:3}Any occurrence of \texttt{10} in the temporal sequence
starting at time $t$ in a uniform rule 30 CA must have a left-adjacent
value of \texttt{1} at $t$; any occurrence of \texttt{11} must have
a left-adjacent value of \texttt{0} at $t.$ \end{prop}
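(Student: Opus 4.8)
The plan is to let cell $i$ be the cell whose temporal sequence contains the stated occurrence, so that both patterns begin at time $t$ with $s_i^{(t)}=1$, and then to exploit the fact that this single bit already saturates the disjunction in rule 30. Recall $s_i^{(t+1)}=s_{i-1}^{(t)}\oplus(s_i^{(t)}+s_{i+1}^{(t)})$; the key observation is that $s_i^{(t)}=1$ forces $s_i^{(t)}+s_{i+1}^{(t)}=1$ regardless of the right neighbor, so the right-adjacent value is irrelevant and the transition collapses to a pure complement of the left neighbor.

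First I would substitute $s_i^{(t)}=1$ to obtain the identity $s_i^{(t+1)}=s_{i-1}^{(t)}\oplus 1=\overline{s_{i-1}^{(t)}}$, which handles both cases at once. For the pattern \texttt{10} the second bit gives $s_i^{(t+1)}=0$, hence $\overline{s_{i-1}^{(t)}}=0$ and $s_{i-1}^{(t)}=1$. For the pattern \texttt{11} the second bit gives $s_i^{(t+1)}=1$, hence $\overline{s_{i-1}^{(t)}}=1$ and $s_{i-1}^{(t)}=0$. Each implication is then just reading off the required left-neighbor value from the complement identity.

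There is no real obstacle here: the argument is a one-line algebraic simplification followed by a two-case split. The only point worth stating carefully is why the right neighbor drops out, namely the OR-saturation caused by $s_i^{(t)}=1$. I would also remark that this is a specialization of the left-toggle property of rule 30 recorded in \propref{ToggleProp}: pinning the center input to \texttt{1} turns the rule into an exact toggle of the left input, and the two claimed implications are simply the two possible outputs of that toggle.
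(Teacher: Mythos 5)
Your proposal is correct and follows essentially the same route as the paper's proof: both use the fact that $s_i^{(t)}=1$ saturates the OR term so that rule 30 reduces to complementing the left neighbor, then read off the two cases from the second bit of the pattern. The only cosmetic difference is that you state the identity in the forward direction $s_i^{(t+1)}=\overline{s_{i-1}^{(t)}}$ while the paper writes the rearranged form $s_{i-1}^{(t)}=\overline{s_i^{(t+1)}}$, which are equivalent.
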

\begin{proof}
Let $i$ be the index of the cell for which the temporal sequence
is known. By the definition of rule 30, 
\[
s_{i-1}^{(t)}=s_{i}^{(t+1)}\oplus(s_{i}^{(t)}+s_{i+1}^{(t)})
\]
and we know that $s_{i}^{(t)}=1$. Then $s_{i}^{(t)}+s_{i+1}^{(t)}=1$
and so 
\[
s_{i-1}^{(t)}=s_{i}^{(t+1)}\oplus1=\overline{s_{i}^{(t+1)}}.
\]
 When \texttt{10} occurs in the temporal sequence at cell $i$, $s_{i}^{(t+1)}=\mathtt{0}$
and so the left-adjacent $s_{i-1}^{(t)}$ must be \texttt{1}. When
\texttt{11} occurs, $s_{i-1}^{(t)}$ must be \texttt{0}.
\end{proof}
Further, if two such occurrences are temporally adjacent (as must
always be the case when \texttt{11} appears since the third bit is
either \texttt{1} or \texttt{0}) starting at time $t$ for a sequence
in cell $i$, then two left-adjacent cells are known and can be solved
backwards to produce the value of $s_{i-2}^{(t)}$. Table \ref{tab:SolvedNeighbors}
gives an example temporal sequence at cell $i$ and the solved values
in red. In addition, the known predecessor patterns can also be used
to solve even more of the CA history. By using the prior states from
\tabref{rule30Priors}, cells in green have been filled in as well.

\begin{center}
\begin{table}
\begin{centering}
\begin{tabular}{|c|c|c|c|c|c|}
\hline 
\emph{t} & \emph{i}-4 & \emph{i}-3 & \emph{i}-2 & \emph{i}-1 & \emph{i}\tabularnewline
\hline 
\hline 
0 &  &  &  &  & \texttt{0}\tabularnewline
\hline 
1 &  &  &  & \texttt{\textcolor{red}{1}} & \texttt{1}\tabularnewline
\hline 
2 & \texttt{\textcolor{green}{0}} &  &  &  & \texttt{0}\tabularnewline
\hline 
3 &  & \texttt{\textcolor{red}{1}} & \texttt{\textcolor{red}{1}} & \texttt{\textcolor{red}{0}} & \texttt{1}\tabularnewline
\hline 
4 &  &  & \texttt{\textcolor{red}{0}} & \texttt{\textcolor{red}{0}} & \texttt{1}\tabularnewline
\hline 
5 &  &  &  & \texttt{\textcolor{red}{1}} & \texttt{1}\tabularnewline
\hline 
6 &  &  &  &  & \texttt{0}\tabularnewline
\hline 
7 &  &  &  &  & \texttt{0}\tabularnewline
\hline 
8 &  &  &  &  & \texttt{1}\tabularnewline
\hline 
\end{tabular}
\par\end{centering}

\textsf{\caption{\label{tab:SolvedNeighbors}\textsf{Example of solved neighbors in
rule 30}}
}
\end{table}

\par\end{center}

There are a couple of applications of these observations to improving
the Meier/Staffelbach algorithm. First, let $\sigma$ be the temporal
sequence of length $n$ from cell $i$ in a rule 30 CA beginning at
$t=0$ and let $j$ be the number of 1s in $\sigma$. Then no more
than $n-j-1$ random bits, or coins, would be necessary to chose those
unknown left neighbors that would allow us to solve the entire left
triangle back to $t=0$ and thus recover the entire seed. If $\sigma$
is evenly distributed, then we would expect to need $n/2$ coins.
This is comparable to the Meier-Staffelbach algorithm. If, however,
we spend those coins to fill in missing values of $s_{i-1}$ from
the bottom up, we may reach a point were examining predecessors of
varying widths in different positions fills in prior time steps deterministically.
This leads us to the following algorithm:
\begin{enumerate}
\item Rotate the CA cells to the right, placing the known temporal sequence
on the right edge, leaving only a left triangle of size $n$ to solve. 
\item Repeat until a full state vector is known or no changes are possible:

\begin{enumerate}
\item %
Recursively apply \propref{3} to fill in left-adjacent cells where
$\sigma_{i}=1$.
\item Recursively apply \propref{2} to fill in fixed predecessor values.
\item Run the CA forward to fill in any cell for which all three inputs
are known.
\item For any $s_{i-1}^{(t)}=1$ and $s_{i}^{(t+1)}=1$, set $s_{i}^{(t)}=s_{i+1}^{(t)}=0$,
accounting for boundaries.
\end{enumerate}
\item If the full state vector is known:

\begin{enumerate}
\item Run the CA forward to fill in unknown cells. Check all known cells
for correctness. 
\item If discrepancies arise: 

\begin{enumerate}
\item Backtrack to revisit random bits chosen in step 4 and flip them.
\end{enumerate}
\item Otherwise, use \algref{InvertToggleRule} to invert the CA back to
$t=0$.
\end{enumerate}
\item Otherwise, if no changes are possible, choose a random bit for the
bottom-most empty cell adjacent to the left edge of known cell values.
\item Return to step 2.
\end{enumerate}
Pencil and paper experimentation show this algorithm to hold promise,
but a more complete effort is required to determine limits on the
number of random bit selections needed.

Table \ref{tab:SolvedWithPreds} illustrates the use of coins (in
blue) and predecessor information to complete the previous example.
Once the state contains a 010 string, all predecessors to the left
of that string are known for as long and the successor is known to
the left. This means that coins are only required where the temporal
sequence cannot fill in those positions to the right until the boundary
conditions are known. Past that point, these holes can also be filled
in by solving forward with knowledge of the right-adjacent values.
Cells colored in magenta show cells which can be determined once the
boundary conditions are known.

\begin{center}
\begin{table}
\begin{centering}
\begin{tabular}{|c|c|c|c|c|c|c|c|c|c|c|}
\hline 
\emph{t} & \emph{i}-8 & \emph{i}-7 & \emph{i}-6 & \emph{i}-5 & \emph{i}-4 & \emph{i}-3 & \emph{i}-2 & \emph{i}-1 & \emph{i} & \emph{i}-8\tabularnewline
\hline 
\hline 
0 & \texttt{\textcolor{green}{0}} & \texttt{\textcolor{magenta}{0}} & \texttt{\textcolor{magenta}{1}} & \texttt{\textcolor{magenta}{1}} & \texttt{\textcolor{magenta}{1}} & \texttt{\textcolor{magenta}{1}} & \texttt{\textcolor{magenta}{0}} & \texttt{\textcolor{magenta}{1}} & \texttt{0} & \texttt{\textcolor{green}{0}}\tabularnewline
\hline 
1 &  & \texttt{\textcolor{green}{1}} & \texttt{\textcolor{red}{1}} & \texttt{\textcolor{red}{0}} & \texttt{\textcolor{red}{0}} & \texttt{\textcolor{green}{0}} & \texttt{\textcolor{red}{0}} & \texttt{\textcolor{red}{1}} & \texttt{1} & \tabularnewline
\hline 
2 &  &  & \texttt{\textcolor{green}{0}} & \texttt{\textcolor{green}{1}} & \texttt{\textcolor{green}{0}} & \texttt{\textcolor{green}{0}} & \texttt{\textcolor{green}{1}} & \texttt{\textcolor{blue}{1}} & \texttt{0} & \tabularnewline
\hline 
3 &  &  &  & \texttt{\textcolor{red}{1}} & \texttt{\textcolor{red}{1}} & \texttt{\textcolor{red}{1}} & \texttt{\textcolor{red}{1}} & \texttt{\textcolor{red}{0}} & \texttt{1} & \tabularnewline
\hline 
4 &  &  &  &  & \texttt{\textcolor{red}{0}} & \texttt{\textcolor{red}{0}} & \texttt{\textcolor{red}{0}} & \texttt{\textcolor{red}{0}} & \texttt{1} & \tabularnewline
\hline 
5 &  &  &  &  &  & \texttt{\textcolor{red}{0}} & \texttt{\textcolor{red}{1}} & \texttt{\textcolor{red}{1}} & \texttt{1} & \tabularnewline
\hline 
6 &  &  &  &  &  &  & \texttt{\textcolor{red}{1}} & \texttt{\textcolor{blue}{1}} & \texttt{0} & \tabularnewline
\hline 
7 &  &  &  &  &  &  &  & \texttt{\textcolor{blue}{0}} & \texttt{0} & \tabularnewline
\hline 
8 &  &  &  &  &  &  &  &  & \texttt{1} & \tabularnewline
\hline 
\end{tabular}
\par\end{centering}

\textsf{\caption{\textsf{\footnotesize \label{tab:SolvedWithPreds}}\textsf{Example
of solved neighbors in rule 30. In this example, the sequence 111101
in $s^{(3)}$ with 0 at $s_{i}^{(2)}$ doesn't quite determine that
$s^{(2)}$ so spending a coin on $s_{i-1}^{(2)}$ is still required.
In other cases, a 010 may appear in $s^{(2)}$, saving the coin. All
of $s^{(1)}$ is determined at the right edge and by $s^{(2)}$.)
Since $s^{(1)}$ begins with 110, $s_{i-8}^{(0)}=0$. Suppose $n=9$;
then $s_{i}^{(0)}$ is determined and so on to the left.}}
}
\end{table}

\par\end{center}

\section{Analysis of Linear and Affine CAs\label{sec:Analysis-of-Linear-CAs}}

Over and again, the literature shows interesting results around rules
90, 105, 150, and 165 \cite{Hortensius89,Guan03,GuanT04,Nandi94,Sipper96,Seredynski04,Tomassini01,Tomassini99}.
A cryptanalyst with a firm command of constructions based on these
rules stands to gain good advantage over the majority of random number
generators and cryptosystems based on these CAs. Therefore, we would
like to understand the extent to which we can apply the same techniques
to the case of linear and affine CAs based on these four rules. 

There are some obvious challenges to using linear CA for cryptography.
As previously mentioned, knowledge of the complete state vector in
a linear CA makes its entire history solvable---simply solve the linear
equations backwards for the number of time steps desired. If the rule
vector is known, using the full state vector seems unwise. Using only
a temporal sequence would keep knowledge of the full state vector
secret, but in linear CA, Meier/Staffelbach approach seems only to
get easier as we now have full linearity in both directions.

We will focus on the approach taken by Tomassini and Perrenoud \cite{Tomassini01}
to deal with these problems. Their scheme is as follows: Select via
the key one of these four rules (details of how to do this are not
given) as well as the initial state for each of $n$ cells in a cyclic
boundary CA. Use the central temporal sequence as a random stream
and XOR it with the plaintext. The key space would be $4^{n}\times2^{n}=2^{3n}$
making the key $3n$ bits with suggested values of $n$ around 100
(for 2001 compute power). It is claimed that $2^{(5n-9)/2}$ guesses
over rules and values would be necessary to solve the sequence backwards
to find a single, unique rule set and seed that produces that sequence.
This is one bound we seek to improve.

\subsection{Analysis of CAs with Symmetric Rule Sets}

Much of the key space proposed in \cite{Tomassini01} will result
in CA whose rule sets are symmetric about a center cell. Recall that
a CA is symmetric if rule $f_{i}=f_{n-i+1}$ for all $1\leq i\leq n$.
Experiments performed on these configurations show very low periods,
making them very inefficient. We capture this observation in the following
conjecture.
\begin{conjecture}
\label{conj:n/2-Seeds}Let $n$ be an odd number and $\Sigma$ be
an $n$-cell symmetric hybrid linear CA over rules $R={90,105,150,165}$
with cyclic boundaries. There are at most $2^{\left\lfloor n/2\right\rfloor }$
initial states of $\Sigma$ that give the same temporal sequence of
length $\ell\geq\left\lceil n/2\right\rceil $ at cell $\left\lceil n/2\right\rceil $.
\end{conjecture}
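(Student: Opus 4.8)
The plan is to convert the counting problem into a rank computation over GF(2) and then use the symmetry of the rule set to pin that rank down. First I would fix the affine representation of the dynamics. Since rules 90, 150 and their complements 105, 165 are all affine, the update acts as $S^{(t+1)} = M\,S^{(t)} + b$ for a fixed $n\times n$ matrix $M$ over GF(2) and an offset $b$ encoding the constant terms of rules 105 and 165. Writing $c=\lceil n/2\rceil$ for the center cell and $e_c$ for its indicator vector, the observed value at time $t$ is $s_c^{(t)} = e_c^{\top}M^{t}S^{(0)} + d_t$ for constants $d_t$ depending only on $M$ and $b$. Hence the map from a seed to its length-$\ell$ center sequence is affine, and the set of seeds producing a fixed achievable sequence is a coset of $\ker O$, where $O$ is the observability matrix with rows $e_c^{\top}, e_c^{\top}M,\dots, e_c^{\top}M^{\ell-1}$. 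The number of colliding seeds is therefore $2^{\,n-\operatorname{rank}O}$, so the statement is equivalent to the bound $\operatorname{rank}O \ge \lceil n/2\rceil$.

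Next I would bring in the symmetry. Let $J$ be the reversal permutation matrix ($J_{ij}=1$ iff $i+j=n+1$). Each of the four admissible rules is left--right symmetric, so the hypothesis $f_i=f_{n-i+1}$ forces $MJ=JM$; moreover $c$ is the unique index fixed by $J$, so $Je_c=e_c$. Consequently every row of $O$ is $J$-invariant, since $e_c^{\top}M^{t}J = e_c^{\top}JM^{t} = e_c^{\top}M^{t}$. Thus the row space of $O$ lies in the symmetric subspace $V_{+}=\{v : Jv=v\}$, which for odd $n$ has dimension $\lceil n/2\rceil$ (one coordinate for the center and one for each of the $(n-1)/2$ mirror pairs). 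This already gives $\operatorname{rank}O \le \lceil n/2\rceil$ and shows that the $\lfloor n/2\rfloor$ antisymmetric directions are invisible to the center cell; so at least $2^{\lfloor n/2\rfloor}$ colliding seeds are unavoidable, and the conjectured bound, if correct, is exactly tight. The \emph{remaining and essential} direction is the reverse inequality $\operatorname{rank}O \ge \lceil n/2\rceil$, i.e. that the center cell observes \emph{all} of $V_{+}$.

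I would reformulate that direction by folding. Because $M$ commutes with $J$ it restricts to a linear map $M_{+}$ on $V_{+}$, and identifying cell $c+j$ with $c-j$ realizes $M_{+}$ as the transition matrix of a linear CA on the $\lceil n/2\rceil$ cells $1,\dots,c$ with a reflecting boundary at $c$, whose end-cell sequence is precisely the observed sequence. Full rank of $O$ is then exactly observability of the pair $(M_{+},e_c^{\top})$, which (using $\ell\ge\lceil n/2\rceil\ge\deg$ of the minimal polynomial of $M_{+}$) amounts to $e_c$ being a cyclic covector for $M_{+}$, equivalently to no nonzero $M_{+}$-invariant subspace lying inside $\ker e_c^{\top}$. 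The affine offset does not interfere, since $b$ only translates the sequence by the constants $d_t$ and leaves $\ker O$ unchanged; so rules 105 and 165 may be handled exactly as their linear cousins 150 and 90.

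The hard part will be establishing this observability \emph{uniformly} over every symmetric hybrid assignment of the four rules. For a single uniform rule one can write down the characteristic polynomial of $M_{+}$ and check it is square-free, relating it to the factorization phenomenon for cyclic linear CA noted by Nandi; but for arbitrary symmetric interleavings the folded matrix $M_{+}$ could in principle acquire repeated invariant factors that would hide a nonzero vector of $V_{+}$ inside $\ker O$ and so break the bound. I expect the argument to hinge on showing that the tridiagonal-plus-reflection structure of $M_{+}$ forbids such degeneracy---for instance by exhibiting, for each symmetric rule pattern, an explicit combination of the first $\lceil n/2\rceil$ observations that recovers the center bit and then each mirror-pair sum in turn, an inductive peeling from the center outward reminiscent of the toggle substitution in \propref{ToggleProp}. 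Proving that this peeling never stalls for any admissible pattern is precisely the step the experiments support but that has so far resisted a general argument, which is why the statement is offered as a conjecture rather than a theorem.
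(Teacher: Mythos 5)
Your reduction is correct, and sharper than anything in the paper, up to the point where you fold. The statement is indeed equivalent to $\operatorname{rank}O\geq\lceil n/2\rceil$; $MJ=JM$ holds for every symmetric assignment from $R$ (all four rules have $w_{i,1}=w_{i,3}=1$, and symmetry matches the diagonal); the rows of $O$ are $J$-invariant; and your conclusion that every achievable sequence has \emph{at least} $2^{\lfloor n/2\rfloor}$ preimages is a genuine addition, extending the paper's base-case proposition from $\ell=\lceil n/2\rceil$ to all $\ell$. Note for calibration that the paper does not prove this statement at all---it is posed as a conjecture, with an uncompleted sketch via periods and the identities $M^{p+1}=M$, $(\sum_{i}M^{i})b=\mathbf{0}$---so your observability framing is a genuinely different route, not a reconstruction of the paper's.

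The gap is a characteristic-2 error in the folding, and it matters. Over $\mathrm{GF}(2)$ there is no symmetric/antisymmetric splitting: $\ker(J-I)=\ker(J+I)=V_{+}$, so $\operatorname{rank}O$ is \emph{not} the observability rank of the restricted pair $(M_{+},e_{c}^{\top})$ on $V_{+}$. That pair is in fact almost totally unobservable: for $v\in V_{+}$ the center's two neighbor contributions cancel mod 2, the restricted update is $u_{0}'=d_{c}u_{0}$, and $e_{c}^{\top}M^{t}v=(M^{t}e_{c})_{c}\,v_{c}$, so the covector orbit on $V_{+}$ has rank at most 1; indeed $U=\{v\in V_{+}:v_{c}=0\}=\operatorname{im}(I+J)$ is an $M_{+}$-invariant subspace inside $\ker e_{c}^{\top}$, so your stated criterion (``cyclic covector'') fails even though full rank of $O$ is true. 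The correct folded object is the quotient: one checks $U\subseteq\ker O$ and $MU\subseteq U$, so $M$ descends to $Q=\mathrm{GF}(2)^{n}/U$ of dimension $\lceil n/2\rceil$, with coordinates $q_{0}=v_{c}$ and the mirror-pair sums $q_{i}=v_{c-i}+v_{c+i}$---exactly the quantities your final paragraph proposes to peel. And there the step you leave open closes immediately and uniformly: since every rule in $R$ couples to both neighbors with weight 1, the induced map on $Q$ satisfies $q_{0}'=d_{c}q_{0}+q_{1}$, $q_{i}'=q_{i-1}+d_{c+i}q_{i}+q_{i+1}$ (with the $q_{0}$ feed into $q_{1}'$ canceling mod 2), and $q_{m}'=q_{m-1}+(1+d_{n})q_{m}$, so $q_{i+1}$ enters $q_{i}'$ with coefficient exactly 1 for every diagonal pattern. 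Hence the Krylov covectors $e_{0}^{\top}\bar{M}^{t}$, $t=0,\ldots,\lfloor n/2\rfloor$, are in echelon form, the peeling never stalls, and $\operatorname{rank}O=\lceil n/2\rceil$ exactly. With the paper's convention that the sequence includes $t=0$, this proves the conjecture with fiber size exactly $2^{\lfloor n/2\rfloor}$ (matching the paper's $n=9$ experiments, which show 16 colliding seeds); if the sequence began at $t=1$ you would lose one row and only get $2^{\lceil n/2\rceil}$. So your plan was one change of coordinates---quotient by $U$ rather than restriction to $V_{+}$---away from upgrading the conjecture to a theorem.
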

Evidence for this conjecture begins with the following proposition. 
\begin{prop}
There are $2^{\left\lfloor n/2\right\rfloor }$ initial states of
$\Sigma$ that give the same temporal sequence of length $\left\lceil n/2\right\rceil $
at cell $\left\lceil n/2\right\rceil $.\end{prop}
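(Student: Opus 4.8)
The plan is to recast the whole question as linear algebra over GF(2) and reduce it to a single rank computation. Since each of the four rules $90,105,150,165$ is affine, the evolution of $\Sigma$ is governed by an affine recurrence $S^{(t+1)}=MS^{(t)}+b$ over GF(2), where $M$ is the $n\times n$ transition matrix and $b$ is the offset vector coming from the constant terms of rules $105$ and $165$. Iterating gives $S^{(t)}=M^{t}S^{(0)}+b_{t}$ for accumulated offsets $b_{t}$, so each center-cell value is an affine functional of the seed, namely $s_{c}^{(t)}=(e_{c}^{\top}M^{t})S^{(0)}+\beta_{t}$ with $c=\lceil n/2\rceil$ and $\beta_{t}\in\mathrm{GF}(2)$. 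A prescribed temporal sequence of length $\lceil n/2\rceil$ therefore imposes the affine system $(e_{c}^{\top}M^{t})S^{(0)}=\sigma_{t}+\beta_{t}$ for $t=0,\dots,\lfloor n/2\rfloor$. If the $\lceil n/2\rceil$ row functionals are linearly independent, the solution set is a coset of a kernel of dimension $n-\lceil n/2\rceil=\lfloor n/2\rfloor$; since a realizable sequence makes the system consistent, the seed count is exactly $2^{\lfloor n/2\rfloor}$.

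The crux is thus to prove that the vectors $v_{t}:=M^{t}e_{c}$ for $t=0,\dots,\lfloor n/2\rfloor$ are linearly independent (I use $e_{c}^{\top}M^{t}=v_{t}^{\top}$, valid because $M$ is symmetric). I would first record the shape of $M$: all four rules place coefficient $1$ on \emph{both} the left and right neighbor and differ only in their center coefficient and constant, so $M=C+C^{-1}+D$ with $C$ the cyclic shift and $D$ diagonal; in particular every off-diagonal band entry is $1$ and $M^{\top}=M$ (this matrix symmetry already follows from the common neighbor coefficients alone). The payoff is a wavefront argument: reading $(v_{t})_{i}=(M^{t})_{ic}$ as a sum of weighted length-$t$ walks from $c$ to $i$, positions outside $\{c-t,\dots,c+t\}$ are unreachable, so $\operatorname{supp}(v_{t})\subseteq\{c-t,\dots,c+t\}$, while the extreme position $c+t$ is reached only by the unique all-right walk of weight $1$, giving $(v_{t})_{c+t}=1$. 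Because $n=2m+1$ is odd and $t\le m=\lfloor n/2\rfloor$, moving $t$ steps right from $c=m+1$ lands at $c+t\le n$ with no cyclic wraparound, so this extreme entry is genuinely the rightmost nonzero one.

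With these facts the independence is a triangular leading-term argument: the rightmost support positions $c+0<c+1<\cdots<c+m$ strictly increase, so $v_{t}$ is the only vector among $v_{0},\dots,v_{m}$ with a nonzero entry in column $c+t$, and no nontrivial GF(2)-combination can vanish (take the largest index with nonzero coefficient and inspect column $c+t$). This establishes full row rank $\lceil n/2\rceil$ and, via the coset count above, completes the proof.

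I expect the genuine obstacle to be precisely the wraparound bookkeeping that pins the guarantee to sequences of length $\lceil n/2\rceil$ and no longer: the clean triangular structure survives only while $t\le\lfloor n/2\rfloor$, and once $t$ exceeds this the leftgoing and rightgoing wavefronts meet on the cycle, the extreme-entry argument collapses, and independence can fail — which is exactly why the statement for longer sequences is left as \conjref{n/2-Seeds} rather than proved. A secondary point I would state carefully is consistency: the value $2^{\lfloor n/2\rfloor}$ is the size of the kernel coset, hence identical for every realizable sequence, so no separate argument about which sequences can occur is needed.
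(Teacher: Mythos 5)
Your proof is correct, but it takes a genuinely different route from the paper's. The paper argues constructively, in the style of Meier--Staffelbach: fix the center sequence $\sigma$, choose an arbitrary right-adjacent sequence $\rho\in\{0,1\}^{\left\lceil n/2\right\rceil-1}$, and use the fact that every rule in $R$ has coefficient $1$ on both neighbors (so each rule is simultaneously left- and right-toggle) to solve the left and right triangles and recover a full seed producing $\sigma$; the $2^{\left\lfloor n/2\right\rfloor}$ choices of $\rho$ yield the count. You instead phrase everything as one rank computation for the affine recurrence $S^{(t+1)}=MS^{(t)}+b$: the row functionals $e_{c}^{\top}M^{t}$, $t=0,\dots,\left\lfloor n/2\right\rfloor$, are independent by a wavefront/leading-entry argument ($\operatorname{supp}(M^{t}e_{c})\subseteq\{c-t,\dots,c+t\}$ with the extreme entry equal to $1$ via the unique all-right walk, no wraparound since $t\leq\left\lfloor n/2\right\rfloor$), so the solution set for any sequence is a coset of a $\left\lfloor n/2\right\rfloor$-dimensional kernel. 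Each approach buys something: the paper's proof explicitly exhibits the seeds and reuses the triangle-solving machinery central to its cryptanalysis sections, while yours additionally delivers exactness and surjectivity for free --- every length-$\left\lceil n/2\right\rceil$ sequence is realizable, each by \emph{exactly} $2^{\left\lfloor n/2\right\rfloor}$ seeds, uniformly --- whereas the paper's construction, as written, only produces $2^{\left\lfloor n/2\right\rfloor}$ seeds per sequence (distinctness and the matching upper bound are implicit, via injectivity of the map $\rho\mapsto$ seed). Your closing observation is also apt: the triangular structure collapses precisely when $t>\left\lfloor n/2\right\rfloor$ and the two wavefronts meet on the cycle, which explains structurally why the general statement remains the paper's Conjecture rather than a theorem. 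One phrasing nit: $v_{t}$ is not literally ``the only vector among $v_{0},\dots,v_{m}$ with a nonzero entry in column $c+t$'' (a $v_{s}$ with $s>t$ may be nonzero there); your parenthetical --- take the largest index appearing in the combination --- is the correct argument and repairs this, so it is a slip of wording, not a gap.
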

\begin{proof}
Let $k=\left\lceil n/2\right\rceil $ and let $\sigma=s_{k}^{(0)},\ldots,s_{k}^{(k)}$
be the first $k$ bits of the center temporal sequence at cell $k$.
Choose any $\rho\in\{0,1\}^{k-1}$ as the right-adjacent sequence
of cell $k$ beginning at $t=0$. Since each transition function in
$\Sigma$ is affine, both the right and left triangles of $\Sigma$
are easily solved for, resulting in a full initial state. This state
necessarily produces $\sigma$, regardless of the choice of $\rho$.
\end{proof}
This shows that the bound on the number of initial states holds at
$\ell=\left\lceil n/2\right\rceil $, giving a sort of base case for
the argument.

Further evidence for \conjref{n/2-Seeds} can be seen in the experimental
results in \tabref{MaxPeriods}, and suggest that the period of any
linear CA (symmetric or asymmetric) is no more than $2^{\left\lceil n/2\right\rceil }$.
If this is true and some period $p$ can be derived from the rule
set, then a proof need only deal with values of $k<\ell<p$. For $t\geq p$,
the sequence must be fixed.

\begin{table}
\begin{centering}
\begin{tabular}{|c|c|c|}
\hline 
$n$ & Max Period & $2^{\left\lceil n/2\right\rceil }$\tabularnewline
\hline 
\hline 
5 & 8 & 8\tabularnewline
\hline 
7 & 14 & 16\tabularnewline
\hline 
9 & 30 & 32\tabularnewline
\hline 
11 & 62 & 64\tabularnewline
\hline 
51 & 67108860 & 67108864\tabularnewline
\hline 
\end{tabular}
\par\end{centering}

\textsf{\caption{\label{tab:MaxPeriods}\textsf{Maximum periods of linear cyclic CA.
The value for $n=51$ is based on a single observation of $F=\langle\{150\}^{12},90,\{150\}^{25},90,\{150\}^{12}\rangle$
with $s^{(0)}=562964991182857$. A few other rule sets have been tried
with only a few other seeds, all having far lower periods. This is
a symmetric CA, and larger periods may be possible for asymmetric
CA. Lower values for $n$ are the results from exhaustive computation.}}
}
\end{table}

A likely approach to proving \conjref{n/2-Seeds} uses a linear system
in GF(2) to model the evolution of the CA. Since all rules in $R$
are affine, we can represent $\Sigma$'s transition function across
all cells as 
\begin{equation}
\begin{bmatrix}w_{1,2} & w_{1,3} & 0 & 0 &  & w_{1,1}\\
w_{2,1} & w_{2,2} & w_{2,3} & 0 &  & 0\\
0 & w_{3,1} & w_{3,2} & w_{3,3} &  & 0\\
0 & 0 & w_{4,1} & w_{4,2} & \cdots & 0\\
 &  &  & \vdots & \ddots\\
0 & 0 & 0 & w_{n-1,1} & w_{n-1,2} & w_{n-1,3}\\
w_{n,3} & 0 & 0 & 0 & w_{n,1} & w_{n,2}
\end{bmatrix}\begin{bmatrix}s_{1}^{(t)}\\
s_{2}^{(t)}\\
\\
\vdots\\
\\
\\
s_{n}^{(t)}
\end{bmatrix}+\begin{bmatrix}b_{1}\\
b_{2}\\
\\
\vdots\\
\\
\\
b_{n}
\end{bmatrix}=\begin{bmatrix}s_{1}^{(t+1)}\\
s_{2}^{(t+1)}\\
\\
\vdots\\
\\
\\
s_{n}^{(t+1)}
\end{bmatrix}\label{eq:LinearSystem}
\end{equation}

where $w_{i,j}$ is the weight for cell $i$ on input $j$ in the
cells transition function $f_{i}(x_{1},x_{2},x_{3})$ and $b_{i}$
is an offset for each cell to affect a complement rule. If $M$ is
the matrix of $w_{i,j}$, then the recurrence $S^{(t+1)}=MS^{(t)}+b$
models the evolution of $\Sigma$. If this recurrence is periodic,
so is $\Sigma$.

The matrix $M$ has a few interesting properties in the cases of interest.
First, for all rules in $R$, $w_{i,1}=w_{i,3}=1$. Therefore, $M$
is symmetric. The values on the diagonal are determined by whether
$f_{i}\in\{90,165\}$(where $w_{i,2}=0$) or $f_{i}\in\{150,105\}$
($w_{i,2}=1)$. Second, since $\Sigma$ is symmetric, $M$ is symmetric
relative to both diagonals, i.e. $w_{i,2}=w_{n-i+1,2}$ for $i<n/2$.
Lastly, $M$ is a band matrix%
\footnote{\textsf{\footnotesize The strict definition of band matrix precludes
the non-zero values at $(1,n)$ and $(n,1)$, but some authors refer
to matrices of this form as band matrices in the literature. To avoid
this technicality, we can simply duplicate $s_{1}$ and $s_{n}$ on
either side, append 0's to $b$, and increase the matrix dimension
to match.}%
}. 

If the recurrence is periodic, then applying the recurrence some number
of times must yield the starting value. That is 
\[
M(M(\ldots(M(MS^{(t)}+b)+b)\ldots)+b)+b=MS^{(t)}+b.
\]
Showing that repeating the recurrence leads to an earlier result of
the recurrence rather than just the original $S^{(t)}$ allows for
proper handling of garden of Eden state vectors. Collecting terms
over $p$ applications of the recurrence, we see that the period is
$p$ when 
\[
M^{p+1}S^{(t)}+M^{p}b+\ldots+Mb+b=MS^{(t)}+b
\]
This may be further broken down to showing that raising a matrix with
these properties to the power $p$ is idempotent, i.e. $M^{p+1}=M$,
followed by showing that $M^{p}b+\ldots+Mb=\mathbf{0}$. This last
part can be re-written as 
\begin{equation}
\left(\sum_{i=1}^{p}M^{i}\right)b=\mathbf{0}.\label{eq:SumMi=00003D0}
\end{equation}

In fact, the condition in (\ref{eq:SumMi=00003D0}) also implies $M^{p+1}=M$.
Suppose $Q(x)$ is the characteristic polynomial of $M$ so that $Q(M)=0.$
So also $M\cdot Q(M)=0$ and $M^{2}\cdot Q(M)+M\cdot Q(M)=\mathbf{0}$.
Since $Q$ is over GF(2), the coefficients of $Q$ can be written
as a binary vector $c$ of length $n$. Multiplying $M\cdot Q(M)$,
as each iteration of the recurrence does, gives a polynomial with
coefficients $c\ll1$ which also evaluates to $\mathbf{0}$ at $M$.
Here, $\ll$ denotes the left-shift operator. We can likewise sum
any number of shifts of $c$ and arrive at a polynomial which evaluates
to $\mathbf{0}$. 

Now suppose that through this process, we create a polynomial with
a coefficient vector $c'=1^{p}$. We conjecture it is always possible
to create such a polynomial if $Q(x)$ has terms $1x^{1}+0x^{0}$.
If $Q(x)$ contains the term $1x^{0}$, we simply start from $c\ll1$.
We know the polynomial represented by $c'$ evaluates to $\mathbf{0}$
at $M$. Therefore, this polynomial satisfies \ref{eq:SumMi=00003D0}.
To check whether $M^{p+1}=M$, we can see 

\begin{eqnarray*}
\sum_{i=1}^{p}M^{i} & = & \mathbf{0}\\
(M+I)\sum_{i=1}^{p}M^{i} & = & \mathbf{0}\\
\sum_{i=2}^{p+1}M^{i}+\sum_{i=1}^{p}M^{i} & = & \mathbf{0}\\
M^{p+1} & = & M
\end{eqnarray*}

The reverse implication also holds as long as $M+I$ is invertible.
An algorithm to build such a $c'$ exists (by repeatedly shifting
the lowest order 1-bit up to the lowest order 0-bit in the running
sum over GF(2)), but there is no proof of a bound on the resulting
$p$. Recall that Nandi, et. al. prove the characteristic polynomial
for a cyclic linear CA must have $x$ or $(x+1)$ as a factor, but
it is not clear how this leads to a bound of $2^{n/2}$.

As an example, we illustrate the case for a 5-cell uniform rule 90
CA, which has the following transition matrix:
\[
M_{90}=\begin{bmatrix}0 & 1 & 0 & 0 & 1\\
1 & 0 & 1 & 0 & 0\\
0 & 1 & 0 & 1 & 0\\
0 & 0 & 1 & 0 & 1\\
1 & 0 & 0 & 1 & 0
\end{bmatrix}
\]
 with successive powers
\[
M_{90}^{2}=\begin{bmatrix}0 & 0 & 1 & 1 & 0\\
0 & 0 & 0 & 1 & 1\\
1 & 0 & 0 & 0 & 1\\
1 & 1 & 0 & 0 & 0\\
0 & 1 & 1 & 0 & 0
\end{bmatrix},\; M_{90}^{3}=\begin{bmatrix}0 & 1 & 1 & 1 & 1\\
1 & 0 & 1 & 1 & 1\\
1 & 1 & 0 & 1 & 1\\
1 & 1 & 1 & 0 & 1\\
1 & 1 & 1 & 1 & 0
\end{bmatrix},\; M_{90}^{4}=\begin{bmatrix}0 & 1 & 0 & 0 & 1\\
1 & 0 & 1 & 0 & 0\\
0 & 1 & 0 & 1 & 0\\
0 & 0 & 1 & 0 & 1\\
1 & 0 & 0 & 1 & 0
\end{bmatrix}=M_{90}.
\]
Notice in this case that 
\[
\sum_{i=1}^{3}M_{90}^{i}=\mathbf{0}
\]
 so even if $b$ were not $\mathbf{0}$, the period would still be
3. Thus any 5-cell hybrid CA over rules in $\{90,165\}$ has period
3 for any starting seed.
\begin{rem}
There are $2^{n}$ symmetric CAs. To see this, we choose $n/2$ bits
for the diagonal of $M$ to select between binary ternary rules at
each cell, then reflect them to the other half. We then choose $n/2$
bits for the vector $b$ to select between normal and complementary
rules and reflect those bits as well. These $n$ bits cover the range
of all possible values of $M$ and $b$ for symmetric CAs.
\end{rem}
Another possible approach to proving \conjref{n/2-Seeds} is to think
of a symmetric CA as a left block of $k-1$ cells, a center cell,
and a right block of $k-1$ cells. We notice the left and right blocks
are equivalent in construction: they have inner and outer neighbors
that apply the same function, and they deliver their outputs to inner
and outer neighbors after applying the same function as inputs move
from one side to the other. (See \figref{SymmetricCA}.) Their only
difference is their initial state and the order in which the initial
state vector $s^{(0)}$ is acted upon. This model may explain the
reflection seen in cell values over several steps. Further, notice
that at $t=k$, each cell is a function of all cells' initial state.
In particular, each block is a function of its initial state, the
first $k-1$ bits of $\sigma$, and the output of the opposite block.
With the functions so similar, the intuition is that the period must
be small (certainly no more than $(k-1)^{2}$), but a proof is not
known. 

Other considerations may further reduce the period. Spatially symmetric
seeds can generate periods of no greater than $2^{\left\lfloor n/2\right\rfloor }$
since both blocks will be exactly the same. It may also be of interest
to know what the spacial period is in the case of symmetric seeds;
some combinations dead-end at 0, some have very low periods. McIntosh
covers the effects of spatially periodic seeds in depth in \cite{Mcintosh87}.

\begin{figure}
\begin{centering}
\includegraphics[scale=0.9]{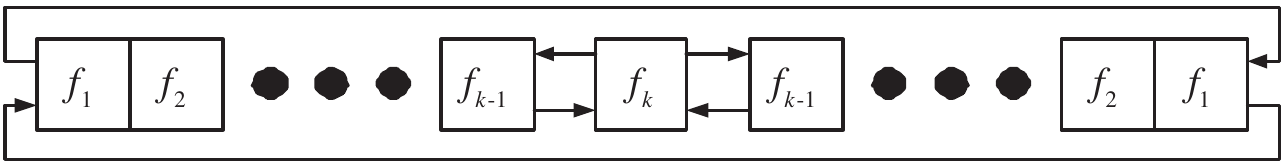}
\par\end{centering}

\textsf{\caption{\label{fig:SymmetricCA}\textsf{Symmetric CA viewed as having right-
and left-side blocks}}
}
\end{figure}

\begin{rem}
Assuming \conjref{n/2-Seeds}, some sequences of length $\ell>\left\lceil n/2\right\rceil $
are not possible in a given symmetric CA. This is supported by a pigeon-hole
argument: If each sequence is produced by $2^{\left\lfloor n/2\right\rfloor }$seeds
then all seeds lead to one of $2^{\left\lceil n/2\right\rceil }$
sequences, leaving $2^{\left\lfloor n/2\right\rfloor }$ sequences
with no possible seeds. This is born out by experimentation. See \apnref{SymEffects}
and \apnref{CorrSeqSeeds}.
\end{rem}

\begin{rem}
Assuming \conjref{n/2-Seeds}, sequences which are not eventually
periodic with eventual period $\leq2^{\left\lceil n/2\right\rceil }$
cannot be produced by a symmetric CA. If such a CA were used as a
key stream, a single known plaintext of more than $2^{\left\lceil n/2\right\rceil }$
bits would quickly be distinguishable from random.
\end{rem}
In summary, using a symmetric CA does not seem advisable. Suppose
$n$ is large enough to provide for an adequate period. There are
still many seeds that would allow an attacker to compute past and
future bits, defeating the whole system. Excluding symmetric CAs (both
in arity and complementarity) reduces the rule vector space by $2^{n}$.
To guess the rule vector and initial state for only asymmetric CAs
given a temporal sequence of length $n/2$, we would first need to
choose from $2^{2n}-2^{n}=2^{n}(2^{n}-1)$ rule vectors. Then we would
need to choose one bit per time step for $(n-1)/2$ time steps to
allow solving the state vectors backwards once the rule set is fixed.
Multiplying these gives us a bound of $2^{(3n-1)/2}(2^{n}-1)$ guesses
for the whole key space, reducing the bound in \cite{Tomassini01}
by $2^{(3n-1)/2}$ trials.

\subsection{Analysis of CAs with Asymmetric Rule Sets}

Asymmetric CAs prove more difficult to recover a seed from. Many seem
to have unique seeds for each temporal sequence and periods generally
seem longer in asymmetric CA, though these statements are not true
in all cases. Deriving any identities concerning periods, number of
distinct sequences, relation to other rule sets, etc. has proved very
difficult.

To explore these constructs further, we instead examine data on particular
qualities. As documented in \tabref{MaxPeriods}, exhaustive computation
gives us the maximum eventual period for all $n$-cell CAs for $n\leq11$.
Taking a 128-ruleset sample from the data from $n=9,$ we can chart
the number of CA having a certain maximum period which generate a
certain number of distinct sequences. The rule sets were chosen by
looking at all possible rule sets over rules 90 and 150. This seems
justified since the complementing of terms in rules 105 and 165 has
been shown to add little if any value at all in most experiments%
\footnote{\textsf{\footnotesize But not none. There are cases where simply complementing
certain rules can increase a rule set\textquoteright{}s period}%
}. Fixing the two left rules at 150 gives 128 possible rule sets to
examine. For each of these, its maximum period and the total number
of distinct sequences it generates over all possible seeds were computed.
The results are shown in the bubble chart in \figref{NumberOfPerSeq}
and raw data for the table is in \apnref{NumPerSeq}.

\begin{figure}
\begin{centering}
\includegraphics[bb=45bp 65bp 745bp 550bp,clip,height=3.5in]{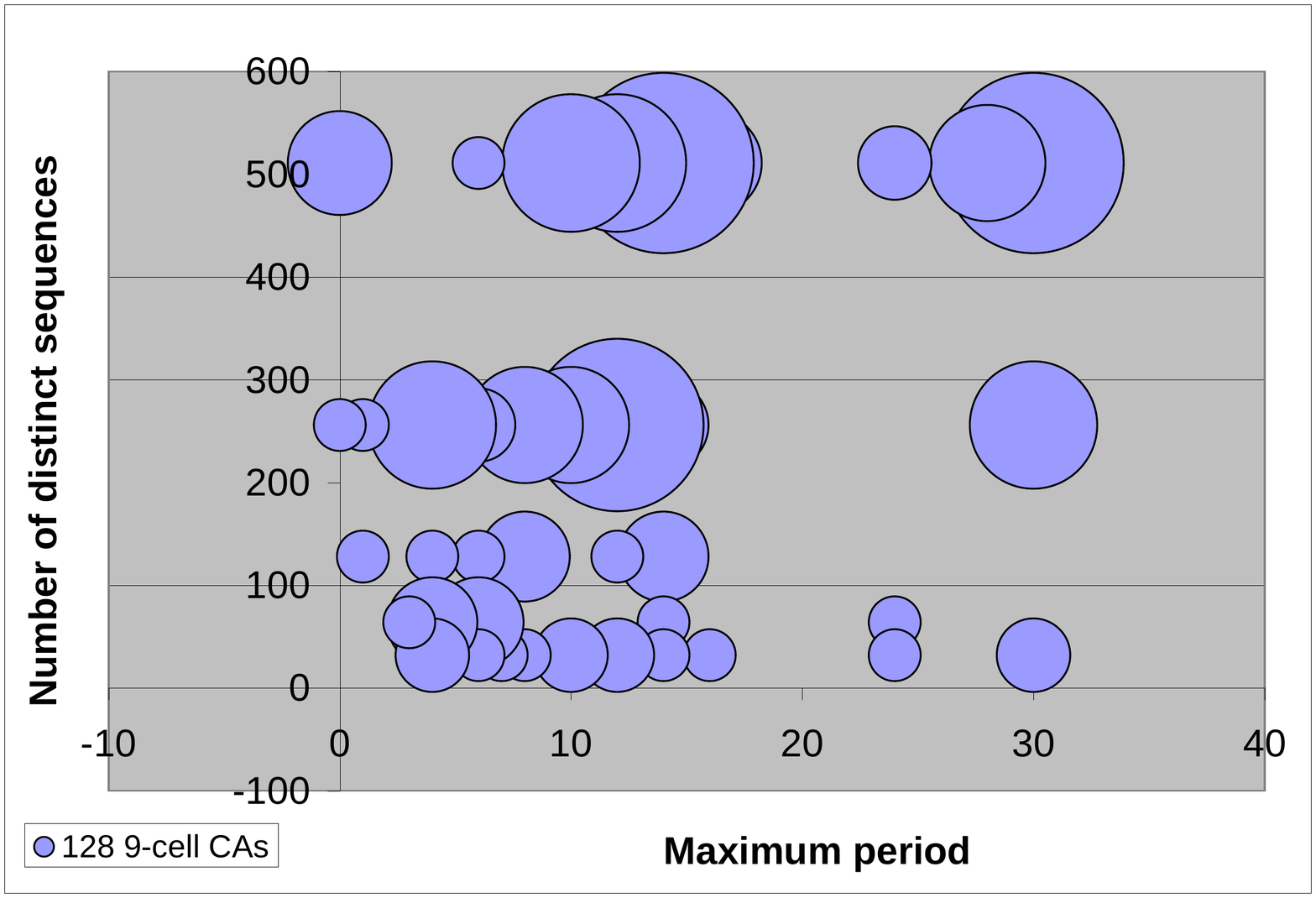}
\par\end{centering}

\textsf{\small \caption{\textsf{Number (indicated by bubble area, maximum of 12) of Asymmetric
CA rule sets having a given period and generating a number of distinct
sequences\label{fig:NumberOfPerSeq}}}
}
\end{figure}

Some interesting observations are possible looking at this data. First,
41\% of the rule sets sampled generate a distinct central temporal
sequence for every possible seed. Next, $29/128$ have a period greater
than or equal to 24. And finally, $19/128$ have both of these properties.
These numbers represent the maximums for the rule sets; many seeds
under a given rule set may do worse than the maximum.

This suggests that at least roughly 85\% of the key space over $R\times\{0,1\}^{n}$
has some considerable weakness.

\subsection{\label{subb:CaOpenProblems}Open Problems}

The following sections record open investigations that may hold some
promise for improving cryptanalysis in various CA. The ideas are rather
unpolished, and a time-constrained reader may certainly skip this
section.

\subsubsection{Solving an Arbitrary Temporal Sequence in an Arbitrary Symmetric
CA}

\conjref{n/2-Seeds} suggests the workings for an algorithm analogous
to that of Meier-Staffelbach to recover a seed given a central temporal
sequence generated in a symmetric CA. What's missing is a method to
select rules under which to solve the triangles of the CA backward
(or run forward, for that matter.) Such a method is not obvious since
different symmetric CA have different periods. It may be possible
to solve a part of a sequence under a chosen rule set, then map to
another rule set with a longer period. This new rule set can then
be run forward to check for a match. Mapping between rule sets seems
to be possible, as suggested by experiments in \apnref{SolveTempSeq}.

\subsubsection{\label{subb:MapSeeds}Mapping Seeds from Symmetric to Asymmetric
CA}

Let $A$ be an asymmetric CA that generates a temporal sequence $\alpha$
from seed $s_{A}^{(0)}$. Suppose $\alpha$ is also generated by a
symmetric CA $B$ (which can be tested using the algorithm stated
above) using seed $s_{B}^{(0)}$. We would like to know if there a
mapping from the seed $s_{B}^{(0)}$ to $s_{A}^{(0)}$ or, short of
a mapping, if there is an efficient algorithm for finding $s_{A}^{(0)}$
given $s_{B}^{(0)}$. Such an algorithm would reduce the problem of
recovering a seed in an asymmetric CA to recovering a seed in a symmetric
CA, which we've already solved. 

Experiments in \apnref{MapSeedsExpr} show that each of the $2^{\left\lfloor n/2\right\rfloor }$possible
$s_{B}^{(0)}$ has a unique difference from $s_{A}^{(0)}$. Further,
their difference is the same for $\overline{\alpha}$, which may reduce
the search space by half. Of course, $\alpha$ may not be a possible
temporal sequence of $B$, which may also be useful information.

\subsubsection{Mapping Sequences from Symmetric to Asymmetric CA}

We approach the problem orthogonally to \subbref{MapSeeds}: for all
seeds $s_{B}^{(0)}$ that generate each sequence $\sigma$ in symmetric
CA $B$, examine the sequence $\alpha$ generated by $A$ using the
same seed $s_{B}^{(0)}$. The difference between $\sigma$ and $\alpha$
may provide information about the rule set $F_{A}$. Experiments in
\apnref{MapSeqExpr} show example difference patterns. In the specific
case shown, these differences only occur after the first $\left\lceil n/2\right\rceil $
time steps, which indicates that the asymmetry only affects the temporal
sequence in positions where contributing seed bits have been used
as input to cells symmetric about the sequence, i.e. after wrapping
around the boundary. This suggests that the asymmetry is one sided.
If we think of the information from the initial state as ``flowing''
through the cells on any of their connection paths towards the central
temporal sequence, we notice conditions that might change the information
that arrives in the central cell compared to a uniform or symmetric
CA. For all temporal sequence positions before $t=\left\lceil n/2\right\rceil $,
an asymmetric seed could correct for rule differences on just one
side. But after passing through both symmetric cells, the effect of
asymmetry changes the information contribution to the central cell
when it arrives there. Other pairs of rule sets in this same experiment
show differences earlier than $t=\left\lceil n/2\right\rceil $, and
indeed, have rule changes on both sides and closer to the central
cell. This supports the general idea, though the actual mechanics
are not fully understood.

If an efficient algorithm could be found to identify the actual values
of the asymmetry in the rule set, then recovering the original seed
is reduced to solving a known sequence under known hybrid affine rules,
which can be done with the Meier/Staffelbach algorithm. Recovering
the rule values will likely require comparing the sequences from an
unknown, asymmetric CA to those of uniform rule 90, 105, 150, and
165 CAs.

\section{\label{sec:NewConstruction}A New Cellular Construction}

Given the observations made previously and the body of literature
showing weakness in 3-neighbor, fixed-rule cellular automata (CA),
we may begin to wonder if \emph{any} CA construction is capable of
exhibiting provably NP-hard behavior. The 3-neighbor uniform CA using
rule 110 has been proven to be universal (i.e. capable of simulating
an arbitrary Turing machine) \cite{Cook04}. Unfortunately, this rule
does not seem capable of passing basic statistical tests by itself
since it is highly non-linear and therefore biased in its output.
It seems all 3-neighbor fixed-rule set CAs, uniform or hybrid, are
unsuitable for cryptographic applications. 

Yet the highly parallel nature and simple operation of CA are still
appealing. Whereas block ciphers and other cryptographic primitives
must be specifically designed for a predetermined block width, a strong
CA construction can allow scalability simply by adding more cells.
Since cells are identical, they can be packaged in ASICs or programmable
logic blocks and configured at a width just adequate for the job at
hand. For software implementations, the ubiquity of vector register
operations on most CPUs and the advent of GPUs give CA constructions
an easy path to performance improvements. Most block ciphers, in contrast,
do not naturally decompose into parallel tasks in an obvious way.
Lastly, many cryptographic primitives have no proofs of security properties
or only derive provable security properties through reduction to other
primitives which have no such proofs (see e.g. \cite{Barak05}). The
simple, regular operations of CA, on the other hand, seem more likely
to lend themselves to proving certain properties than ad-hoc combinations
of shifts and XORs. If so, and if the implementation is efficient
and scalable, such a construct would have natural advantages for cryptographic
application designers.

Therefore it seems worth understanding what might be required to produce
hard-to-invert CA. We might first ascribe the observed failings to
the 3-neighbor construct after noting that any two neighbors share
2 of their 3 predecessors in common. This construct leaks information
about a cell's value to its three descendants. With knowledge of the
function applied at each cell, an attacker can build a system of equations,
even if its non-linear in some cells. One way to plug this leak is
to protect the knowledge of the function applied at each cell. Suppose
we fix the neighborhood of cells at 3 but allow the rule applied at
each step to be selected uniformly at random from all possible 3-neighbor
rules and we seed the CA with $n$ bits, also sampled uniformly at
random. It is clear that knowing the full state of the CA at time
$t+1$ would provide no information about the state at time $t$.
Each iteration of such a CA would realize Shannon's notion of \emph{perfect
secrecy} \cite{Shannon49a}. 

Unfortunately, requiring truly random rule selection leaves us no
better off than where we started--in need of good random bits. We
might consider using pseudorandom functions to preserve some notion
of semantic security, that is, perfect secrecy under computational
bounds. Even so, that would leave us requiring one pseudorandom primitive
to produce another. We may then wonder: Can a 1-D CA with enhanced
cells which modulate their rule by a simple process (e.g. a weak random
number generator or a fixed permutation) be provably NP-hard to invert
and still produce cryptographic-strength pseudorandom bits?

\subsection{Finite State Transducers}

To understand the capabilities of cellular constructs, we must first
formalize a computational model to evaluate. There are a few factors
that guide the selection of our model. First, we'd like to find the
simplest model possible which is still capable of the required computation.
This is a general principle but also a practical concern since simple
models are easiest to reason about. Second, we'd like physical implementations
to be able to match the computational model closely. This allows any
provable properties which exist in theory to also be claimed by the
implementations (up to differences required of the mapping to the
physical world.)

The simplest computational model in theory is the finite state machine
(FSM), which is ostensibly the model for each cell. We consider FSMs
with greater than two states and the capabilities of the overall automaton
when we place various restrictions on the construction of the FSMs
in the cells. 

Historically, the output of a two-state cell has been referred to
as its \emph{state} since the output directly reflects the current
state of the automaton. This terminology becomes confusing when we
consider many-state machines which still produce only two outputs.
We will therefore adopt the notion of a cell producing an \emph{output}
as distinct from its current state. This notion is captured nicely
in the model known as finite state transducers (FSTs), which originated
with Mealy \cite{Mealy55} and Moore \cite{Moore56} after whom the
popular variants are named. Using these models provides the advantage
of having well known ways to map such transducers into combinatorial
logic, making practical applications more straightforward. 

Conceptually, an FST comprises two tapes, an input tape and an output
tape, and computes a function that maps strings on the input tape
to strings on the output tape. For a cellular FST operating repeatedly
in discrete time steps with instantaneous communication of outputs
to neighbors, the concept of tapes does not seem a natural fit. Some
awkward constructions would be needed to copy outputs from neighbors'
tapes to each cell's input tape or for them to be shared some how.

To address this difficulty, we will provide our cellular FSTs with
$N$ direct, discrete inputs and a single output. These inputs and
outputs can be routed and connected together to allow various constructions
just as if they were wires, similar to traditional CA. More formally,
we define a finite state cell (FSC) as a quadruple $(Q,\Sigma,q_{s},\delta)$,
where:
\begin{itemize}
\item $Q$ is a finite set of states, 
\item $\Sigma$ is the alphabet (input and output) of the cell,
\item $q_{s}\in Q$ is the start state, 
\item $\delta:Q\times\Sigma^{N}\rightarrow Q\times\Sigma$ is the transition
function.
\end{itemize}
Again, we limit our discussion to $N=3$. For convenience, we name
the 3 input values read each time step $\lambda$, $\omega$, and
$\rho$, where $\omega$ is the output of the cell routed back as
an input. An FSC requires an initial value $\omega_{s}\in\Sigma$
to begin operation. Once received, the cell outputs $\omega_{s}$
and enters $q_{s}$. At each time step, the cell computes a function
from $\Sigma^{3}$ to $\Sigma$ depending on the current state and
then changes to the next state, all according to $\delta$. The ordered
string $\lambda\omega\rho$ defines the input to this function, and
its result becomes the next value of $\omega$. FSCs have no final
states, and simply operate continuously after initialization.

It is easy to see the analogy to traditional CA. FSCs, however, are
not fixed in the function they compute at each time step. Instead,
these cells may have an arbitrary but finite number of states, the
transitions between each of which compute different functions. The
path through the states may be dependent on the inputs received or
may be fixed.

We will be concerned only with the case of $\Sigma=\{0,1\}$, though
it is easy to imagine FSCs with a larger $\Sigma$. We could also
consider larger values of $N$. These variations will not be necessary
for our present purposes and so will not be considered.

\subsection{Cellular Automata based on FSCs}

We can now consider a cellular automaton which aggregates $n$ such
cells. We define a finite state cellular automaton, or FSCA, $A=\langle a_{1},a_{2},\ldots,a_{n}\rangle$
as an array of $n$ finite state cells as defined above. The left
neighbor input $\lambda_{i}=\omega_{i-1}$ and the right neighbor
input $\rho_{i}=\omega_{i+1}$ for $i=2,\ldots,n-1$. A cyclic-boundary
FSCA connects the end inputs to the opposite end's output, so that
$\lambda_{1}=\omega_{n}$ and $\rho_{n}=\omega_{1}$. We say that
$A$ has a value $\mathbf{s}\in\{0,1\}^{n}$ when $\mathbf{s}=\omega_{1}|\omega_{2}|\ldots|\omega_{n}$
and more specifically we denote the value of $A$ at time step $t$
by $\mathbf{s}^{(t)}$.

It is useful to have a notion of the current configuration of the
entire FSCA which describes the current output values and state for
each cell. Let $Q_{i}$ be the set of states for $a_{i}$, and $\mathcal{Q}=Q_{1}\times Q_{2}\times\ldots\times Q_{n}$.
We say an FSCA $A$ has \emph{configuration} $C^{(t)}=(\mathbf{q}^{(t)},\mathbf{s}^{(t)})$
at time step $t$ for $\mathbf{q}^{(t)}\in\mathcal{Q}$ if $\mathbf{q}_{i}^{(t)}$
is the current state of $a_{i}$ at time $t$ for $i\leq n$, and
$\mathbf{s}^{(t)}$ is the value of $A$ at time $t$. We say $C^{(t)}$
\emph{yields }$C^{(t+1)}$, written $C^{(t)}\vdash_{\! A}C^{(t+1)}$
if operating $A$ with current configuration $C^{(t)}$ for one time
step produces configuration $C^{(t+1)}$. For short hand, we may also
write $C^{(t)}\vdash_{\! A}^{k}C^{(t+k)}$ to show the operation of
$A$ for $k$ time steps.

We can represent elementary CA as a special kind of FSCA where each
cell has only a single state. We simply define 8 self-transitions
that map all possible $\lambda\omega\rho$ inputs to a new output
$\omega'$. This collection of transitions then defines a function
from 3 bits to 1 bit, which is the rule of the cell over all time
steps. An example FSC for an elementary rule 30 cell is shown in \figref{rule30FSC}. 

\begin{figure}
\begin{centering}
\includegraphics[bb=0bp 5bp 135bp 300bp,clip,scale=0.45]{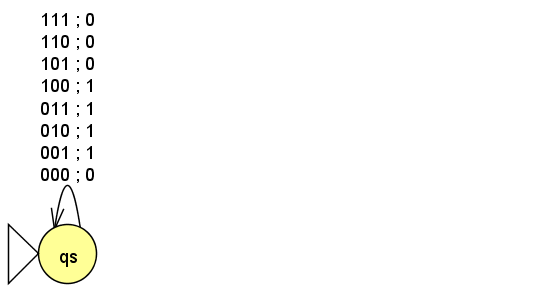}
\par\end{centering}

\caption{\label{fig:rule30FSC}\textsf{FSC diagram for an elementary rule 30
cell.}}
\end{figure}

\subsection{Variations on FSCA}

We have seen that single-state cells create FSCA which are equivalent
to traditional elementary CA and it seems clear that two states can
mimic alternating between two such rules (as proposed in \cite{GuanT04}).
We now explore the computational complexity of an $n$-cell FSCA with
cells having for example $256$ or $n$ or $n^{2}$ states. Specifically,
we would like to know at what number of cells does an FSCA become
computationally non-invertible. We may also wonder about the effects
of other limiting properties, such as the effect of a limited branching
factor on the transition paths through the states. 

To capture these limiting notions, we use a function $B:\Nat\mapsto\Nat$
to provide an upper bound on the number of states any cell in an FSCA
can have as a function of the number of cells in the FSCA, so that
$|Q_{i}|\leq B(n)$ for $i\leq n$.

We define an \emph{elementary} cell as one which obeys $B(n)=1$ having
only one state, $q_{s}$. With only one state, an elementary cell
can have only one transition type: $\delta(q_{s},\lambda\omega\rho)=(q_{s},f(\lambda\omega\rho))$
where $f:\Sigma^{3}\rightarrow\Sigma$ is the function defined by
the set of self transitions of $q_{s}$. Therefore, the output of
the cell at each time step is the result of a single, fixed function.
Thus an FSCA with elementary cells is equivalent to an elementary
CA.

We say $\delta$ is \emph{simple} if there is a $\delta':Q\rightarrow Q$
and an $f:\Sigma^{3}\rightarrow\Sigma$ such that for all $q\in Q$,
$\delta(q,\lambda\omega\rho)=(\delta'(q),f(\lambda\omega\rho))$.
Intuitively, this means that if there is any transition from $q$
to another state $r$, then all input combinations take an FSC $a$
from $q$ to $r$. The inputs have no effect on selecting the next
state and only the output bit on these transitions may vary. Under
a simple $\delta$, every state has exactly one successor state with
$f$ as the \emph{time step function} computed at that time step.
We say a cell is simple if its $\delta$ is simple, and an FSCA is
simple if all cells are simple.

This notion of simple cells turns out to be quite an interesting one.
If we construct special purpose cells to affect a specific function
at each time step, we can have two neighboring cells swap their values
at a certain time step, or compute the sum (XOR) and a carry (AND)
of their two values for example. If an FSCA has only simple cells,
it will perform the same computation without regard to the value of
the cells at any time step. This begins to have the feel of a machine
capable of universal computation.

It will be convenient to have a shorthand notation defining transitions
which compute a given function for all combinations of $Q\times\Sigma^{3}$.
We define a \emph{transition set} $T_{f}(q,r)$ from a state $q$
to a state $r$ with respect to a time step function $f$ as the set
$\{((q,\lambda\omega\rho),(r,f(\lambda\omega\rho)))\}$. Where the
implicit definition of $f$ is simple, we will use its expression
in the notation, e.g. $T_{\overline{\omega}}(q,r)$ denotes the use
of the complement as the time step function. Such transition sets
can be combined to define $\delta$ as a function. Defining some commonly
used transition sets illustrates the concept and will also be useful
in our discussion of FSC capabilities to follow.

\begin{figure}
\begin{centering}
\begin{minipage}[c]{0.45\columnwidth}%
\begin{center}
\subfloat[\label{fig:LeftTransFunc}\textsf{The }\textsf{\textsc{$T_{\lambda}$}}\textsf{
transition set}]{\begin{centering}
\includegraphics[bb=0bp 585bp 375bp 670bp,clip,scale=0.4]{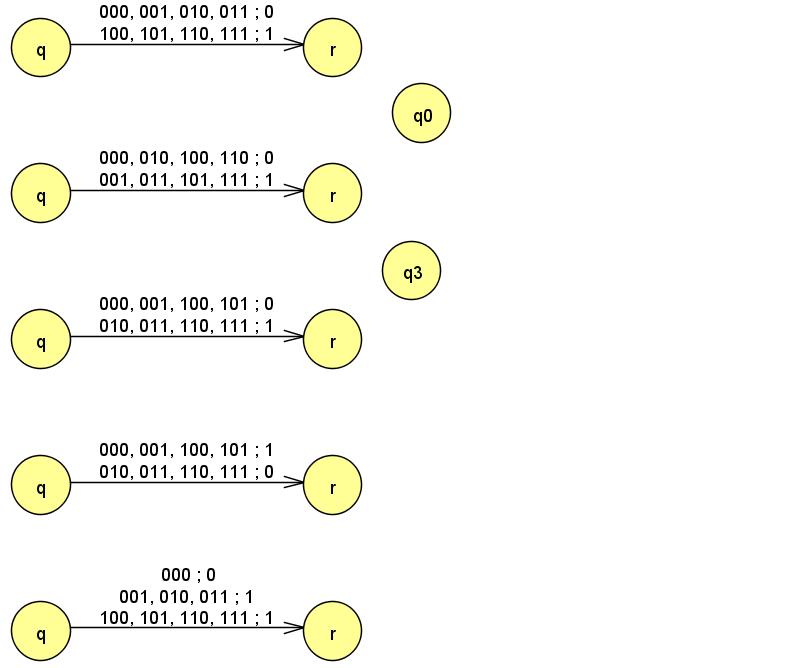}
\par\end{centering}

}
\par\end{center}%
\end{minipage}\hfill{}%
\begin{minipage}[c]{0.45\columnwidth}%
\begin{center}
\subfloat[\label{fig:RightTransFunc}\textsf{The }\textsf{\textsc{$T_{\rho}$}}\textsf{
transition set}]{\centering{}\includegraphics[bb=0bp 440bp 375bp 525bp,clip,scale=0.4]{fscaFuncs}}
\par\end{center}%
\end{minipage}
\par\end{centering}

\begin{centering}
\begin{minipage}[c]{0.45\columnwidth}%
\begin{center}
\subfloat[\label{fig:CenterTransFunc}\textsf{The }\textsf{\textsc{$T_{\omega}$}}\textsf{
transition set}]{\begin{centering}
\includegraphics[bb=0bp 290bp 375bp 375bp,clip,scale=0.4]{fscaFuncs}
\par\end{centering}

}
\par\end{center}%
\end{minipage}\hfill{}%
\begin{minipage}[c]{0.45\columnwidth}%
\begin{center}
\subfloat[\label{fig:ComplementTransFunc}\textsf{The }\textsf{\textsc{$T_{\overline{\omega}}$}}\textsf{
transition set}]{\centering{}\includegraphics[bb=0bp 145bp 375bp 230bp,clip,scale=0.4]{fscaFuncs}}
\par\end{center}%
\end{minipage}
\par\end{centering}

\centering{}%
\begin{minipage}[c]{0.45\columnwidth}%
\begin{center}
\subfloat[\label{fig:OrTransFunc}The \textsc{$T_{\lambda+\omega+\rho}$} transition
set]{\centering{}\includegraphics[bb=0bp 0bp 375bp 110bp,clip,scale=0.4]{fscaFuncs}}
\par\end{center}%
\end{minipage}\caption{\label{fig:TransitionSets}\textsf{Transition sets for useful functions
in FSCs. Input combinations separated by a comma in the diagrams is
short-hand for distinct inputs which share the same resulting output
which appears after a semi-colon.}}
\end{figure}

\begin{lyxlist}{00.00.0000}
\item [{\textsc{$T_{\lambda}(q,r)$:}}] $\{((q,\lambda\omega\rho),(r,\lambda))\}$. 
\item [{\textsc{$T_{\rho}(q,r)$:}}] $\{((q,\lambda\omega\rho),(r,\rho))\}$. 
\item [{\textsc{$T_{\omega}(q,r)$:}}] $\{((q,\lambda\omega\rho),(r,\omega))\}$. 
\item [{\textsc{$T_{\overline{\omega}}(q,r)$:}}] $\{((q,\lambda\omega\rho),(r,\overline{\omega}))\}$. 
\item [{\textsc{$T_{\lambda+\omega+\rho}(q,r)$:}}] $\{((q,\lambda\omega\rho),(r,\lambda+\omega+\rho))\}$
where + denotes Boolean OR. 
\item [{\textsc{$T_{\alpha}(q,r)$:}}] $\{((q,\lambda\omega\rho),(r,\alpha))\:|\:\alpha\in\Sigma\}$. 
\end{lyxlist}
These transition sets are shown in \figref{TransitionSets} in diagrammatic
form.

\subsection{Evaluating 3-CNF formulas with FSCA}

Given these functions, it's not hard to imagine building a simple
FSCA capable of performing basic computational tasks, such as evaluating
a Boolean formula for instance. Variable assignments are input as
initial values in certain cells. Cells route these assignments to
positions in each clause of the formula using the transition sets
\textsc{$T_{\lambda}$}, \textsc{$T_{\rho}$}, and\textsc{ $T_{\omega}$,
}all to arrive at some fixed time step. Literals that are complements
of their variables would require the transition set \textsc{$T_{\overline{\omega}}$}.
When all variables are positioned correctly and complemented according
to the formula, the evaluation of the clauses begins. If the clauses
have only three literals (i.e. the formula is in 3-CNF), this can
be accomplished in one time step with the transition set \textsc{$T_{\lambda+\omega+\rho}$}
defined above. Those cells which do not compute the disjunction of
a clause can simply use a transition set $T_{\alpha}$ with $\alpha=0$.
The conjunction of clauses then follows, with each clause value being
routed again down to a single cell which holds the final formula value. 

We illustrate this concept with an example. Suppose we have a 3-CNF
formula over 5 variables in 4 clauses given by: 
\[
\phi=(x_{1}\vee x_{2}\vee\overline{x_{3}})\wedge(x_{1}\vee x_{3}\vee x_{5})\wedge(\overline{x_{2}}\vee\overline{x_{4}}\vee x_{5})\wedge(x_{3}\vee x_{4}\vee\overline{x_{5}}).
\]
 An FSCA $A$ might use twelve cells to evaluate $\phi$, one for
each literal. At time step 0, five specially selected cells would
be provided the assignments to $x_{1},\ldots,x_{5}$ while other cells
get 0. We will label the actual assignment values as $\alpha_{1},\ldots,\alpha_{5}$.
A computational history of $A$, showing the cell values at each time
step, might look like \tabref{ExFscaHistory}. Here, evaluation is
carried out only up to clause evaluation, at which point satisfiability
is clear. Notice that at time step 1, all the dummy 0 values are gone,
and each assignment value appears as many times as its variable appears
in $\phi$. From this point, routing is just a matter of permuting
the contents of the cells. At time step 6, all cells contain the correct
assignments ignoring negation. This is accounted for in step 7. Finally,
step 8 performs the ORing of clauses. Cells not evaluating a clause
get a fixed 0 value.

\begin{center}
\begin{table}
\begin{centering}
\begin{tabular}{|c|c|c|c|c|c|c|c|c|c|c|c|c|}
\hline 
 & \multicolumn{12}{c|}{cell}\tabularnewline
\hline 
\emph{t} & 1 & 2 & 3 & 4 & 5 & 6 & 7 & 8 & 9 & 10 & 11 & 12\tabularnewline
\hline 
\hline 
0 & \textcolor{red}{$\alpha_{1}$} & \texttt{\textsc{0}} & \textcolor{green}{$\alpha_{2}$} & \texttt{\textsc{0}} & \texttt{\textsc{0}} & \textcolor{blue}{$\alpha_{3}$} & \texttt{\textsc{0}} & \textcolor{magenta}{$\alpha_{4}$} & \texttt{\textsc{0}} & \texttt{\textsc{0}} & $\alpha_{5}$ & \texttt{\textsc{0}}\tabularnewline
\hline 
1 & \textcolor{red}{$\alpha_{1}$} & \textcolor{red}{$\alpha_{1}$} & \textcolor{green}{$\alpha_{2}$} & \textcolor{green}{$\alpha_{2}$} & \textcolor{blue}{$\alpha_{3}$} & \textcolor{blue}{$\alpha_{3}$} & \textcolor{blue}{$\alpha_{3}$} & \textcolor{magenta}{$\alpha_{4}$} & \textcolor{magenta}{$\alpha_{4}$} & $\alpha_{5}$ & $\alpha_{5}$ & $\alpha_{5}$\tabularnewline
\hline 
2 & \textcolor{red}{$\alpha_{1}$} & \textcolor{green}{$\alpha_{2}$} & \textcolor{red}{$\alpha_{1}$} & \textcolor{blue}{$\alpha_{3}$} & \textcolor{green}{$\alpha_{2}$} & \textcolor{blue}{$\alpha_{3}$} & \textcolor{blue}{$\alpha_{3}$} & \textcolor{magenta}{$\alpha_{4}$} & \textcolor{magenta}{$\alpha_{4}$} & $\alpha_{5}$ & $\alpha_{5}$ & $\alpha_{5}$\tabularnewline
\hline 
3 & \textcolor{red}{$\alpha_{1}$} & \textcolor{green}{$\alpha_{2}$} & \textcolor{blue}{$\alpha_{3}$} & \textcolor{red}{$\alpha_{1}$} & \textcolor{blue}{$\alpha_{3}$} & \textcolor{green}{$\alpha_{2}$} & \textcolor{magenta}{$\alpha_{4}$} & \textcolor{blue}{$\alpha_{3}$} & $\alpha_{5}$ & \textcolor{magenta}{$\alpha_{4}$} & $\alpha_{5}$ & $\alpha_{5}$\tabularnewline
\hline 
4 & \textcolor{red}{$\alpha_{1}$} & \textcolor{green}{$\alpha_{2}$} & \textcolor{blue}{$\alpha_{3}$} & \textcolor{red}{$\alpha_{1}$} & \textcolor{blue}{$\alpha_{3}$} & \textcolor{green}{$\alpha_{2}$} & \textcolor{magenta}{$\alpha_{4}$} & $\alpha_{5}$ & \textcolor{blue}{$\alpha_{3}$} & $\alpha_{5}$ & \textcolor{magenta}{$\alpha_{4}$} & $\alpha_{5}$\tabularnewline
\hline 
5 & \textcolor{red}{$\alpha_{1}$} & \textcolor{green}{$\alpha_{2}$} & \textcolor{blue}{$\alpha_{3}$} & \textcolor{red}{$\alpha_{1}$} & \textcolor{blue}{$\alpha_{3}$} & \textcolor{green}{$\alpha_{2}$} & $\alpha_{5}$ & \textcolor{magenta}{$\alpha_{4}$} & $\alpha_{5}$ & \textcolor{blue}{$\alpha_{3}$} & \textcolor{magenta}{$\alpha_{4}$} & $\alpha_{5}$\tabularnewline
\hline 
6 & \textcolor{red}{$\alpha_{1}$} & \textcolor{green}{$\alpha_{2}$} & \textcolor{blue}{$\alpha_{3}$} & \textcolor{red}{$\alpha_{1}$} & \textcolor{blue}{$\alpha_{3}$} & $\alpha_{5}$ & \textcolor{green}{$\alpha_{2}$} & \textcolor{magenta}{$\alpha_{4}$} & $\alpha_{5}$ & \textcolor{blue}{$\alpha_{3}$} & \textcolor{magenta}{$\alpha_{4}$} & $\alpha_{5}$\tabularnewline
\hline 
7 & \textcolor{red}{$\alpha_{1}$} & \textcolor{green}{$\alpha_{2}$} & \textcolor{blue}{$\overline{\alpha_{3}}$} & \textcolor{red}{$\alpha_{1}$} & \textcolor{blue}{$\alpha_{3}$} & $\alpha_{5}$ & \textcolor{green}{$\overline{\alpha_{2}}$} & \textcolor{magenta}{$\overline{\alpha_{4}}$} & $\alpha_{5}$ & \textcolor{blue}{$\alpha_{3}$} & \textcolor{magenta}{$\alpha_{4}$} & $\overline{\alpha_{5}}$\tabularnewline
\hline 
8 & \texttt{\textsc{0}} & \textcolor{cyan}{$c_{1}$} & \texttt{\textsc{0}} & \texttt{\textsc{0}} & \textcolor{cyan}{$c_{2}$} & \texttt{\textsc{0}} & \texttt{\textsc{0}} & \textcolor{cyan}{$c_{3}$} & \texttt{\textsc{0}} & \texttt{\textsc{0}} & \textcolor{cyan}{$c_{4}$} & \texttt{\textsc{0}}\tabularnewline
\hline 
\end{tabular}
\par\end{centering}

\textsf{\caption{\textsf{\label{tab:ExFscaHistory}Computational history of $A$ for
formula $\phi$ with assignments $\alpha_{1},\ldots,\alpha_{5}$.
$c_{1},\ldots,c_{4}$ give the values of each clause in $\phi$ for
these assignments.}}
}
\end{table}

\par\end{center}

To formalize this example, we give the following theorem and constructive
proof in which we consider 3-CNF formulas where no clause uses the
same variable twice and no clause appears more than once (as these
are easily reducible.) The theorem will only cover evaluation of the
clauses of the formula. We adopt the following conventions in all
pseudocode:
\begin{itemize}
\item {[}\ {]} denotes array/table indexing
\item for a 3-CNF formula $\phi$, $\langle\phi\rangle$ denotes an encoding
of $\phi$ as an array of literals such that $\langle\phi\rangle[i]\vee\langle\phi\rangle[i+1]\vee\langle\phi\rangle[i+2]$
is a clause in $\phi$ if $i\equiv1\bmod3$. Each literal $\ell$
in $\phi$ is encoded as a pair $\langle j,c\rangle$ where $x_{j}$
is a variable in $\phi$ and $c=1$ if $\ell=\overline{x_{j}}$ or
$c=0$ if $\ell=x_{j}$. $\langle\phi\rangle[8][1]$ is then the encoded
variable of the second literal in the third clause of $\phi$.
\item Variables will appear as single symbols or as ``$Variable$.''
\item Sub-algorithms and procedures will appear as ``\textsc{SubAlgorithm}.''
\end{itemize}
\begin{thm}
\label{thm:AevalsPhi}Let $\phi$ be a 3-CNF formula with $c$ unique
clauses and $v$ variables $x_{1},x_{2},\ldots,x_{v}$ such that no
variable appears twice in the same clause. There exists a simple FSCA
$A$ that evaluates the clauses of $\phi$ for any encoded assignments
$\alpha_{1},\alpha_{2},\ldots,\alpha_{v}$.\end{thm}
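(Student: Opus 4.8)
The plan is to give a direct constructive proof that generalizes the computational history of \tabref{ExFscaHistory}: I will exhibit a specific simple FSCA $A$ whose cells are pre-programmed to route, duplicate, complement, and finally combine the variable assignments into clause values. I would allocate $3c$ cells and fix a correspondence between cells and literal slots, letting cell $3(k-1)+j$ (for $j\in\{1,2,3\}$) hold the value of the $j$-th literal of the $k$-th clause at a designated evaluation time, with the middle cell $3(k-1)+2$ reserved to hold the clause value. Because each cell is \emph{simple}, its sequence of time-step functions is fixed by $\delta'$ in advance and is wholly independent of the values flowing through it; the entire difficulty therefore reduces to exhibiting one value-oblivious schedule of local operations that delivers the correct assignment to every slot, after which the clause evaluation is immediate.

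First I would specify the initial configuration: place each assignment $\alpha_i$ in a single designated source cell and set every other cell to $0$ (using $T_{\alpha}$ with $\alpha=0$ at time $0$). The core of the argument is a routing schedule built only from the primitives $T_{\lambda}$, $T_{\rho}$, and $T_{\omega}$ already available. I would separate routing cleanly in time into two phases. In a fan-out phase, each source value is duplicated into as many copies as the number of occurrences of its variable in $\phi$; duplication across an adjacent cell is achieved by having that neighbor run $T_{\lambda}$ (read left) or $T_{\rho}$ (read right) while the source holds its value with $T_{\omega}$. In a permutation phase, the copies are moved into clause order. Any fixed permutation of cells on a cycle can be written as a composition of adjacent transpositions, and a single adjacent transposition of cells $i$ and $i+1$ is realized in one step by letting cell $i$ run $T_{\rho}$ and cell $i+1$ run $T_{\lambda}$. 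Since the required permutation is determined by $\phi$ at construction time, this sequence is hardwired into the cells' programs, and nearest-neighbor permutation routing on an array of $O(c)$ cells completes in $O(c)$ time steps.

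Once all slots hold the correct variable value, ignoring negation, at an agreed time $t^{*}$ (as in step $6$ of \tabref{ExFscaHistory}), I would complement exactly those slots whose literal is negated by scheduling $T_{\overline{\omega}}$ for them on the next transition and $T_{\omega}$ for the rest, reproducing step $7$. On the following transition the clauses are evaluated: for each clause I designate its middle cell to run $T_{\lambda+\omega+\rho}$, computing the disjunction of the three adjacent literal values, while the remaining cells run $T_{\alpha}$ with $\alpha=0$, so that each clause value $c_k$ occupies exactly one cell, reproducing step $8$. The correctness argument is then an induction over time steps: using the defining identities of the transition sets, I would show that the intended value occupies each cell after each step. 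The normalization hypotheses (no variable appears twice in a clause, no clause appears twice) guarantee that the source-to-slot map is well defined with the required number of copies, which keeps the fan-out and permutation collision-free.

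I expect the main obstacle to be making the routing schedule fully rigorous rather than merely plausible, specifically arguing that a single fixed, input-independent program per cell performs fan-out and then permutation without copies overwriting one another before reaching their destinations. The cleanest way to discharge this is exactly the temporal separation above: complete all duplication before beginning any transposition, so that the permutation phase acts on a fixed multiset of values and reduces to the standard fact that nearest-neighbor routing of a known permutation on a linear or cyclic array takes time linear in its length. This both bounds the running time at $O(c)$ steps and certifies that $A$ is simple, since every transition set used here ($T_{\lambda}$, $T_{\rho}$, $T_{\omega}$, $T_{\overline{\omega}}$, $T_{\lambda+\omega+\rho}$, and $T_{\alpha}$) has an input-independent next-state map.
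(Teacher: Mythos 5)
Your proposal is correct and follows essentially the same route as the paper's proof: the same $3c$-cell layout with the middle cell of each triple computing the clause OR, a value-oblivious fan-out phase built from $T_{\omega}$, $T_{\lambda}$, $T_{\rho}$ (the paper's \textsc{GrowSpans} with its split procedures), a hardwired adjacent-transposition permutation phase (which the paper instantiates concretely as odd-even transposition sort in \textsc{OrderAssignments}), then complementation via $T_{\overline{\omega}}$ and clause evaluation via $T_{\lambda+\omega+\rho}$, with correctness by the same induction showing the intended value occupies each cell at each step and simplicity following because every transition set used has an input-independent next-state map. The only difference is presentational: where you appeal to the generic fact that nearest-neighbor routing of a permutation known at construction time takes $O(c)$ passes, the paper fixes a specific schedule (odd-even sort) and pre-allocates each variable a contiguous span of size equal to its occurrence count, which is what makes your claimed collision-freeness of the fan-out explicit.
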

\begin{proof}
The proof is by construction. We wish to show we can create a machine
whose computation will correctly evaluate $\phi$. To do so, we will
perform the computation in the abstract and construct the machine
to mirror each step. The construction will follow the same steps as
the example above: 
\begin{enumerate}
\item Create an array of cells over which to perform the computation. Every
group of 3 cells will correspond to a clause.
\item Determine the cells which will accept the assignments as initial values. 
\item Duplicate the assignment values so there is one copy for each literal
that needs it. 
\item Distribute those values to their positions in the clauses.
\item Account for negation in the literals where necessary
\item OR the literals in the same clause together.
\end{enumerate}
Consider \algref{3SatToFsca}, \nameref{alg:3SatToFsca}, \vpageref{alg:3SatToFsca}
and its sub-algorithms \nameref{alg:GrowSpans} (\algref{GrowSpans}
\vpageref{alg:GrowSpans}), \nameref{alg:FindDestinations} (\algref{FindDestinations}
\vpageref{alg:FindDestinations}), and \nameref{alg:OrderAssignments}
(\algref{OrderAssignments} \vpageref{alg:OrderAssignments}) which
we now sketch. \nameref{alg:3SatToFsca} takes as input a formula
$\phi$ and constructs a machine which takes as input $3c$ Boolean
values, $v$ of which are the assignments to variables. As we cannot
know what assignments will be provided, the algorithm simply identifies
the locations which will contain an assignment $\alpha_{j}$ by storing
the value $j$. The algorithm first constructs the FSCA cells and
computes the indices of cells whose initial values will be the assignments
to evaluate. These indices are determined by the number of times each
assignment is needed, and that number is stored in the array $Count$.
If, for example, some $\alpha_{j}$ is used five times in the formula,
it will be assigned a starting position $p[j]$ that is in the middle
of a span of five cells, and the span for $\alpha_{j+1}$ will immediately
follow. The array $V$ will store the contents of each cell---either
an index $j$ of the indeterminate assignment value, or 0. The values
in $V$ at each stage of the construction indicate the value of the
FSCA at the corresponding stage of its computational history.

\nameref{alg:GrowSpans} replicates the initial assignment values
for each cell in that assignment's span so that there are as many
cells with an assignment $\alpha_{j}$ as there are appearances of
its variable $x_{j}$ in $\phi$. As it does so, it adds states and
transitions to the cells of the constructed FSCA $A$ which perform
the same copy operation. When complete, all cells will contain the
assignment for the span to which they are assigned.

\nameref{alg:FindDestinations} examines where in $\phi$ each assignment
copy in $V$ is needed and assigns that position as a destination.
These destinations are stored in the array $D$ which is one-to-one
with $V$. If each assignment in $V$ were moved to its corresponding
destination in $D$, it would be in the same position as its variable
in $\phi$. 

Creating the machinery to actually move these elements is the responsibility
of \nameref{alg:OrderAssignments}. This algorithm uses an Odd-Even
sort on the destinations in $D$ to re-order the contents in $V$.
While doing so, it creates states and transitions in the cells of
$A$ to perform the same reordering. The result is the list of assignments
which exactly mirrors the variables in $\phi$.

Finally, \nameref{alg:3SatToFsca} adds states and transitions to
account for complementing literals that require it, and then for evaluating
clauses.

These algorithms are in turn aided by \nameref{pro:SplitLeft} (\proref{SplitLeft}
\vpageref{pro:SplitLeft}), \nameref{pro:SplitRight} (\proref{SplitRight}
\vpageref{pro:SplitRight}), \nameref{pro:SplitLeftRight} (\proref{SplitLeftRight}
\vpageref{pro:SplitLeftRight}), \nameref{pro:Swap} (\proref{Swap}
\vpageref{pro:Swap}), and \nameref{pro:Remain} (\proref{Remain}
\vpageref{pro:Remain}). These procedures perform the moving of elements
in $V$ paired with creating states and transitions to perform the
same moving of values in the corresponding cells of $A$. This helps
to ensure the computational history of $A$ will mirror what has happened
in $V$. 

\begin{algorithm}
\textbf{Input:} $\langle\phi\rangle$, an encoding of a 3-CNF Boolean
formula over the $v$ variables $x_{1},x_{2},\ldots,x_{v}$.

\textbf{Output:} FSCA $A$, and $k\in\Nat$ 
\begin{algor}[1]
\item [{.}] $n=3|\langle\phi\rangle|$
\item [{.}] \textbf{for} $\; j=1,\ldots,v\;$ \textbf{do} $\; Count[i]\gets0$
\item [{for}] $i=1,\ldots,n$ \{\{initialize data structures used to create
$A$\}\}

\begin{algor}[1]
\item [{.}] $Count[\langle\phi\rangle[i][1]]\gets Count[\langle\phi\rangle[i][1]]+1$
\item [{.}] $V[i]\gets0$
\item [{.}] Create an initial state $q_{i,0}$
\item [{.}] $Q_{i}\gets\{q_{i,0}\}$
\item [{.}] Create a finite state cell $a_{i}$ with output $\omega_{i},Q=Q_{i},\Sigma=\{0,1\},q_{s}=q_{i,0},\delta=\emptyset$
\item [{.}] $Last[i]\gets q_{i,0}$
\end{algor}
\item [{endfor}]~
\item [{.}] $i\gets1$
\item [{for}] $j=1,\ldots,|Count|$ \{\{determine initial positions for
variables\}\}

\begin{algor}[1]
\item [{.}] $p[j]\gets i+\left\lfloor (Count[j]-1)/2\right\rfloor $
\item [{.}] $V[p[j]]\gets j$
\item [{.}] $i\gets i+Count[j]$
\end{algor}
\item [{endfor}]~
\item [{.}] $\lambda_{1}\gets\omega_{n}$; $\rho_{1}\gets\omega_{2}$ \{\{connect
all cells using cyclic boundaries\}\}
\item [{.}] $\rho_{n}\gets\omega_{1}$; $\lambda_{n}\gets\omega_{n-1}$
\item [{for}] $i=2,\ldots,n-1$

\begin{algor}[1]
\item [{.}] $\lambda_{i}\gets\omega_{i-1}$
\item [{.}] $\rho_{i}\gets\omega_{i+1}$
\end{algor}
\item [{endfor}]~
\item [{.}] $A\gets\langle a_{1},a_{2},\ldots,a_{n}\rangle$
\item [{.}] $t\gets$\nameref{alg:GrowSpans}$(A,V,p,Count,Last)$
\item [{.}] $D\gets$\nameref{alg:FindDestinations}$(\langle\phi\rangle,v)$
\item [{.}] $t\gets$\nameref{alg:OrderAssignments}$(A,V,D,t,Last)$
\item [{for}] $i=1,\ldots,n$ \{\{account for complementation, evaluate
all clauses\}\}

\begin{algor}[1]
\item [{.}] create states $q_{i,t+1},q_{i,t+2}$
\item [{.}] $Q_{i}\gets Q_{i}\cup\{q_{i,t+1},q_{i,t+2}\}$
\item [{if}] $\langle\phi\rangle[i][2]=1$ 

\begin{algor}[1]
\item [{.}] $\delta_{i}\gets\delta_{i}\cup\: T_{\overline{\omega}}(Last[i],q_{i,t+1})$
\end{algor}
\item [{else}]~

\begin{algor}[1]
\item [{.}] $\delta_{i}\gets\delta_{i}\cup\: T_{\omega}(Last[i],q_{i,t+1})$
\end{algor}
\item [{endif}]~
\item [{if}] $i\equiv2\mod3$

\begin{algor}[1]
\item [{.}] $\delta_{i}\gets\delta_{i}\cup\: T_{\lambda+\omega+\rho}(q_{i,t+1},q_{i,t+2})$
\end{algor}
\item [{else}]~

\begin{algor}[1]
\item [{.}] $\delta_{i}\gets\delta_{i}\cup\: T_{0}(q_{i,t+1},q_{i,t+2})$
\end{algor}
\item [{endif}]~
\item [{.}] $\delta_{i}\gets\delta_{i}\cup\: T_{\omega}(q_{i,t+2},q_{i,t+2})$
\{\{self-transition forever with same value\}\}
\end{algor}
\item [{endfor}]~
\item [{.}] \textbf{return} $(A,t+2)$
\end{algor}
\caption{\textsc{\label{alg:3SatToFsca}3-SatToFsca}}
\end{algorithm}

\begin{algorithm}
\textbf{Input:} $A,V,p,Count,Last$ where:

\quad{}$A$ is the array of cells, 

\quad{}$V$ is the array of variables held in each cell,

\quad{}$p$ is the starting cell index of each variable,

\quad{}$Count$ is the number of occurrences of each variable,

\quad{}$Last$ is the last state created in each $a_{i}\in A$.

\textbf{Output:} $t$, the number of time steps used 
\begin{algor}[1]
\item [{for}] $j=1,\ldots,|Count|$

\begin{algor}[1]
\item [{.}] $Need[j]\gets Count[j]-1$
\item [{.}] $Span[j]\gets\langle p[j],p[j]\rangle$
\end{algor}
\item [{endfor}]~
\item [{.}] $t\gets0$
\item [{while}] $\sum_{N\in Need}N\:>0$

\begin{algor}[1]
\item [{.}] $t\gets t+1$
\item [{.}] $i\gets1$
\item [{for}] $j=1,\ldots,|Need|$

\begin{algor}[1]
\item [{if}] $Need[j]=0$ \{\{skip to next span\}\}

\begin{algor}[1]
\item [{.}] \textbf{next} $j$ 
\end{algor}
\item [{elseif}] $Need[j]\geq2$

\begin{algor}[1]
\item [{for}] $k=i,\ldots,Span[j][1]-2$ \{\{visit all elements since last
split up to next split\}\}

\begin{algor}[1]
\item [{.}] \textsc{Remain$(A,k,t,Last)$}
\end{algor}
\item [{endfor}]~
\item [{if}] $Span[j][1]=Span[j][2]$ \{\{for first split, grow in both
directions\}\}

\begin{algor}[1]
\item [{.}] \textsc{SplitLeftRight$(A,V,Span[j][1],t,Last)$}
\item [{.}] $Span[j]\gets\langle Span[j][1]-1,Span[j][2]+1\rangle$ \{\{adjust
both ends of span\}\}
\item [{.}] $Need[j]\gets Need[j]-2$
\item [{.}] $i\gets Span[j][2]+1$ \{\{next element to consider is just
right of the new span end\}\}
\item [{.}] \textbf{next} $j$ \{\{begin work on the next span\}\}
\end{algor}
\item [{else}]~

\begin{algor}[1]
\item [{.}] \textsc{SplitLeft$(A,V,Span[j][1],t,Last)$ }\{\{Grow the left
side\}\}
\item [{.}] $Span[j]\gets\langle Span[j][1]-1,Span[j][2]\rangle$ \{\{adjust
the left span end\}\}
\item [{.}] $Need[j]\gets Need[j]-1$
\item [{.}] $i\gets Span[j][1]+2$
\end{algor}
\item [{endif}]~
\end{algor}
\item [{endif}]~
\item [{.}] \{\{$Need[j]>0$ implies the right end needs to grow, whether
or not left end grows\}\}
\item [{for}] $k=i,\ldots,Span[v][2]-1$ 

\begin{algor}[1]
\item [{.}] \textsc{Remain$(A,k,t,Last)$} \{\{keep current values up to
end of span\}\}
\end{algor}
\item [{endfor}]~
\item [{.}] \textsc{SplitRight$(A,V,Span[j][2],t,Last)$ }\{\{Grow right\}\}
\item [{.}] $Span[j]\gets\langle Span[j][1],Span[j][2]+1\rangle$ \{\{adjust
the right span end\}\}
\item [{.}] $Need[j]\gets Need[j]-1$
\item [{.}] $i\gets Span[j][2]+1$
\end{algor}
\item [{endfor}]~
\item [{for}] $k=i,\ldots,|V|$

\begin{algor}[1]
\item [{.}] \textsc{Remain$(A,k,t,Last)$ }\{\{keep current values to end
of array\}\}
\end{algor}
\item [{endfor}]~
\end{algor}
\item [{endwhile}]~
\item [{.}] \textbf{return} $t$
\end{algor}
\caption{\textsc{\label{alg:GrowSpans}GrowSpans}}
\end{algorithm}

\begin{algorithm}
\textbf{Input:} $\langle\phi\rangle,v$ where:

\quad{}$\langle\phi\rangle$ is the 3-CNF formula, encoded as an
array of literals, whose variables are targets for each cell,

\quad{}$v$ is the number of variables in $\phi$.

\begin{raggedright}
\textbf{Output:} $D$, an array of destinations for each cell.
\par\end{raggedright}
\begin{algor}[1]
\item [{.}] \textbf{for }$\; j=1,\ldots,v\;$ \textbf{do} $\; Dest[j]\gets[\,]$
\item [{for}] $i=1,\ldots,|\langle\phi\rangle|$

\begin{algor}[1]
\item [{.}] $Dest[\langle\phi\rangle[i][1]]\gets Dest[\langle\phi\rangle[i][1]]\;||\:[i]$
\{\{append destinations for each variable to its own list\}\}
\end{algor}
\item [{endfor}]~
\item [{.}] $i\gets0$
\item [{forall}] $d\in Dest$ \{\{concatenate the destinations\}\}

\begin{algor}[1]
\item [{for}] $j=i,\ldots,|d|$ 

\begin{algor}[1]
\item [{.}] $D[i+j]\gets d[j]$
\end{algor}
\item [{endfor}]~
\item [{.}] $i\gets i+|d|$
\end{algor}
\item [{endfor}]~
\item [{.}] \textbf{return} $D$
\end{algor}
\caption{\textsc{\label{alg:FindDestinations}FindDestinations}}
\end{algorithm}

\begin{algorithm}
\textbf{Input:} $A,V,D,t,Last$ where:

\quad{}$A$ is the array of cells, 

\quad{}$V$ is the array of variables held in each cell,

\quad{}$D$ is the array of destinations for each variable in $V$,

\quad{}$t$ is the time step at which distributing variables begins.

\quad{}$Last$ is the array of last states created in each $a_{i}\in A$.

\begin{raggedright}
\textbf{Output:} $t$, the time step at which all variables are at
their destinations.
\par\end{raggedright}
\begin{algor}[1]
\item [{.}] $Sorted\gets\mathtt{false}$ 
\item [{while}] not $Sorted$

\begin{algor}[1]
\item [{.}] $Sorted\gets\mathtt{true}$
\item [{.}] $t\gets t+1$
\item [{for}] $i=1$ to $|V|-1$ in steps of $2$

\begin{algor}[1]
\item [{if}] $D[i]>D[i+1]$

\begin{algor}[1]
\item [{.}] \textsc{Swap$(A,V,D,i,t,Last)$}
\item [{.}] $Sorted\gets\mathtt{false}$
\end{algor}
\item [{else}]~

\begin{algor}[1]
\item [{.}] \textsc{Remain$(A,i,t,Last)$}
\item [{.}] \textsc{Remain$(A,i+1,t,Last)$}
\end{algor}
\item [{endif}]~
\end{algor}
\item [{endfor}]~
\item [{if}] $|V|\equiv0\bmod2$

\begin{algor}[1]
\item [{.}] \textsc{Remain$(A,|V|,t,Last)$}
\end{algor}
\item [{endif}]~
\item [{if}] $Sorted$

\begin{algor}[1]
\item [{.}] \textbf{break}
\end{algor}
\item [{endif}]~
\item [{.}] $Sorted\gets\mathtt{true}$
\item [{.}] $t\gets t+1$
\item [{.}] \textsc{Remain$(A,1,t,Last)$}
\item [{for}] $i=2$ to $|V|-1$ in steps of $2$

\begin{algor}[1]
\item [{if}] $D[i]>D[i+1]$

\begin{algor}[1]
\item [{.}] \textsc{Swap$(A,V,D,i,t,Last)$}
\item [{.}] $Sorted\gets\mathtt{false}$
\end{algor}
\item [{else}]~

\begin{algor}[1]
\item [{.}] \textsc{Remain$(A,i,t,Last)$}
\item [{.}] \textsc{Remain$(A,i+1,t,Last)$}
\end{algor}
\item [{endif}]~
\end{algor}
\item [{endfor}]~
\item [{if}] $|V|\equiv1\bmod2$

\begin{algor}[1]
\item [{.}] \textsc{Remain$(A,|V|,t,Last)$}
\end{algor}
\item [{endif}]~
\end{algor}
\item [{endwhile}]~
\end{algor}
\caption{\textsc{\label{alg:OrderAssignments}OrderAssignments}}
\end{algorithm}

We will develop the proof through a series of lemmata which establish
some necessary properties. 
\begin{lem}
\label{lem:VisPhi}\nameref{alg:3SatToFsca} computes an evolution
of $V$ from an initial condition, having a single element for each
encoded assignment and $\mathtt{0}$s otherwise, to a final condition
at the return of \nameref{alg:OrderAssignments} in which $V$ contains
the variables of $\phi$ matching in order of appearance without regard
to negation. That is, \nameref{alg:3SatToFsca} evolves $V$ to the
value $\langle\langle\phi\rangle[1][1],\langle\phi\rangle[2][1],\ldots,\langle\phi\rangle[n][1]\rangle$.\end{lem}
\begin{proof}
Let the function $\#(j)$ be defined as the number of occurrences
of $x_{j}$ in $\phi$ and so must have value greater than 0 for all
$j$. This is also the number of occurrences of $\alpha_{j}$ which
will be needed to evaluate $\phi$. We first show that $V$'s initialization
consists only of $\mathtt{0}$s and a single encoded element for each
$\alpha_{j}$. It is clear that $Count[j]\equiv\#(j)$ for $1\leq j\leq v$
since $Count[j]$ is initialized to 0 in line 2 and is increased on
line 4 of \nameref{alg:3SatToFsca} by 1 for every occurrence of $j$
(the encoding of $x_{j}$) in $\langle\phi\rangle$\textsc{. }We can
also see that lines 10-14 create a span in $V$ of length $\#(j)$
having the contents 
\[
\mathtt{0}^{\left\lfloor \frac{1}{2}\left(\#(j)-1\right)\right\rfloor }j\mathtt{0}^{\left\lceil \frac{1}{2}\left(\#(j)-1\right)\right\rceil }
\]
since: $V$ is initialized to all 0s on line 5; $j$ ranges over every
variable which was counted in $\langle\phi\rangle$; $p[j]$ is initialized
to the position $i+\left\lfloor (\#(j)-1)/2\right\rfloor $ on line
12, and $V[p[j]]$ is assigned $j$ on line 13. These spans begin
at index 1 in $V$ by line 10, and are all adjacent since each time
through the loop, $i$ is assigned $i+Count[j]$. Therefore, $V$
on completing initialization\textsc{ }has the form
\begin{equation}
\left\langle \mathtt{0}^{\left\lfloor \frac{1}{2}\left(\#(j)-1\right)\right\rfloor }j\mathtt{0}^{\left\lceil \frac{1}{2}\left(\#(j)-1\right)\right\rceil }\right\rangle _{j=1}^{v}\label{eq:VInit}
\end{equation}
so that each assignment appears exactly once. This satisfies the initial
condition of the theorem. 

We next show that on return from \nameref{alg:GrowSpans}, $V$ has
the value $\langle1^{\#(1)},2^{\#(2)},\ldots,v^{\#(v)}\rangle$ where
$1,\ldots,v$ encode $\alpha_{1},\ldots,\alpha_{v}$ and $b^{k}$
denotes $k$ sequential occurrences of $b$ for $k>0$ and the empty
string otherwise. In \nameref{alg:GrowSpans}, clearly $Need[j]$
is initialized to $\#(j)-1$ for $1\leq j\leq v$ by line 2. Since
each span around the assignments in $V$ has exactly one non-zero
element, $Need[j]$ counts the number of $\mathtt{0}$s in each assignment's
span. Notice that if $\#(j)$ is odd, $\#(j)-1$ is even and so evenly
divisible by 2, meaning there are an equal number of $\mathtt{0}$s
on either side of an assignment $j$. Conversely, If $\#(j)$ is even,
there is one more $\mathtt{0}$ on the right side of the assignment
$j$. It is also clear that $Span[j]$ is initialized with a pair
of indices of the first and last occurrence of $j$ in $V$. 

Let $\Delta(\langle x,y\rangle)=y-x+1$ be a function which computes
the number of elements in a span. We can see by induction that $\#(j)=\Delta(Span[j])+Need[j]$
for all $j$ at each pass through the while loop. For the base case
of $\#(j)=1$, we have $Span[j]=\langle p[j],p[j]\rangle$ so $\Delta(Span[j])=1$
and $Need[j]=0$ for all $j$ by line 2. Assume the relation holds
at an arbitrary pass through the loop. On the next pass, there are
three possibilities:
\begin{itemize}
\item $Need[j]\geq2$, for which their are two possibilities: $\Delta(Span[j])=1$
and so \textsc{SplitLeftRight} is called, or $\Delta(Span[j])>1$
and \textsc{SplitLeft} then\textsc{ SplitRight} is called. For the
first of these, a $\mathtt{0}$ is replaced with $j$ on both ends
of the span, $\Delta(Span[j])$ is increased by 2, and $Need[j]$
is decreased by 2. No further action is taken for this case since
the loop is abbreviated on line 19. For the second possibility, the
loop first replaces a single $\mathtt{0}$ on the left end of the
span by calling \textsc{SplitLeft} and then increases $\Delta(Span[j])$
by one and decreases $Need[j]$ by one. The loop then continues until
it calls \textsc{SplitRight,} replacing a single $\mathtt{0}$ on
the right end of the span, and then increases $\Delta(Span[j])$ by
one and decreases $Need[j]$ by one. In either possibility, $\Delta(Span[j])$
is increased by the same amount as $Need[j]$ is decreased, so their
sum is unchanged and the relation holds. Note also that if there were
one $\mathtt{0}$ more on the right side of the span, that will still
be the case after the pass through the loop since growth in this case
is symmetric.
\item $Need[j]=1$, in which case there must be a single $\mathtt{0}$ on
the right end of the span. The algorithm calls\textsc{ SplitRight},
$\Delta(Span[j])$ is increased by one on the right end, and $Need[j]$
is decreased by one. Since $Span[j]$ is increased by the same amount
$Need[j]$ is decreased, their sum is unchanged and the relation holds.
\item $Need[j]=0$, in which case no changes are made to $Span[j]$ or $Need[j]$
and so the relation still holds.
\end{itemize}
The for loop on line 8 repeats this process for all $j\leq v$. Since
the inductive step holds, the relation is true for every iteration
of the while loop.

It is clear that each $Span[j]$ grows by replacing $\mathtt{0}$s
with $j$ each pass through the loop until $\Delta(Span[j])=\#(j)$
and $Need[j]=0$. Recall that $Need[j]$ began as the number of $\mathtt{0}$s
in span $j$. Therefore when $Need[j]=0$, the span must contain $j^{\#(j)}$,
and so $V$ must contain $\langle1^{\#(1)},2^{\#(2)},\ldots,v^{\#(v)}\rangle$
when $\Sigma_{j}Need[j]=0$.

Next, we look at \textsc{$D$} returned by\textsc{ \nameref{alg:FindDestinations}}
and show that $V[i]=j$ if and only if $\langle\phi\rangle[D[i]][1]=j$.
First, Assume $V[i]=j$ and let $i-k$ be the left-most occurrence
of $j$ in $V$ for some $k$. Then there must be $j-1$ other assignment
spans that appear to the left of $V[i-k]$. By the construction of
$D$, $D[i]$ must be in $Dest[j]$ created on line 3 since $D[1,\ldots,i-k-1]$
is the concatenation of the first $j-1$ arrays in $Dest$ with the
$j^{\text{th}}$ array to follow. But $D[i]\in Dest[j]$ implies there
was discovered a literal in $\langle\phi\rangle$ having variable
$j$ at position $i$ by the construction of $Dest[j]$ on line 3.
Now suppose instead that $\langle\phi\rangle[D[i]][1]=j$. Then $Dest[j]$
contains $D[i]$, and so there is exactly $j-1$ assignment spans
in $V$ before index $i$. Being in the $j^{\text{th}}$ span of $V$,
$V[i]$ must have the value $j$. 

Lastly, we examine the result of \nameref{alg:OrderAssignments}.
Ignoring the calls to \nameref{pro:Remain}, which do not affect $D$
or $V$, it is clear that \nameref{alg:OrderAssignments} is exactly
Odd-Even Sort on the elements of $D$ with every swap carried out
by \nameref{pro:Swap}. Considering for the moment only the effect
on $D$ and $V$ in \nameref{pro:Swap}, it is clear that each change
to one is duplicated in the other so that if $D[i]$ and $D[i+1]$
are swapped on any given pass through the array, $V[i]$ and $V[i+1]$
are also swapped and not otherwise. Since each $D[i]$ contains the
index of a literal in $\langle\phi\rangle$ which requires the assignment
$V[i]$, sorting the elements of $D[i]$ so that the corresponding
elements in $V[i]$ are moved in exactly the same way must result
in $D=\langle1,2,\ldots,n\rangle$ and correspondingly $V=\langle\langle\phi\rangle[1][1],\langle\phi\rangle[2][1],\ldots,\langle\phi\rangle[n][1]\rangle$.
This proves the lemma.\end{proof}
\begin{lem}
\label{lem:k-2Passes}Let $(A,k)$ be the result of \nameref{alg:3SatToFsca}
on input $\langle\phi\rangle$. \nameref{alg:3SatToFsca} makes $k-2$
passes through the array $V$ and, for each pass, adds exactly one
state to each $a_{i}\in A$.\end{lem}
\begin{proof}
To see this, it is first useful to note that states are only added
in procedures \nameref{pro:Remain}, \nameref{pro:SplitLeftRight},
\nameref{pro:SplitLeft}, \nameref{pro:SplitRight}, and \nameref{pro:Swap}.
Further, each of these adds exactly one new state to the cells they
affect: \nameref{pro:Remain} affects only the cell $a_{i}$ for the
given parameter $i$, \nameref{pro:SplitLeftRight} affects cells
$a_{i-1}$, $a_{i}$, and $a_{i+1}$, \nameref{pro:SplitLeft} affects
cells $a_{i-1}$ and $a_{i}$, and \nameref{pro:SplitRight} and \nameref{pro:Swap}
both affect cells $a_{i}$ and $a_{i+1}$. In all cases, a single
new state $q_{i,t}$ is added to $Q_{i}$, a transition set from $Last[i]$
to $q_{i,t}$ is defined for all possible inputs to the cell, and
$Last[i]$ is updated to refer to $q_{i,t}$ for all affected $a_{i}$.
Therefore it is sufficient to show that one and only one of these
procedures is called to affect, or \emph{cover}, each cell on every
pass through $V$. For this, we need only to consider only \nameref{alg:GrowSpans}
and \nameref{alg:OrderAssignments} as $V$ is not accessed elsewhere
after initialization. 

In \nameref{alg:GrowSpans}, if $\Sigma_{j}Need[j]=0$, then the number
of passes equals the number of time steps added to $A$ since both
are 0. Otherwise, for each iteration of the while loop, there must
be a next and, independently, a last changed span $j$ in $V$. Assume
$i$ is set to the index of the first position not yet covered by
one of the state-adding procedures. Suppose the next changed span
$j$ has $Need[j]\geq2$. Then all positions from $i$ to $Span[j][1]-2$
remain unchanged in $V$, and \nameref{pro:Remain} is called for
each of those positions in the loop on line 12. The element of $V$
at position $Span[j][1]-1$ is then changed in one of two ways: either
$\Delta(Span[j])=1$ and so \nameref{pro:SplitLeftRight} is called,
covering positions $Span[j][1]-1,\ldots,Span[j][1]+1$ and $i$ is
set one passed the new end of the span; or $\Delta(Span[j]))>1$,
in which case \nameref{pro:SplitLeft} is called to cover positions
$Span[j][1]-1$ and $Span[j][1]$ and $i$ gets assigned $Span[j][1]+1$.
In this second case, the algorithm will call \nameref{pro:Remain}
for every position up to $Span[j][2]-1$, then cover positions $Span[j][2]$
and $Span[j][2]+1$ with a call to \nameref{pro:SplitRight}, and
finally set $i$ to one passed the new end of the span. In either
case, all positions from the starting value of $i$ up to the new
end of the span are covered and $i$ is set to the first position
not yet covered. This is the condition in which we began.

Suppose instead that the next changed span $j$ has $Need[j]=1$.
Then all positions from 1 to $Span[j][2]-1$ remain unchanged in $V$,
and \nameref{pro:Remain} is called for each of those positions in
the loop on line 26. Positions $Span[j][2]$ and $Span[j][2]+1$ will
be covered with a call to \nameref{pro:SplitRight}, and $i$ will
be set to one passed the new end of the span. Again, we are in the
starting condition. 

Since it is clear for a base case where $i=1$ and the next span to
change has any $Need[j]>0$ that all positions are covered from the
starting $i$ up to the new right end of the span, then by induction
we see that all spans are so covered.

Now suppose the last changed span has been covered. Then $i$ is set
to the first position not yet covered and no other spans with $Need[j]>0$
remain. Then all positions from $i$ to $|V|$ are covered by calls
to \nameref{pro:Remain} in the loop on line 32. Notice that no position
was covered more than once. Therefore, each pass through the while
loop adds exactly 1 state to each $a_{i}$.

Since $t$ is incremented each time through the while loop, and each
pass through the loop adds exactly one state to each cell, $t$ counts
the number of states added in \nameref{alg:GrowSpans}.

In \nameref{alg:OrderAssignments}, coverage is easier to see. Either
positions $i$ and $i+1$ are swapped or they are both covered by
a call to \nameref{pro:Remain} in both the odd and even phase of
the sort. In the even phase, position 1 is covered explicitly on line
18, and in both phases, any unpaired element at the end is covered
conditionally in lines 13 and 27 with calls to \nameref{pro:Remain}.
Therefore, each pass of \nameref{alg:OrderAssignments} adds exactly
1 state to each $a_{i}$.

Since $t$ is passed to \nameref{alg:OrderAssignments} holding the
number of states added in \nameref{alg:GrowSpans}, and $t$ is incremented
for each pass of the odd-even sort, and each pass adds exactly one
state to each cell, $t$ counts the number of states added in \nameref{alg:GrowSpans}
and \nameref{alg:OrderAssignments}. Since $k=t+2$ for the states
added to each cell by the loop on line 24, \nameref{alg:GrowSpans}
and \nameref{alg:OrderAssignments} make exactly $k-2$ passes through
$V$.\end{proof}
\begin{cor}
$A$ is simple.\end{cor}
\begin{proof}
Since only one state is added to each cell for each time step; each
newly added state becomes the last state; and only simple transitions
are added from the last state to any new state, it is easy to see
by induction that $A$ is simple for all $t\leq k-2$. Finally, it
is clear that the last three transition sets added by lines 27-35
only take $a_{i}$ from $Last[i]$ to $q_{i,t+1}$, from $q_{i,t+1}$
to $q_{i,t+2}$, and from $q_{i,t+2}$ to $q_{i,t+2}$ respectively.
Therefore, $A$ is simple for all $t$. 
\end{proof}
\begin{figure*}[th!]

\begin{minipage}[t]{0.48\columnwidth}%
\begin{procedure}[H]
\begin{raggedright}
\textbf{Parameters:} $A,V,i,t,Last$
\par\end{raggedright}
\begin{algor}[1]
\item [{.}] $V[i-1]\gets V[i]$
\item [{.}] Create a state $q_{i,t}$ in $a_{i}$
\item [{.}] $Q_{i}\gets Q_{i}\cup\{q_{i,t}\}$
\item [{.}] $\delta_{i}\gets\delta_{i}\,\cup\, T_{\omega}(Last[i],q_{i,t})$
\item [{.}] $Last[i]\gets q_{i,t}$ 
\item [{.}] Create a state $q_{i-1,t}$ in $a_{i-1}$
\item [{.}] $Q_{i-1}\gets Q_{i-1}\cup\{q_{i-1,t}\}$
\item [{.}] $\delta_{i-1}\gets\delta_{i-1}\,\cup\, T_{\rho}(Last[i-1],q_{i-1,t})$
\item [{.}] $Last[i-1]\gets q_{i-1,t}$ 
\end{algor}
\caption{\label{pro:SplitLeft}\textsc{SplitLeft}}
\end{procedure}
\end{minipage}\hfill{}%
\begin{minipage}[t]{0.48\columnwidth}%
\begin{procedure}[H]
\begin{raggedright}
\textbf{Parameters:} $A,V,i,t,Last$
\par\end{raggedright}
\begin{algor}[1]
\item [{.}] $V[i+1]\gets V[i]$
\item [{.}] Create a state $q_{i,t}$ in $a_{i}$
\item [{.}] $Q_{i}\gets Q_{i}\cup\{q_{i,t}\}$
\item [{.}] $\delta_{i}\gets\delta_{i}\,\cup\, T_{\omega}(Last[i],q_{i,t})$
\item [{.}] $Last[i]\gets q_{i,t}$ 
\item [{.}] Create a state $q_{i+1,t}$ in $a_{i+1}$
\item [{.}] $Q_{i+1}\gets Q_{i+1}\cup\{q_{i+1,t}\}$
\item [{.}] $\delta_{i+1}\gets\delta_{i+1}\,\cup\, T_{\lambda}(Last[i+1],q_{i+1,t})$
\item [{.}] $Last[i+1]\gets q_{i+1,t}$ 
\end{algor}
\caption{\label{pro:SplitRight}\textsc{SplitRight}}
\end{procedure}
\end{minipage}

\begin{minipage}[t]{0.48\columnwidth}%
\begin{procedure}[H]
\begin{raggedright}
\textbf{Parameters:} $A,V,i,t,Last$
\par\end{raggedright}
\begin{algor}[1]
\item [{.}] $V[i-1]\gets V[i]$;$V[i+1]\gets V[i]$
\item [{.}] Create a state $q_{i,t}$ in $a_{i}$
\item [{.}] $Q_{i}\gets Q_{i}\cup\{q_{i,t}\}$
\item [{.}] $\delta_{i}\gets\delta_{i}\,\cup\, T_{\omega}(Last[i],q_{i,t})$
\item [{.}] $Last[i]\gets q_{i,t}$ 
\item [{.}] Create a state $q_{i-1,t}$ in $a_{i-1}$
\item [{.}] $Q_{i-1}\gets Q_{i-1}\cup\{q_{i-1,t}\}$
\item [{.}] $\delta_{i-1}\gets\delta_{i-1}\,\cup\, T_{\rho}(Last[i-1],q_{i-1,t})$
\item [{.}] $Last[i-1]\gets q_{i-1,t}$ 
\item [{.}] Create a state $q_{i+1,t}$ in $a_{i+1}$
\item [{.}] $Q_{i+1}\gets Q_{i+1}\cup\{q_{i+1,t}\}$
\item [{.}] $\delta_{i+1}\gets\delta_{i+1}\,\cup\, T_{\lambda}(Last[i+1],q_{i+1,t})$
\item [{.}] $Last[i+1]\gets q_{i+1,t}$ 
\end{algor}
\caption{\label{pro:SplitLeftRight}\textsc{SplitLeftRight}}
\end{procedure}
\end{minipage}\hfill{}%
\begin{minipage}[t]{0.48\columnwidth}%
\begin{procedure}[H]
\begin{raggedright}
\textbf{Parameters:} $A,V,D,i,t,Last$
\par\end{raggedright}
\begin{algor}[1]
\item [{.}] $tmp\gets V[i]$
\item [{.}] $V[i]\gets V[i+1]$
\item [{.}] $V[i+1]\gets tmp$
\item [{.}] $tmp\gets D[i]$
\item [{.}] $D[i]\gets D[i+1]$
\item [{.}] $D[i+1]\gets tmp$
\item [{.}] Create a state $q_{i,t}$ in $a_{i}$
\item [{.}] $Q_{i}\gets Q_{i}\cup\{q_{i,t}\}$
\item [{.}] $\delta_{i}\gets\delta_{i}\,\cup\, T_{\rho}(Last[i],q_{i,t})$
\item [{.}] $Last[i]\gets q_{i,t}$ 
\item [{.}] Create a state $q_{i+1,t}$ in $a_{i+1}$
\item [{.}] $Q_{i+1}\gets Q_{i+1}\cup\{q_{i+1,t}\}$
\item [{.}] $\delta_{i+1}\gets\delta_{i+1}\,\cup\, T_{\lambda}(Last[i+1],q_{i+1,t})$
\item [{.}] $Last[i+1]\gets q_{i+1,t}$ 
\end{algor}
\caption{\label{pro:Swap}\textsc{Swap}}
\end{procedure}
\end{minipage}

\begin{minipage}[t]{0.48\columnwidth}%
\begin{center}
\begin{procedure}[H]
\begin{raggedright}
\textbf{Parameters:} $A,i,t,Last$
\par\end{raggedright}
\begin{algor}[1]
\item [{.}] Create a state $q_{i,t}$ in $a_{i}$
\item [{.}] $Q_{i}\gets Q_{i}\cup\{q_{i,t}\}$
\item [{.}] $\delta_{i}\gets\delta_{i}\,\cup\, T_{\omega}(Last[i],q_{i,t})$
\item [{.}] $Last[i]\gets q_{i,t}$ 
\end{algor}
\caption{\textsc{\label{pro:Remain}Remain}}
\end{procedure}

\par\end{center}%
\end{minipage}\hfill{}

\end{figure*}
\begin{lem}
\label{lem:AhasVars}Let $\alpha_{1},\alpha_{2},\ldots,\alpha_{v}$
be assignments to $x_{1},x_{2},\ldots,x_{v}$ in $\phi$ and let $(A,k)$
be the result of \nameref{alg:3SatToFsca} on input $\langle\phi\rangle$.
If $A$ is provided the initial values in $\mathbf{s}$ defined by
\begin{equation}
\mathbf{s}=\left\langle \mathtt{0}^{\left\lfloor \frac{1}{2}\left(\#(j)-1\right)\right\rfloor }\alpha_{j}\mathtt{0}^{\left\lceil \frac{1}{2}\left(\#(j)-1\right)\right\rceil }\right\rangle _{j=1}^{v}\label{eq:sInit}
\end{equation}
and operated for $k-2$ time steps, the resulting value of $A$ is
$\left\langle \alpha_{\langle\phi\rangle[1][1]},\alpha_{\langle\phi\rangle[2][1]},\ldots,\alpha_{\langle\phi\rangle[n][1]},\right\rangle $. \end{lem}
\begin{proof}
We will denote the value in $V[i]$ at pass $t$ by $V_{i}^{(t)}$
to make clear differences in value of the same location in different
passes. Similarly, we will use $\omega_{i}^{(t)}$ to note the output
of cell $a_{i}$ at time step $t$ . We show by induction that $V_{i}^{(t)}=j$
implies that $\omega_{i}^{(t)}=\alpha_{j}$ for $t\leq k-2$. For
the base case of $t=0,$ the implication holds by comparison of (\ref{eq:VInit})
and (\ref{eq:sInit}). 

Assume the implication holds for arbitrary $t<k-2$. To show the implication
holds for $t+1$, we must consider two cases. Let $t_{E}$ be the
value of $t$ returned from \nameref{alg:GrowSpans}. First, suppose
$t<t_{E}$, in which case pass $t+1$ will be made inside \nameref{alg:GrowSpans}.
We know from \lemref{k-2Passes} that every element in $V$ is either
explicitly changed or it is explicitly not changed as the index $i$
ranges over the positions of $V$. If an element is changed, the change
must happen in a call to one of \nameref{pro:SplitLeftRight}, \nameref{pro:SplitLeft},
or \nameref{pro:SplitRight}. If it is \emph{not} changed, \nameref{pro:Remain}
is called for position $i$. We examine each in turn.

In \nameref{pro:SplitLeftRight}, notice that $V_{i-1}^{(t+1)}$ is
assigned the value of the position to its right, $V_{i+1}^{(t+1)}$
is assigned the value of the position to its left, and $V_{i}^{(t+1)}$
keeps its previous value. So $V_{i-1}^{(t+1)}=V_{i+1}^{(t+1)}=V_{i}^{(t+1)}=V_{i}^{(t)}$.
Correspondingly, the new transition set in $a_{i-1}$, on any input,
takes the value of its right neighbor so that $\omega_{i-1}^{(t+1)}=\omega_{i}^{(t)}$.
We know by the inductive hypothesis that a time step $t,$ $\omega_{i}^{(t)}=\alpha_{V_{i}^{(t)}}$,
so then $\omega_{i-1}^{(t+1)}=\alpha_{V_{i}^{(t)}}$. We also know
that $V_{i-1}^{(t+1)}=V_{i}^{(t)}$, so it must be that $\omega_{i-1}^{(t+1)}=\alpha_{V_{i-1}^{(t+1)}}$.
Likewise, $a_{i+1}$ gets a new state and transition which takes its
left neighbor's value, so $\omega_{i+1}^{(t+1)}=\omega_{i}^{(t)}=\alpha_{V_{i}^{(t)}}$
by hypothesis. But we also have $V_{i+1}^{(t+1)}=V_{i}^{(t)}$, so
$\omega_{i+1}^{(t+1)}=\alpha_{V_{i+1}^{(t+1)}}$. Finally, $a_{i}$
gets a new state and transition which keeps $\omega_{i}$ constant,
and so $\omega_{i}^{(t+1)}=\omega_{i}^{(t)}$ which is by hypothesis
equal to $\alpha_{V_{i}^{(t)}}$. And since $V_{i}^{(t)}=V_{i}^{(t+1)}$,
we have $\omega_{i}^{(t+1)}=\alpha_{V_{i}^{(t+1)}}$. The implication
holds for all cells affected by \nameref{pro:SplitLeftRight}.

Similar arguments show that the implication also holds for \nameref{pro:SplitLeft},
\nameref{pro:SplitRight}, and \nameref{pro:Remain}. Thus the hypothesis
is true for $t<t_{E}$.

Now suppose $t\geq t_{E}$. Then pass $t+1$ will be made in \nameref{alg:OrderAssignments}.
As shown in \lemref{k-2Passes}, every element in $V$ is either changed
in \nameref{pro:Swap} or it is left unchanged, in which case \nameref{pro:Remain}
is called. We have already shown the implication holds in \nameref{pro:Remain},
so we have only \nameref{pro:Swap} to contend with. The argument
is very similar to the one above. Clearly $V_{i}^{(t+1)}=V_{i+1}^{(t)}$
and $V_{i+1}^{(t+1)}=V_{i}^{(t)}$. Since $a_{i}$ adds the the transition
set $T_{\lambda}$ from $Last[i]$ to $q_{i,t+1}$, we know $\omega_{i}^{(t+1)}=\omega_{i+1}^{(t)}$.
Likewise, for $a_{i+1}$, we know $\omega_{i+1}^{(t+1)}=\omega_{i}^{(t)}$.
By the induction hypothesis, we have $\omega_{i}^{(t)}=\alpha_{V_{i}^{(t)}}$
and $\omega_{i+1}^{(t)}=\alpha_{V_{i+1}^{(t)}}$. Therefore, $\omega_{i+1}^{(t+1)}=\omega_{i}^{(t)}=\alpha_{V_{i}^{(t)}}=\alpha_{V_{i+1}^{(t+1)}}$
and $\omega_{i}^{(t+1)}=\omega_{i+1}^{(t)}=\alpha_{V_{i+1}^{(t)}}=\alpha_{V_{i}^{(t+1)}}$,
so the implication holds for \nameref{pro:Swap} as well, and by extension,
for all $t_{E}\leq t\leq k-2$.

Since $(V_{i}^{(t)}=j)\implies(\omega_{i}^{(t)}=\alpha_{j})$, then
$A$ at time $k-2$ has the value $\langle\omega_{1}^{(k-2)},\omega_{2}^{(k-2)},\ldots,\omega_{n}^{(k-2)}\rangle=$
$\langle\alpha_{V_{1}^{(k-2)}},\alpha_{V_{2}^{(k-2)}},\ldots,\alpha_{V_{n}^{(k-2)}}\rangle$.
By \lemref{VisPhi}, $V=\langle\langle\phi\rangle[1][1],\langle\phi\rangle[2][1],\ldots,\langle\phi\rangle[n][1]\rangle$
after its last pass and by \lemref{k-2Passes} there are $k-2$ passes
through $V$. Therefore, $A$ at time $k-2$ has the value $\left\langle \alpha_{\langle\phi\rangle[1][1]},\alpha_{\langle\phi\rangle[2][1]},\ldots,\alpha_{\langle\phi\rangle[n][1]}\right\rangle $.

\end{proof}
We can now complete the proof of \thmref{AevalsPhi}. Let $(A,k)$
be the result of \nameref{alg:3SatToFsca} on input $\langle\phi\rangle$
and let $\alpha_{j}=\mathtt{1}$ encode a $\mathtt{true}$ assignment
and $\alpha_{j}=\mathtt{0}$ encode a $\mathtt{false}$ assignment
to $x_{j}$. Define the initial values for $A$ as 
\[
\mathbf{s}=\left\langle \mathtt{0}^{\left\lfloor \frac{1}{2}\left(\#(j)-1\right)\right\rfloor }\alpha_{j}\mathtt{0}^{\left\lceil \frac{1}{2}\left(\#(j)-1\right)\right\rceil }\right\rangle _{j=1}^{v}.
\]
We claim $(\langle q_{0,0},q_{1,0},\ldots,q_{n,0}\rangle,\mathbf{s})\vdash_{\! A}^{k}(\langle q_{0,k},q_{1,k},\ldots,q_{n,k}\rangle,\langle(\mathtt{010})^{c}\rangle)$
if and only iff $\phi$ is satisfied by the assignments $\alpha_{1},\ldots,\alpha_{v}$.

First, assume the assignments encoded as $\alpha_{1},\ldots,\alpha_{v}$
satisfy $\phi$. By \lemref{AhasVars}, we know the $k-2^{\text{nd}}$
value of $A$ consists of the values $\alpha_{1},\ldots,\alpha_{v}$
ordered as they appear in $\phi.$ Lines 24 through 30 in \nameref{alg:3SatToFsca}
make it clear that each $a_{i}$ will complement its value if and
only if the corresponding literal in $\phi$ is complemented, and
so will have the opposite value at time $k-1.$ Therefore, the value
of $A$ at time $k-1$ is exactly the encoded literals of $\phi$
evaluated for the assignments $\alpha_{1},\ldots,\alpha_{v}$. Lines
31 through 34 show that every third cell starting with the second
cell will compute the OR function, and all other cells will compute
the 0 function at time $k$. So then $\omega_{i}$ will take the value
$\omega_{i-1}\vee\omega_{i}\vee\omega_{i+1}$ for $i\equiv2\bmod3$
and 0 otherwise. But these ORs exactly evaluate the clauses of $\phi$
when considering it's literals as an array of length $n=3c$. Since
$\phi$ is satisfied by $\alpha_{1},\ldots,\alpha_{v}$, each such
ORing of literals in this grouping must result in $\mathtt{true}$,
and so the OR of their encoding must be $\mathtt{1}$. This implies
the value of $a_{i}$ will be $\mathtt{1}$ for $i\equiv2\bmod3$
and $\mathtt{0}$ otherwise. Therefore, after $k$ time steps, $A$
has the configuration $(\langle q_{0,k},q_{1,k},\ldots,q_{n,k}\rangle,\langle(\mathtt{010})^{c}\rangle)$.

Conversely, suppose $A$ with initial value $\mathbf{s}$ has the
configuration $(\langle q_{0,k},q_{1,k},\ldots,q_{n,k}\rangle$, $\langle(\mathtt{010})^{c}\rangle)$
after $k$ time steps. We know by the construction of each cell on
lines 31 through 34 that those cells $a_{i}$ for $i\equiv2\bmod3$
with output value $\omega_{i}=\mathtt{1}$ are the result of the OR
of their three inputs from the previous time step, and so $\omega_{i}^{(k)}=\omega_{i-1}^{(k-1)}\vee\omega_{i}^{(k-1)}\vee\omega_{i+1}^{(k-1)}$.
We also know that $\omega_{i}^{(k-2)}=\alpha_{\langle\phi\rangle[i][1]}$
and that $\omega_{i}^{(k-1)}=\overline{\omega_{i}^{(k-2)}}$ if and
only if the $i^{\text{th}}$ literal in $\phi$ is complemented. Then
the values of $a_{i}$ at time step $k-1$ are exactly the corresponding
literals of $\phi$ when evaluated for the assignments $\alpha_{1},\ldots,\alpha_{v}$.
Since $\omega_{i-1}^{(k-1)}\vee\omega_{i}^{(k-1)}\vee\omega_{i+1}^{(k-1)}=\mathtt{1}$,
the clauses over those literals must also evaluate to $\mathtt{true}$,
and so $\phi$ is satisfied.

Since $a_{i}$ has value $\mathtt{1}$ for $i\equiv2\bmod3$ only
when $\phi$ is satisfied, $A$ correctly evaluates the clauses of
$\phi$.
\end{proof}

\subsection{Comparing Computational Ability}

It can be difficult to get a sense of the computational efficiency
of simple FSCA since it is a parallel construction on one hand, but
a construction of machines much less powerful than Turing machines
on the other. To facilitate a comparison, we will consider the number
of time steps required to perform certain computations relative to
other computational models. A Turing machine with alphabet $\Sigma=\{0,1\}$,
for instance, would require roughly $5n/3$ operations to evaluate
the clauses of a 3-CNF formula having $n$ literals: checking and
inverting each literal ($n$ operations), then performing two OR operations
for every three literals ($2n/3$ operations). This number would grow
at least by a factor of $n$ if we considered individual head movements.
Our FSCA as constructed above, however, does somewhat better (without
using the cyclic boundary property.) We examine this formally in the
following theorem.
\begin{thm}
\label{thm:k<3n/2}Let $(A,k)$ be as returned from \nameref{alg:3SatToFsca}
on input $\langle\phi\rangle$ such that $A$ has $n$ cells. $k\leq3n/2+2$.\end{thm}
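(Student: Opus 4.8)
The plan is to leverage the step-accounting already established in \lemref{k-2Passes}, which shows that the returned value satisfies $k = t + 2$, where $t$ is the total number of passes made through the array $V$, and that this total decomposes as $t = t_E + t_O$: the $t_E$ passes occur inside \nameref{alg:GrowSpans} and the $t_O$ passes inside \nameref{alg:OrderAssignments}. It therefore suffices to prove $t_E + t_O \le 3n/2$, which I would do by bounding the two contributions separately as $t_E \le n/2$ and $t_O \le n$.

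First I would bound $t_E$. On each pass of \nameref{alg:GrowSpans}, every span that still has $Need[j] > 0$ is enlarged by at most two cells---one on each end, via \nameref{pro:SplitLeftRight} or a paired \nameref{pro:SplitLeft}/\nameref{pro:SplitRight}, and by exactly one when $Need[j] = 1$. Starting from a single occupied cell, the span for $x_j$ must reach length $\#(j)$, so it is completed after exactly $\lceil (\#(j)-1)/2\rceil$ passes; since all spans grow in parallel and the loop runs until every $Need[j]$ is zero, $t_E = \max_j \lceil (\#(j)-1)/2\rceil$. Because the literal array has only $n$ positions, $\#(j) \le n$, which already gives $t_E \le \lceil (n-1)/2 \rceil \le n/2$. (Using the hypothesis that no variable repeats within a clause one even gets $\#(j) \le n/3$ and hence $t_E \le n/6$, but the weaker bound is all that is needed here.)

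Next I would bound $t_O$. As noted in the proof of \lemref{VisPhi}, \nameref{alg:OrderAssignments} performs an Odd-Even Transposition Sort on the $n$-element array $D$, with each swap mirrored by a \nameref{pro:Swap} and each unswapped position covered by a \nameref{pro:Remain}, and $t$ is incremented once per phase. Invoking the classical guarantee that Odd-Even Transposition Sort orders any array of $m$ elements within $m$ phases, instantiated at $m = |V| = n$, gives $t_O \le n$. Combining the two estimates yields $k = t_E + t_O + 2 \le n/2 + n + 2 = 3n/2 + 2$, as required.

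I expect the bound on $t_O$ to be the main obstacle. The estimate $t_E \le n/2$ is elementary once the per-pass growth of spans is tracked, but the sorting bound rests on the termination behaviour of Odd-Even Transposition Sort, so I would need to confirm that the phase counting in \nameref{alg:OrderAssignments}---one increment of $t$ for each odd phase and each even phase, with the early \textbf{break} only shortening the run---matches the standard notion of a phase, so that the $m$-phase guarantee transfers verbatim. Should the termination-detection pass threaten an off-by-one, the slack available from the sharper GrowSpans estimate ($t_E \le n/6$) absorbs it comfortably for all $n \ge 3$.
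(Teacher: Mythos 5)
Your proof follows the paper's argument exactly: both reduce the claim via \lemref{k-2Passes} to counting passes through $V$, bound the \nameref{alg:GrowSpans} passes by $n/2$, bound the \nameref{alg:OrderAssignments} passes by the classical $n$-phase guarantee of odd-even transposition sort, and add $2$. Your closing caveat is in fact a refinement the paper glosses over: its proof simply asserts the $n$-pass bound, without accounting for the possible extra swap-free phase needed to detect sortedness, whereas your sharper estimate $t_E \le n/6$ (valid since no variable repeats within a clause gives $\#(j)\le c = n/3$) absorbs that off-by-one for all $n = 3c \ge 3$.
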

\begin{proof}
\lemref{k-2Passes} shows that there is one state in each $a_{i}$
for every pass through $V$, and that there are exactly $k-2$ such
passes. We simply bound $k-2$ as a function of $n$. This is complicated
somewhat by having no fixed relationship between the number of clauses
and the number of variables. However, a coarse bound is still possible.

In \nameref{alg:GrowSpans}, we note that each span can grow by two
at every time step except possibly its last (if its ultimate size
is even.) At worst, there is only one variable that needs to grow,
and that variable is centered in its span by (\ref{eq:VInit}). Since
the variable occupies one element of $V$ at initialization (by \lemref{VisPhi}),
\nameref{alg:GrowSpans} can require no more than $\left\lceil (n-1)/2\right\rceil \leq n/2$
passes to fill the entire array. 

The remaining passes are made by \nameref{alg:OrderAssignments}.
Recall that this algorithm functions exactly as Odd-Even sort. It
is well known than Odd-Even sort can sort $n$ variables in $n$ passes. 

Finally, $k$ is assigned the number of passes made in \nameref{alg:GrowSpans}
and \nameref{alg:OrderAssignments} plus 2. Therefore, $k\leq n/2+n+2=3n/2+2$,
and so the bound holds. 
\end{proof}
It appears in the case of evaluating 3-CNF formulas, even simple FSCA
are capable of reasonably efficient operation.

Comparing simple FSCA to elementary FSCA, it is not clear an elementary
FSCA could be built to evaluate a 3-CNF efficiently. If a cell is
constructed to perform the OR of a clause, then it cannot also invert
a literal or exchange a value with a neighbor as each cell is allowed
only one function. Any such solution would require a more clever mixture
of functions over neighborhoods of cells, and may end up relying on
the ability to simulate a Turing machine, as shown for rule 110. \cite{Cook04}

\subsection{Invertibility of FSCA}

Having formally defined FSCA and examined some of their computational
capability, we return to the issue of invertibility. We first need
a formal definition of the problem, which we provide for the general
case of FSCA. We define the $k$-INVERT decision problem as follows:
\[
k\text{-INVERT}=\{(A,C^{(t)},k)\:|\: A\text{ is an FSCA of }n\text{ cells},\exists\, C^{(t-k)}(C^{(t-k)}\vdash_{\! A}^{k}C^{(t)})\text{ for }k\in\mathbb{N}\}
\]

We would like to know whether deciding $k$-INVERT is NP-Hard. If
so, then an FSCA could serve as a primitive on which to build a provably
secure PRG provided its construction and operation are efficient.
The efficiency condition motivates us to seek the simplest FSCA for
which inversion is provably NP-Hard. 

Elementary FSCAs do not inspire great confidence in this regard. Apart
from the attacks already shown, the difficulty in evaluating a simple
Boolean formula suggests a fundamental lack of ability to withstand
analysis from an opponent armed with a Turing machine.

Simple FSCA, however, seem to hold more promise. If running such a
machine forward from an assignment produces the evaluation of a formula,
then running it backwards (i.e. inverting its operation) from an evaluation
must produce an assignment. If the evaluation were a satisfying one,
such an ability could be used to decide 3-SAT. This might be a little
surprising. Notice that the simple restriction reduces the $k$-INVERT
problem to just finding an appropriate $\mathbf{s}^{(t-k)}$, since
$\mathbf{q}^{(t-k)}$ is easily deduced from $\mathbf{q}^{(t)}$:
simply follow the transitions backwards from $\mathbf{q}^{(t)}$ for
$k$ states. However, it may be that the mixing behavior of the 3-neighbor
construction over sufficiently many time steps provides the hardness
we need. Therefore, we will examine simple FSCAs to determine if there
is a $B(n)$ for which they are NP-Hard to $k$-invert.

We now formalize this intuition in a reduction from 3-SAT to a $k$-INVERT
variant for simple FSCA. 
\begin{thm}
\label{thm:kSiNpComplete}Let $k$-SIMPLE-INVERT, abbreviated $k$SI,
be defined as
\[
\{(A,C^{(t)},k)\:|\: A\text{ is a simple FSCA of }n\text{ cells},\exists\, C^{(t-k)}(C^{(t-k)}\vdash_{\! A}^{k}C^{(t)}),k\in\mathbb{N}\}.
\]
$k$SI is NP-Complete for $k\geq3n/2+2$.\end{thm}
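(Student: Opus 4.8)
The plan is to prove both that $k$SI lies in NP and that it is NP-hard, with the hardness coming from a reduction from 3-SAT that reuses the machine built in \thmref{AevalsPhi}. For membership, observe that because $A$ is simple its forward operation is completely deterministic, so a candidate predecessor $C^{(t-k)}=(\mathbf{q}^{(t-k)},\mathbf{s}^{(t-k)})$ serves as a polynomial-size certificate: one simply simulates $A$ forward for $k\leq 3n/2+2$ steps (each step being $n$ constant-size table lookups, by the simplicity of $\delta$) and checks that the result equals $C^{(t)}$. Hence $k\text{SI}\in\mathrm{NP}$.

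For hardness, given a 3-CNF formula $\phi$ I would run \nameref{alg:3SatToFsca} to obtain the simple FSCA $A$ (with $n=3c$ cells) and the step count $k$, and take as the target the configuration $C^{(t)}=(\langle q_{1,k},\dots,q_{n,k}\rangle,\langle(\mathtt{010})^{c}\rangle)$ reached by $A$ after $k$ steps, i.e. the ``all clauses true'' value. By \thmref{k<3n/2} this construction already satisfies $k\leq 3n/2+2$; to land in the restricted family $k\geq 3n/2+2$ I would pad each cell's state chain with a few identity ($T_{\omega}$) steps so that $k$ equals $3n/2+2$ exactly, which is clearly polynomial-time. The instance produced is $(A,C^{(t)},3n/2+2)$, and the claim to verify is that it lies in $k$SI if and only if $\phi$ is satisfiable.

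The correctness argument turns on two observations. First, I would extend \lemref{AhasVars} from inputs of the special form of (\ref{eq:sInit}) to \emph{arbitrary} initial output vectors $\mathbf{s}^{(0)}$: setting $\alpha_{j}:=\mathbf{s}^{(0)}[p[j]]$ (the value initially placed in the center cell of span $j$), the invariant $V_{i}^{(t)}=j\Rightarrow\omega_{i}^{(t)}=\alpha_{j}$ still holds, because its base case only inspects the span centers and its inductive step only ever copies a genuine variable value into another cell that already carries that variable index---the ``dummy'' cells' initial contents are overwritten before they can influence a real value. Since \lemref{VisPhi} guarantees $V$ has no dummy entries at step $k-2$, the output at that step, and hence (after the deterministic complement/OR steps of lines 24--35 of \nameref{alg:3SatToFsca}) the final value $G(\mathbf{s}^{(0)})$, depends only on the $v$ span-center bits and equals the clause evaluation of $\phi$ under $(\alpha_{1},\dots,\alpha_{v})$. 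Consequently some $\mathbf{s}^{(0)}$ with $G(\mathbf{s}^{(0)})=\langle(\mathtt{010})^{c}\rangle$ exists iff $\phi$ is satisfiable, and any such $\mathbf{s}^{(0)}$ exhibits a satisfying assignment in its span-center coordinates. The forward direction of the reduction is then immediate from \thmref{AevalsPhi}.

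The step I expect to be the main obstacle is ruling out \emph{spurious} predecessors that do not correspond to satisfying assignments. Because the last transition added in \nameref{alg:3SatToFsca} is a self-loop $T_{\omega}(q_{i,k},q_{i,k})$, the configuration $(\langle q_{1,k},\dots,q_{n,k}\rangle,\langle(\mathtt{010})^{c}\rangle)$ is its own predecessor for any number of steps, so naively the instance would always be a yes-instance. I would remove this pathology by replacing the terminal self-loop with a loop-back transition $T_{\omega}(q_{i,k},q_{i,0})$, making each cell's state graph a single simple cycle of length $k+1$. Every state then has a unique in-edge, so the backward state trajectory from $\mathbf{q}^{(t)}=\langle q_{i,k}\rangle$ is forced to be $\langle q_{i,k}\rangle,\langle q_{i,k-1}\rangle,\dots,\langle q_{i,0}\rangle$; in particular $\mathbf{q}^{(t-k)}=\langle q_{1,0},\dots,q_{n,0}\rangle$ is determined, and the $k$-step preimage question collapses to ``is there an $\mathbf{s}^{(0)}$ with $G(\mathbf{s}^{(0)})=\langle(\mathtt{010})^{c}\rangle$?'', which the preceding paragraph answers. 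One checks that the loop-back edge is never traversed during steps $0,\dots,k$, so it does not disturb the evaluation computed by \thmref{AevalsPhi}. Combining the two directions gives NP-completeness of $k$SI for $k\geq 3n/2+2$.
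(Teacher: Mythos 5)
Your proof keeps the paper's skeleton---membership by simulating a certificate forward, hardness by handing \nameref{alg:3SatToFsca}'s machine the target value $\langle(\mathtt{010})^{c}\rangle$---but your two modifications are genuine departures, and both of them close holes that are actually present in the paper's own argument. The paper leaves the terminal self-loops $T_{\omega}(q_{i,\kappa},q_{i,\kappa})$ intact, so its target configuration $(\langle q_{1,\kappa},\ldots,q_{n,\kappa}\rangle,\langle(\mathtt{010})^{c}\rangle)$ is a fixed point of $\vdash_{A}$ and is therefore its own $k$-step predecessor for every $k$; since $k$SI quantifies over \emph{arbitrary} predecessor configurations, every instance the paper constructs is a yes-instance, and its forward direction (``there exists $C^{(0)}$, so $\phi$ is satisfiable'') does not follow. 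The paper even invokes precisely this fixed point ($C^{(\kappa)}\vdash_{A}^{*}C^{(\kappa)}$) to handle $k>\kappa$, without noticing that it also manufactures spurious preimages. Your loop-back edge $T_{\omega}(q_{i,k},q_{i,0})$, which gives every state in-degree one and forces $\mathbf{q}^{(t-k)}=\langle q_{1,0},\ldots,q_{n,0}\rangle$, is exactly the right repair, and you correctly check that the loop-back is never traversed during the first $k$ steps. Your second refinement is equally necessary and also missing from the paper: the definition quantifies over arbitrary value vectors $\mathbf{s}^{(t-k)}$, while \thmref{AevalsPhi} and \lemref{AhasVars} are proved only for inputs of the padded form (\ref{eq:sInit}); your extension---read $\alpha_{j}$ off the span centers and observe that the dummy bits are overwritten by $T_{\lambda}$/$T_{\rho}$ copies before any transition reads them---does go through, because \nameref{pro:Remain} steps read no neighbors, every split or swap reads only a cell whose $V$-entry is a variable index, and the construction never reads across the cyclic boundary (as the paper itself notes in proving \thmref{kSigSiHard}). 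What the paper's self-loop was buying---yes-instances remaining yes for all $k\geq\kappa$---you give up, but padding to a fixed $k$ recovers hardness of the promise family, which is all the theorem needs. Two small points to tidy: since $n=3c$ with $c$ possibly odd, $3n/2+2$ need not be an integer, so pad to $\left\lceil 3n/2\right\rceil+2$; and in your membership paragraph the inequality should read $k\geq3n/2+2$ (with, as in the paper's verifier, the implicit assumption that $k$ is polynomially bounded in the instance size). In short, your proof is correct and is in fact sounder than the one in the paper.
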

\begin{proof}
We show that $k$SI $\in$ NP and that 3-SAT is polynomial-time reducible
to $k$SI. Since 3-SAT is NP-Complete, this will prove $k$SI is NP-Complete. 

The first condition is easy to see. Recall that $\mathbf{q}^{(t-k)}$
is easy to deduce from $\mathbf{q}^{(t)}$ and $A$ which are encoded
in the input string. We can create a decider for $k$SI which, given
a certificate $\mathbf{s}^{(t-k)}$, performs the following: run $A$
with configuration $(\mathbf{q}^{(t-k)},\mathbf{s}^{(t-k)})$ for
$k$ time steps. If the result is $C^{(t)}$, accept. Otherwise, reject.
Since this decider requires only $O(kn)$ steps, we can verify $k$SI
in polynomial time, and so $k$SI is in NP. 

Now we show 3-SAT $\leq_{\text{P}}$ $k$SI. Let $\phi$ be a 3-CNF
formula having $c$ unique clauses over $v$ variables such that no
clause repeats. Let $n=3c$. Consider \algref{3-SatToSi}, \nameref{alg:3-SatToSi}.

\begin{algorithm}
\textbf{Input:} $\langle\phi\rangle$, an encoding of $\phi$.

\textbf{Output:} FSCA $A$, a configuration $C$, and $\kappa\in\Nat$ 
\begin{algor}[1]
\item [{.}] $(A,\kappa)\gets$\nameref{alg:3SatToFsca}$(\langle\phi\rangle)$
\item [{.}] $C\gets(\mathbf{q}^{(\kappa)},\langle(\mathtt{010})^{c}\rangle)$
\item [{.}] \textbf{return} $(A,C,\kappa)$ 
\end{algor}
\caption{\label{alg:3-SatToSi}3-\textsc{SatTo$k$Si}}
\end{algorithm}

First, note that \nameref{alg:3-SatToSi} runs in time $O(n^{2})$
since, by \lemref{k-2Passes}, \nameref{alg:3SatToFsca} makes $\kappa$
(which is $O(n)$) passes over an array of length $n$ and $C$ can
be created in time $O(n)$. 

Next, we show that the returned $(A,C,\kappa)\in k\text{SI}\iff\phi\in$
3-SAT. Suppose $(A,C,\kappa)\in$ $k$SI and that $k=\kappa$. Then
there exists a $C^{(0)}$ such that $C^{(0)}\vdash_{\! A}^{\kappa}C$.
As $C$ has a value where each $\mathbf{s}_{i}=\mathtt{1}$ for $i\equiv2\bmod3$,
and since by \thmref{AevalsPhi} $A$ evaluates $\phi$, $\phi$ must
be satisfiable and so $\phi\in$ 3-SAT. Now consider any $k>\kappa$.
By the construction of $A,$ the value after time step $\kappa$ never
changes since all $q_{i,\kappa}$ transition only to themselves, keeping
the same value. If $A$ has value $\langle(\mathtt{010})^{c}\rangle$
at time step $k$, it must have had the same value at time step $\kappa$,
and so $\phi$ is satisfiable and in 3-SAT.

Suppose $\phi\in$ 3-SAT and $k=\kappa$. Then there is a satisfying
assignment, $\alpha_{1},\ldots,\alpha_{v}$ for $\phi$. Let 
\[
\mathbf{s}=\left\langle \mathtt{0}^{\left\lfloor \frac{1}{2}\left(\#(j)-1\right)\right\rfloor }\alpha_{j}\mathtt{0}^{\left\lceil \frac{1}{2}\left(\#(j)-1\right)\right\rceil }\right\rangle _{j=1}^{v}
\]
Since by \thmref{AevalsPhi} $A$ evaluates $\phi$, then it must
be that $(\mathbf{q}^{(0)},\mathbf{s}))\vdash_{\! A}^{\kappa}(\mathbf{q}^{(\kappa)},\langle(\mathtt{010})^{c}\rangle)=C$,
and so $(A,C,\kappa)\in k\text{SI}$. Since $C^{(\kappa)}\vdash_{\! A}^{*}C^{(\kappa)}$,
$(A,C,\kappa)\in k\text{SI}$ for all $k>\kappa$.

Finally, by \thmref{k<3n/2}, we know that $\kappa\leq3n/2+2$. 
\end{proof}
We can generalize this a bit further using the technique of padding
as often done for other NP-Complete problems. Notice that \nameref{alg:3SatToFsca}
constructs a cyclic boundary FSCA, but never makes use of the boundary
connections. We can therefore break the boundary connections, insert
dummy cells, and connect those cyclically without affecting the operation
of the FSCA in the original $n$ cells. This changes the number of
cells while leaving the number of time steps constant, allowing the
ratio between the two to be an arbitrary one. This gives us the following
theorem.
\begin{thm}
\label{thm:kSigSiHard}$k$SI is NP-Complete for $k\geq n/\sigma$
for any arbitrary $\sigma\in\Nat.$\end{thm}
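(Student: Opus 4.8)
The plan is to reduce 3-SAT to the restricted problem by the padding technique sketched just above, reusing the machinery behind \thmref{kSiNpComplete}. Given a 3-CNF formula $\phi$ with $c$ unique clauses over $v$ variables, I first run \nameref{alg:3SatToFsca} to obtain a simple FSCA $A_0$ on $n_0 = 3c$ cells together with its time bound $\kappa$; by \thmref{k<3n/2} we have $\kappa \le 3n_0/2 + 2$, and by \thmref{AevalsPhi} (as exploited in \thmref{kSiNpComplete}) the configuration $C_0 = (\mathbf{q}^{(\kappa)}, \langle(\mathtt{010})^c\rangle)$ admits a predecessor under $A_0$ after $k$ steps, for every $k \ge \kappa$, if and only if $\phi$ is satisfiable. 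The difficulty is that \nameref{alg:3SatToFsca} pins the ratio $\kappa/n_0$ near $3/2$, whereas the theorem demands an instance whose time bound satisfies $k \ge n/\sigma$ while $k$ may be only a small fraction of the cell count.

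I remove this obstacle by inflating the cell count without touching the embedded computation. Because \nameref{alg:3SatToFsca} never routes a value across the cyclic boundary, I sever the two boundary links, splice in a block of $m$ inert dummy cells, and reclose the ring. Each dummy cell is a single-state cell whose only transition is the self-loop $T_0$, so it outputs $\mathtt{0}$ at every step and the whole automaton stays simple. The outputs of $A_0$'s original cells are unchanged, since the information they consume never depended on the boundary neighbours that the dummy block now feeds them. I choose $m$ so that the padded cell count $n = n_0 + m$ satisfies $n \ge \sigma\kappa$; as $\sigma$ is a fixed constant and $\kappa = O(n_0)$, this needs only $O(\sigma n_0)$ dummy cells, so the reduction remains polynomial.

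Finally I set $k = \lceil n/\sigma\rceil$ and $C = (\mathbf{q}^{(k)}, \mathbf{s}^\ast)$, where $\mathbf{s}^\ast$ carries $\langle(\mathtt{010})^c\rangle$ on the original positions and $\mathtt{0}$ on every dummy position, and $\mathbf{q}^{(k)}$ is the state reached after $k$ steps (deterministic, since $A$ is simple). By the choice of $m$ we have $k = \lceil n/\sigma\rceil \ge n/\sigma \ge \kappa$, so $k$ meets the promised bound $k \ge n/\sigma$ yet still leaves $A_0$ time to finish evaluating $\phi$ and then idle at its fixed point. The equivalence is then argued exactly as in \thmref{kSiNpComplete}: running $A$ forward from a satisfying assignment (with $\mathtt{0}$ on the dummy cells) reaches $C$, so a satisfiable $\phi$ gives a yes-instance; and because $A$ is simple the predecessor states are forced, so reaching value $\langle(\mathtt{010})^c\rangle$ at time $k \ge \kappa$ forces that value already at time $\kappa$, which by \thmref{AevalsPhi} demands a satisfying assignment, making an unsatisfiable $\phi$ a no-instance. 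Membership in NP is inherited unchanged, since a certificate $\mathbf{s}^{(t-k)}$ is checked by $O(kn)$ forward simulation, which is polynomial because $\sigma \ge 1$ forces $k \le n$. I expect the main point needing care to be the claim that the dummy block truly cannot perturb the embedded computation, i.e. re-verifying that \nameref{alg:3SatToFsca} never exercises a boundary link, together with confirming that $k \ge \kappa$ survives the ceiling in $k = \lceil n/\sigma\rceil$; both are routine, but they are precisely where a careless padding argument would fail.
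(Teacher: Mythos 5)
Your proposal is correct and takes essentially the same route as the paper's own proof: pad the FSCA produced by \nameref{alg:3SatToFsca} with inert $T_{0}$ dummy cells spliced across the (unused) cyclic boundary, then combine \thmref{AevalsPhi}, the bound of \thmref{k<3n/2}, and the fixed-point/backward-forced-states argument already used for \thmref{kSiNpComplete}. Your only deviations are cosmetic --- single-state dummy cells instead of the paper's $\kappa$-state chains, and $k=\lceil n/\sigma\rceil$ with $n\geq\sigma\kappa$ instead of the paper's $k=\kappa$ with $n$ computed from the bound $9c/2+2$ --- and the latter choice is in fact slightly more careful, since it guarantees $k\geq n/\sigma$ even when $\kappa$ falls strictly below its upper bound.
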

\begin{proof}
We again reduce 3-SAT to $k$SI. Let $\phi$ be a 3-CNF formula having
$c$ clauses. Our goal is to construct an FSCA of $n$ cells which
evaluates $\phi$ and a $k\in\Nat$ such that $n/\sigma\leq k$. Consider
the following algorithm:

\begin{algorithm}
\textbf{Input:} $\langle\phi\rangle$, an encoding of $\phi$,

\qquad{} $\sigma\in\Nat$.

\textbf{Output:} FSCA $A$, a configuration $C$, and $k\in\Nat$ 
\begin{algor}[1]
\item [{.}] $(A',\kappa)\gets$\nameref{alg:3SatToFsca}$(\langle\phi\rangle)$
\item [{.}] $n\gets(9c\sigma/2)+2\sigma$
\item [{for}] $i=1,\ldots,n-3c$ \{\{create $n-3c$ dummy cells\}\}

\begin{algor}[1]
\item [{.}] Create a cell $a_{3c+i}$ 
\item [{.}] Add states $q_{3c+i,0},\ldots,q_{3c+i,\kappa}$ with transition
sets $T_{0}$$(q_{3c+i,t},q_{3c+i,t+1})$, $1\leq t<\kappa$
\item [{.}] Add transition set $T_{0}$$(q_{3c+i,\kappa},q_{3c+i,\kappa})$
\item [{.}] $\rho_{3c+i-1}\gets\omega_{3c+i}$
\item [{.}] $\lambda_{3c+i}\gets\omega_{3c+i-1}$
\end{algor}
\item [{endfor}]~
\item [{.}] $\rho_{1}\gets\omega_{3c+n}$
\item [{.}] $\lambda_{3c+n}\gets\omega_{1}$
\item [{.}] $A\gets\langle a_{1},\ldots,a_{3c},\ldots,a_{3c+n}\rangle$
\item [{.}] $C\gets(\mathbf{q}^{(\kappa)},\langle(\mathtt{010})^{c}\mathtt{0}^{n-3c}\rangle)$
\item [{.}] \textbf{return} $(A,C,\kappa)$ 
\end{algor}
\caption{\label{alg:3-SatToSigKSi}3\textsc{-SatTo$\sigma k$Si}}
\end{algorithm}

\nameref{alg:3-SatToSigKSi} performs $O\left(9\sigma c\right)+O(3c)$
operations, and so runs in time polynomial in $c$. Further, $A$
clearly evaluates $\phi$ since $A'$ evaluates $\phi$ without any
communication from $\lambda_{1}$ or $\rho_{3c}$ by \thmref{AevalsPhi}.
Thus $A$ reaches the $\kappa^{\text{th}}$ time step with a value
of $(\mathtt{010})^{c}\mathtt{0}^{n-3c}$ iff there is an assignment
$\alpha_{1},\ldots,\alpha_{v}$ to the $v$ variables of $\phi$ which
satisfy it. Lastly, by \thmref{k<3n/2}, we know that $\kappa\leq3(3c)/2+2$.
Since $n=9c\sigma/2+2\sigma=\sigma\kappa$, and $A$ has a total of
$3c$ cells (from $A'$) plus $n-3c$ dummy cells, $A$ has $n=\sigma\kappa$
cells. Therefore, $(A,C,\kappa)$ obeys $\kappa\geq n/\sigma$. As
$A$'s value never changes after time step $\kappa,$ the theorem
holds for all $k\geq\kappa\geq n/\sigma$.
\end{proof}
\pagebreak{}

\section{\label{sec:FscaBasedPrg}A PRG Based on FSCA}
\begin{quotation}
\begin{flushright}
\medskip{}
\textsl{\small \textquotedbl{}One of the most singular characteristics
of the art of deciphering is the strong conviction possessed by every
person, even moderately acquainted with it, that he is able to construct
a cipher which nobody else can decipher.\textquotedbl{}--Charles Babbage.}
\par\end{flushright}{\small \par}
\end{quotation}
\smallskip{}

\subsection{Design of an FSCA-based PRG}

Recall our thought experiment from \secref{NewConstruction}, where
the rule of each cell is selected uniformly at random. Under this
scheme, we cannot know anything about the previous output values even
one time step back, nor can we guess what functions will be used next
and so have no ability to guess the next output. We would like to
approach this ideal behavior to the extent possible in a PRG based
on FSCA. 

While our approximation of the ideal must be lacking in some aspects,
uniform distribution of each cell's outputs is a required property.
This suggests that the functions applied by a cell over time must
on the whole have a uniform distribution as well. Suppose each cell
generates an 8-bit value to specify the function at each time step.
Any single such function may be clearly biased, but if each cell cycles
through all possible 8-bit values, its output distribution is uniform
in the aggregate. This is because the number of \textquoteright{}1\textquoteright{}
bits in any fixed bit position across all possible 8-bit strings is
128, meaning there are 128 functions that yield a '1' on any fixed
3-bit input and 128 that yield a '0' for that same input. We can thus
guarantee uniform distribution by having each cell cycle through all
possible 256 time step functions.

Another desired property from our thought experiment is that of independent,
randomly selected functions. While we do not aspire to truly random
selection and we are limited to cycling through the 256 possible functions,
we can at a minimum apply the functions in the order of some permutation
which resists cryptanalysis. The S-Boxes of cryptographic primitives
provide such permutations, and are commonly studied for resistance
to linear and differential attacks. We choose the S-Box from AES as
a basis for selecting the time step functions \cite{FIPS197}.

Following such a permutation does not achieve independence, though.
At some point, the permutation will choose the 0 function. If all
cells do so at the same time, the value would be fixed at $\mathbf{0}$.
We prevent this by having each cell cycle through the permutation
in a different order and from a different starting position. We assign
to each cell an 8-bit offset value which is taken from the seed. This
offset specifies the starting index in the S-Box when expressed as
a lookup table. The function for the cell is then taken as the offset
value XORed with the S-Box value at the cell's current index. XORing
in a constant causes the time step function to use values from the
S-Box in an order unique to that constant.

One difficulty in requiring the time step function to cycle through
the 256 possible functions is the issue of biased functions at the
end of the cycle. Suppose the last function for a cell in the cycle
is the 0 function. On restarting the cycle, that cell will have its
value stuck at 0. Even moderately biased functions at the end of the
cycle lead to stuck bits in the value for the next cycle. Early experiments
showed that most seeds became periodic after just 1 or 2 cycles. While
the uniform distribution property requires the PRG to operate in cycles
of 256, there is no requirement that they be the \emph{same} cycle.
We can change the cycle simply by changing the offsets. At the end
of each cycle, we generate new offsets by: 
\begin{enumerate}
\item Saving away the current value. 
\item Stepping the FSCA for 8 time steps, saving the value away at each
one to generate 8 bits for each cell.
\item XORing those 8 bits into the offsets for each cell.
\item Restoring the current value.
\end{enumerate}
This process effectively changes the simple FSCA each 256 time steps,
giving each cell a new time step function for each of its 256 states.

There are practical concerns in generating output bits from this construction.
First, a cell's path through its cycle of functions may have periods
of extreme bias, affecting the distribution of its output and that
of its neighbors. If we simply return the FSCA's value during such
a period, we may notice a bias in the resulting output. Further, we
would also be revealing a considerable portion of the stored state
(i.e. the value) of the PRG and so weaken its cryptographic strength.

Generating output by XORing FSCA values separated by a number of time
steps seems to solve these problems. XORing two FSCA values leaves
a slight bias in long sequences, but four seems to be sufficient in
practice to remove bias. Different trade-offs between security and
efficiency can be made here. Outputting the XOR of four FSCA values
would also avoid directly revealing the internals of the PRG. Since
$n$-cell simple FSCA are only hard to $k$-invert for some $k$ as
a function of $n$, we would like to choose a number of time steps
which is derived from $n$ and also balances efficiency and security.
We note that at $t=t_{0}+n/2$, every cell is affected by every value
at $t_{0}$, and so the value $\mathbf{s}^{(t_{0})}$ is fully diffused
over the FSCA. Paranoia inspires us to choose $n$ rather than $n/2$
time steps as a minimum between generating outputs. Combining this
with the necessary conditions to avoid bias, we generate output bits
by XORing together four FSCA values, each separated by $n/4$ time
steps.

Another practical consideration is allowing for regular or even constant
(e.g. all 0) seeds. In this case, the offset is the same for all cells,
resulting in a short period for the FSCA. In a fit of irony, we ensure
entropy and chaos in the offsets by first XORing in the bytes of the
most%
\footnote{Based on an incomplete survey.%
} harmonious algebraic number, $\varphi$, the golden ratio. This ensures
each cell has some disordered offset from the S-Box values.

We name the resulting PRG ``Chasm'', owing to the etymology of the
Greek word ``chaos,'' originally meaning ``void'' or ``chasm.''
We represent the current state of a Chasm PRG as a 5-tuple $(n,\mathbf{s},\mathbf{o},\mathbf{i},c)$
where:
\begin{itemize}
\item $n\in\Nat$ is the number of cells in the PRG
\item $\mathbf{s}\in\{0,1\}^{n}$ is the current value of the underlying
FSCA
\item $\mathbf{o}\in\{0,1\}^{8n}$ is the vector of 8-bit offsets for each
cell, indexed as $\langle\mathbf{o}_{1},\ldots,\mathbf{o}_{8n}\rangle$
where each $\mathbf{o}_{i}\in\{0,1\}^{8}$
\item $\mathbf{i}\in\{0,1\}^{8n}$ is the vector of 8-bit indexes of each
cell in the S-Box permutation, indexed as $\langle\mathbf{i}_{1},\ldots,\mathbf{i}_{n}\rangle$
where each $\mathbf{i}_{i}\in\{0,1\}^{8}$
\item $c\in\Nat$ is the cycle step counter
\end{itemize}
The algorithms \nameref{alg:ChasmInitialize} (\algref{ChasmInitialize}),
\nameref{pro:ChasmTimeStep} (\proref{ChasmTimeStep}), and \nameref{alg:ChasmNext}
(\algref{ChasmNext}) fully specify the Chasm PRG operation. We use
the notation $\mathbf{s}\lll k$ and $\mathbf{s}\ggg k$ to denote
the cyclic shifting (rotating) of $\mathbf{s}$ by $k$ bits to the
left and right respectively. We use $\gg$ as right shift, $\oplus$
as XOR, $\cdot$ as AND, and $+$ as Boolean OR. These operations
are done element-wise when the terms are vectors. We will also use
$\varphi$ to mean the golden ratio $(1+\sqrt{5})/2$ and $\varphi_{i}$
to mean the $i^{\text{th}}$ bit of $\varphi$ when represented in
binary. $B[\mathbf{i}]$ denotes the vector of values $B_{j}$ stored
in the AES S-Box lookup table $B$ for each index $j\in\mathbf{i}$.
See \tabref{AesSBox} for the table values.

\begin{figure*}[th]

\begin{minipage}[t]{0.45\columnwidth}%
\begin{algorithm}[H]
\begin{raggedright}
\textbf{Input:} $\sigma\in\{0,1\}^{9n}$, the seed.
\par\end{raggedright}

\begin{raggedright}
\textbf{Output:} a Chasm PRG
\par\end{raggedright}
\begin{algor}[1]
\item [{.}] \begin{raggedright}
$\mathbf{s}\gets\langle\sigma_{1},\ldots,\sigma_{n}\rangle\oplus\langle\varphi_{1},\ldots,\varphi_{n}\rangle$
\par\end{raggedright}
\item [{.}] \begin{raggedright}
$\mathbf{o}\gets\langle\sigma_{n+1},\ldots,\sigma_{9n}\rangle\oplus\langle\varphi_{n+1},\ldots,\varphi_{9n}\rangle$
\par\end{raggedright}
\item [{.}] \begin{raggedright}
$\mathbf{i}\gets\mathbf{o}$
\par\end{raggedright}
\item [{.}] \begin{raggedright}
$c\gets0$
\par\end{raggedright}
\item [{.}] \begin{raggedright}
\textbf{return} $(n,\mathbf{s},\mathbf{o},\mathbf{i},c)$ 
\par\end{raggedright}
\end{algor}
\caption{\textsc{\label{alg:ChasmInitialize}ChasmInitialize}}
\end{algorithm}
\end{minipage}\hfill{}%
\begin{minipage}[t]{0.5\columnwidth}%
\begin{procedure}[H]
\begin{raggedright}
\textbf{Parameters:} $g=(n,\mathbf{s},\mathbf{o},\mathbf{i},c)$,
a Chasm PRG
\par\end{raggedright}
\begin{algor}[1]
\item [{.}] \begin{raggedright}
$\mathbf{x}\gets((\mathbf{s}\ggg1)\ll\langle2^{n}\rangle)+(\mathbf{s}\ll\langle1^{n}\rangle)+\mathbf{s}\lll1$
\{\{Collect the neighbors for each cell.\}\}
\par\end{raggedright}
\item [{.}] $\mathbf{f}\gets B[\mathbf{i}]\oplus\mathbf{o}$
\item [{.}] $\mathbf{s}\gets(\mathbf{f}\gg\mathbf{x})\cdot\langle1^{n}\rangle$
\item [{for}] $j=1,\ldots,n$ 

\begin{algor}[1]
\item [{.}] $\mathbf{i}_{j}\gets\mathbf{i}_{j}+1$
\end{algor}
\item [{endfor}]~
\item [{.}] $c\gets c+1$
\end{algor}
\caption{\textsc{\label{pro:ChasmTimeStep}ChasmTimeStep}}
\end{procedure}
\end{minipage}

\end{figure*}

\begin{algorithm}
\textbf{Input:} $g=(n,\mathbf{s},\mathbf{o},\mathbf{i},c)$, a Chasm
PRG

\textbf{Output:} $\mathbf{v}\in\{0,1\}^{n}$
\begin{algor}[1]
\item [{.}] $\mathbf{v}\gets\langle0^{n}\rangle$
\item [{for}] $j=1,\ldots,4$

\begin{algor}[1]
\item [{for}] $k=1,\ldots,n/4$

\begin{algor}[1]
\item [{.}] \nameref{pro:ChasmTimeStep}$(g)$ \{\{update $\mathbf{s}$\}\}
\end{algor}
\item [{endfor}]~
\item [{.}] $\mathbf{v}\gets\mathbf{v}\oplus\mathbf{s}$
\end{algor}
\item [{endfor}]~
\item [{if}] $c\geq256$ \{\{check end of cycle\}\}

\begin{algor}[1]
\item [{.}] $\mathbf{t}\gets\mathbf{s}$
\item [{.}] $\mathbf{u}\gets\langle\rangle$
\item [{for}] $j=1,\ldots,8$

\begin{algor}[1]
\item [{.}] \nameref{pro:ChasmTimeStep}$(g)$ \{\{update $\mathbf{s}$\}\}
\item [{.}] $\mathbf{u}\gets\mathbf{u}||\mathbf{s}$
\end{algor}
\item [{endfor}]~
\item [{.}] $\mathbf{o}\gets\mathbf{o}\oplus\mathbf{u}$
\item [{.}] $\mathbf{s}\gets\mathbf{t}$
\item [{.}] $c\gets0$
\end{algor}
\item [{endif}]~
\item [{.}] \textbf{return} $\mathbf{v}$ 
\end{algor}
\caption{\textsc{\label{alg:ChasmNext}ChasmNext}}
\end{algorithm}

Note that \nameref{alg:ChasmNext} requires modification when $n\nmid256$
or when $n>256$ to ensure the cycle is restarted at the correct time
step. This presentation is simplified for clarity.

\subsection{Security of Chasm}

Recall that two things are required to meet the definition of a forward-secure
PRG:
\begin{enumerate}
\item The next output is hard to predict given previous outputs, and
\item The previous output is hard to compute given the current (stored)
state of the generator.
\end{enumerate}
Proof of either of these properties based on reasonable assumptions
remains open. We conjecture, however, that computing previous outputs
given the current state is closely related to the $k$-SIMPLE-INVERT
problem.

\subsection{Statistical Testing of Chasm}

We have submitted Chasm to the Statistical Test Suite (STS) version
2.1.1 of the National Institute of Standards and Technology \cite{Rukhin2010}.
STS takes a number of sequences generated by the PRG in question and
runs a battery of statistical tests. Each statistic has a distribution
of possible values assuming the null hypothesis that the sequence
is random is true. STS analyzes the observed distribution of statistics
for each test and uses that to draw a conclusion about the null hypothesis.

A significance level $\alpha$ is chosen to help in drawing this conclusion.
STS determines a $p$-value for each statistic run over a single test
sequence. This $p$-value is the probability that a perfect random
generator would produce a seemingly less random (or ``worse'') sequence
than the one tested. The more extreme the $p$-value, the further
out in the ``tails'' the test sequence is in the distribution of
the possible statistic values when those values are computed using
truly random sequences. If the $p$-value is less than our significance
level $\alpha$, STS rejects the null hypothesis that the test sequence
is random; it is simply too unlikely that it came from a random source. 

While a single test sequence may result in a statistic outside the
expected natural range, we must remember that randomness is a probabilistic
property. A certain amount of failing sequences should be expected
from any truly random generator; they should just happen according
to the distribution of the statistic. For this reason, STS runs a
group of sequences through each test and looks for a proportion of
sequences passing a given statistic according to a confidence interval
and also a uniform distribution of $p$-values in the group for that
statistic. These two metrics provide for a high-level conclusion to
be made on the PRG when tested with sufficient seeds, test sequences,
and sequence length. 

The 15 tests employed by the STS are as follows:
\begin{enumerate}
\item The Frequency (Monobit) Test, 
\item Frequency Test within a Block, 
\item The Runs Test, 
\item Tests for the Longest-Run-of-Ones in a Block, 
\item The Binary Matrix Rank Test, 
\item The Discrete Fourier Transform (Spectral) Test, 
\item The Non-overlapping Template Matching Test, 
\item The Overlapping Template Matching Test, 
\item Maurer's \textquotedbl{}Universal Statistical\textquotedbl{} Test, 
\item The Linear Complexity Test, 
\item The Serial Test, 
\item The Approximate Entropy Test, 
\item The Cumulative Sums (Cusums) Test, 
\item The Random Excursions Test, and 
\item The Random Excursions Variant Test. 
\end{enumerate}
Several of these tests are run a number of times with slightly different
parameters, leading to a total of 188 individual tests. Each group
comprises 100 sequences, each 1,000,000 bits in length, giving a total
of 18,800 individual test results. STS reports the proportion of sequences
which pass each individual statistic and compares that to the expected
proportion computed at a chosen significance level. If the observed
proportion is less than expected, we say the group has a \emph{proportion}
failure. STS also reports a $\chi^{2}$ test on the distribution of
the $p$-values to check that each statistic is uniformly distributed.
If the resulting $p$-value of the distribution of $p$-values is
less than 0.0001, we say the group has a \emph{uniformity} failure.
All testing uses a significance level $\alpha=0.01$.

We wish to examine the behavior of Chasm using a number of different
seeds and bit length configurations in order to build general confidence
in the scheme. We chose 18 ``structured'' seeds having various regular
patterns and 20 random seeds for Chasm PRGs having bit lengths of
$n=8$, 16, and 128. Structured seed patterns are listed in \tabref{StructuredSeeds}.
Random seeds were obtained from the Hotbits radioactive decay generator
\cite{hotbits}. We generated 12.5 MB of data from each seed for each
$n$ and used that data for the 100$\times$1,000,000 bit test outlined
above. Sequences were generated from a python implementation using
numpy. For each group of 100 sequences, we recorded the STS uniformity
$p$-value and the proportion of passed tests.

\begin{table}
\begin{centering}
\begin{tabular}{|c|c|c|c|c|c|}
\hline 
Seed & Pattern & Seed & Pattern & Seed & Pattern\tabularnewline
\hline 
\hline 
1 & $\mathtt{0}^{*}$ & 7 & $(\mathtt{1100})^{*}$ & 13 & $(\mathtt{0^{\mathrm{7}}1})^{*}$\tabularnewline
\hline 
2 & $(\mathtt{01})^{*}$ & 8 & $\mathtt{0}^{n/2}\mathtt{1}^{n/2}$ & 14 & $(\mathtt{1^{\mathrm{7}}0})^{*}$\tabularnewline
\hline 
3 & $(\mathtt{010})^{*}$ & 9 & $(\mathtt{1110})^{*}$ & 15 & $(\mathtt{10^{\mathrm{7}}})^{*}$\tabularnewline
\hline 
4 & $(\mathtt{101})^{*}$ & 10 & $(\mathtt{0001})^{*}$ & 16 & $(\mathtt{01^{\mathrm{7}}})^{*}$\tabularnewline
\hline 
5 & $\mathtt{1}^{*}$ & 11 & $(\mathtt{1011})^{*}$ & 17 & $(\mathtt{1^{\mathrm{4}}0^{\mathrm{4}}})^{*}$\tabularnewline
\hline 
6 & $(\mathtt{111000})^{*}$ & 12 & $(\mathtt{0100})^{*}$ & 18 & $(\mathtt{0^{\mathrm{2}}1^{\mathrm{4}}0^{\mathrm{2}}})^{*}$\tabularnewline
\hline 
\end{tabular}
\par\end{centering}

\caption{\label{tab:StructuredSeeds}\textsf{Structured Seed patterns. These
patterns are repeated as necessary to provide a $9n$-bit seed for
each test.}}
\end{table}

\subsection{Test Results}

We will examine the test results from a number of viewpoints. Overall,
there were 2,025,992 individual test statistics computed over 11,400
sequences of 1,000,000 bits each. Of these, 2,005,166 passed their
individual test which is 141 less than the 2,005,307 we would expect
at $\alpha=0.01$. In straight percentages, this is a 98.972\% passing
rate. For comparison, we ran a 1,000-sequence test of 1,000,000 bits
each on the Blum-Blum-Shub (BBS) generator, widely regarded as a strong
PRG. This test passed 98.996\% of individual tests.

\figref{IndividuaFailures} charts how each tested Chasm generator
configuration performed. The left bar in each category shows the number
of observed failures of individual tests, while the right bar shows
the maximum number of failures expected at $\alpha=0.01$. Only the
128-bit configuration using random seeds goes beyond the expected
value, failing 5 tests more than allowed. Notice there is a slightly
different number of tests in each category. This is due to the variable
nature of the Random Excursion tests, which use sequences of different
lengths and so can produce a variable number of sequences to test.
The lower numbers for structured configurations is simply due to testing
only 18 structured seeds vs. 20 random seeds.

\begin{figure}
\begin{centering}
\includegraphics[bb=120bp 270bp 610bp 500bp,clip,scale=0.85]{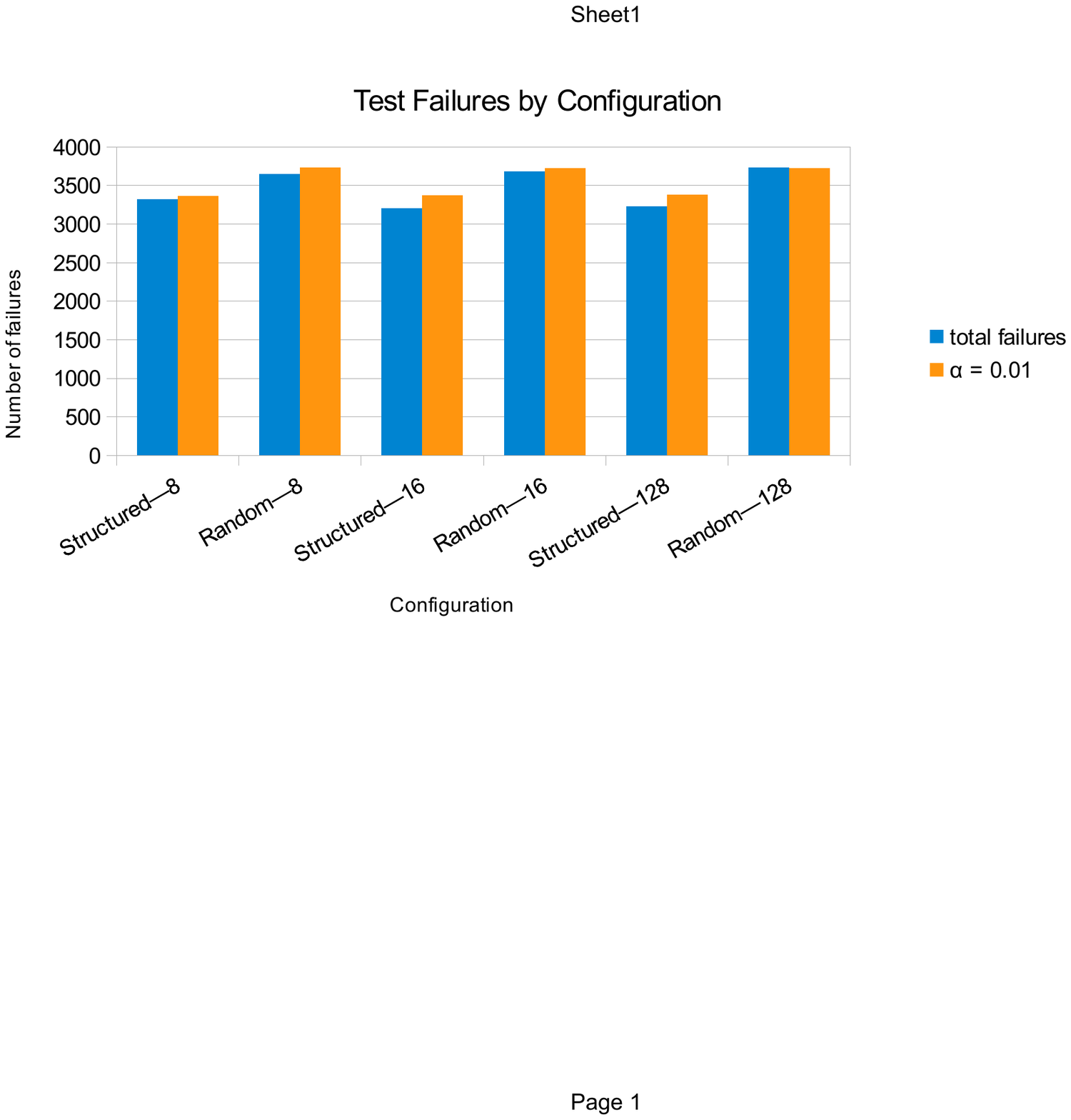}
\par\end{centering}

\caption{\label{fig:IndividuaFailures}\textsf{Individual test failures by
Configuration}}
\end{figure}

These sequences were tested in 21,432 groups of 100 sequences each.
Of these, 20 groups had uniformity failures and 92 had proportion
failures. \tabref{GroupByCat} shows how these failures were distributed
among the test configurations. For a closer look at the effect of
various seed patterns, \figref{GroupBySeed} shows the number of each
failure type for each seed. Seed numbers 19 and above are random,
and are different for each $n$. Seed numbers 18 and below are of
different lengths, but use the same pattern as described in \tabref{StructuredSeeds}
for all $n$. No clear pattern seems to emerge from this data.

\begin{table}
\begin{centering}
\begin{tabular}{ccccc}
$n$ & seed type & uniformity failures & proportion failures & \% of groups\tabularnewline
\hline 
\hline 
\multirow{2}{*}{8} & structured & 11 & 27 & 0.13\%\tabularnewline
 & random & 4 & 18 & 0.08\%\tabularnewline
\hline 
\multirow{2}{*}{16} & structured & 1 & 11 & 0.05\%\tabularnewline
 & random & 1 & 8 & 0.04\%\tabularnewline
\hline 
\multirow{2}{*}{128} & structured & 3 & 12 & 0.06\%\tabularnewline
 & random & 0 & 16 & 0.07\%\tabularnewline
\hline 
\end{tabular}
\par\end{centering}

\caption{\label{tab:GroupByCat}\textsf{Group test results by category}}
\end{table}

\begin{figure}
\begin{centering}
\includegraphics[bb=80bp 175bp 650bp 490bp,clip,scale=0.8]{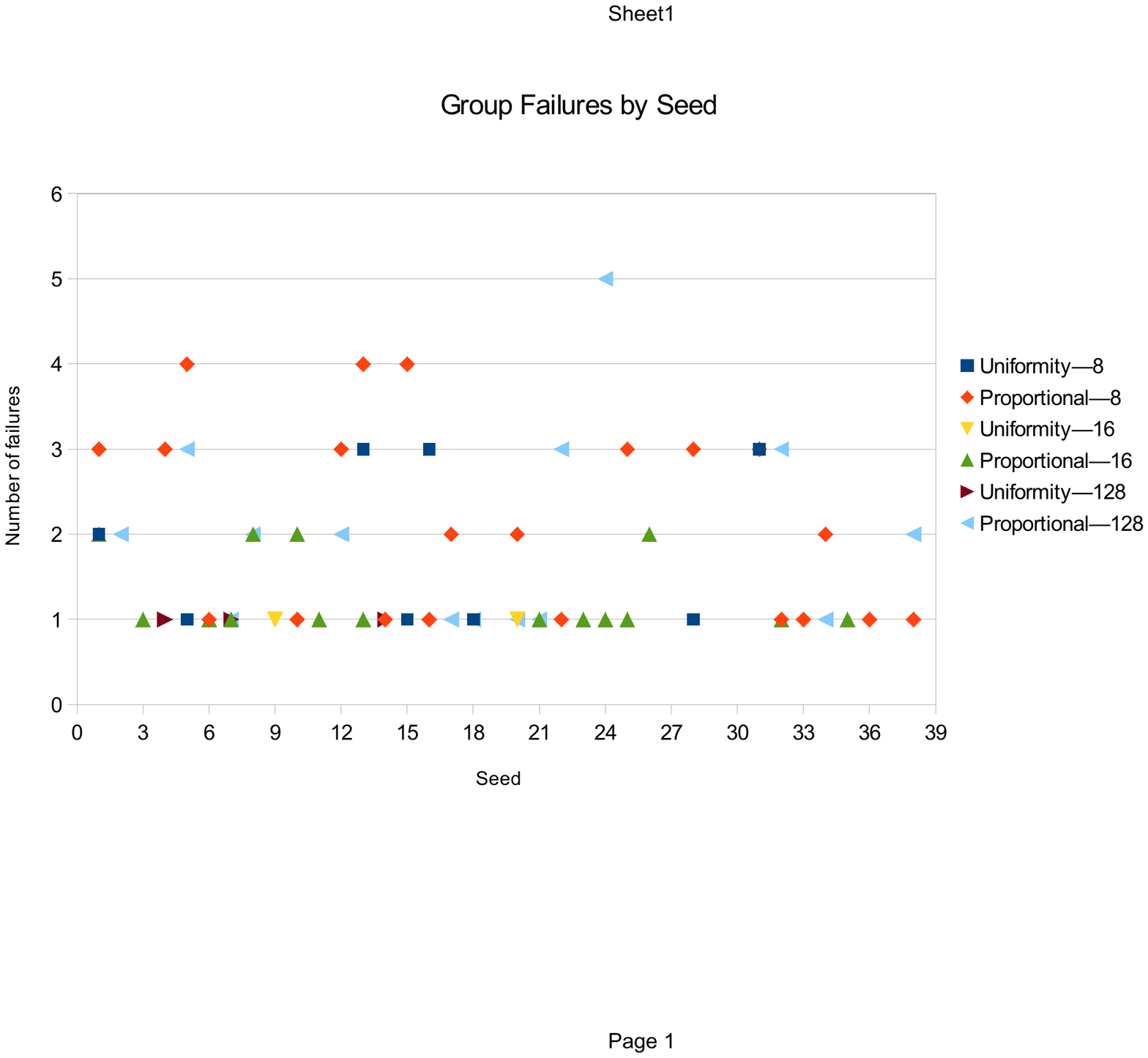}
\par\end{centering}

\caption{\label{fig:GroupBySeed}\textsf{Group failures by seed}}
\end{figure}

We made a few other observations of a 4-bit configuration of Chasm.
We generated 3,750,000 bytes of data from this generator seeded with
0s and subjected it to STS on 30 sequences of 1,000,000 bits each.
The output had an average 0.49\% bias towards '1' bits, and so failed
13 out of 30 groups in the frequency test, 10 and 14 groups in the
two Cumulative Sums tests, and 8 groups in the Runs test. However,
all other tests passed with no proportion or uniformity errors. These
are impressive results for a 4-cell construction, especially considering
the similar results of 18-cell constructions in \cite{GuanT04}. Further,
another run was observed to generate over 50 megabytes of data without
exhibiting strictly periodic behavior. No statistical measurements
were made on the result.

Finally, we note the python implementation was able to generate 475
megabytes in roughly 12 hours while the STS C implementation of the
BBS generator took close to 72 hours to generate 125 megabytes.

\section{Conclusions}

The results of \secref{AnalysisNonLinearCA} and \secref{Analysis-of-Linear-CAs}
make it hard to have confidence in the ability of one-dimensional
two-state three-neighbor cyclic CA to provide secure cryptographic
primitives according to modern definitions. When the rule set is known
and linear and the entire state vector is known, inverting the CA
is of course straight forward. \algref{InvertToggleRule} extends
this condition to uniform CA whose rule is a non-linear toggle rule,
the only kind shown to perform well in statistical testing. When a
known temporal sequence is produced by a known rule set, linear or
not, the techniques presented in \subbref{ImproveMeierStaffel} seem
likely to provide a deterministic algorithm for recovering the seed
of that sequence, improving on the previously known probabilistic
algorithm. Only CA which choose their rules dynamically have no impending
fatal cryptanalytic attack. \secref{Analysis-of-Linear-CAs} shows
these CA are at best very inefficient when the dynamic rules are limited
to linear rules. It may be that dynamically choosing from some set
of non-linear rules provides better efficiency.

Returning to cellular automaton's historical roots by allowing multiple
states per cell, on the other hand, provides a computational model
whose power is determined by whether P=NP. Simple FSCA seem to offer
a candidate one-way function for use in cryptographic primitives.
We may imagine secure PRGs, hash functions, block ciphers, and stream
ciphers based on carefully designed FSCA. These may also allow a security
parameter which lets designers choose a security level appropriate
for the application.

The Chasm family of PRGs presented in \secref{FscaBasedPrg} approaches
one such primitive. Chasm allows for a security parameter whereas
block-based primitives, such as AES- or SHA-1-based PRGs require fixed
sizes. An application can use Chasm at 3 bits and up. If the quality
of data produced by low-$n$ implementations can be shored up, this
may be an attractive option for resource-constrained applications
such as VLSI testing, Bluetooth/wireless applications, RFID readers,
Key fobs, etc. The hardware requirements are a 256-byte lookup table,
$n$-byte offset table (for the bytes of $\varphi$), $n$ 8-to-1
MUXes, and an XOR accumulator. While demonstration that Chasm satisfies
the definition of a forward-secure PRG assuming the one-wayness of
FSCA remains open, it does not seem too far off. If shown, Chasm would
offer a nicely parallel PRG suitable for hardware, GPU, and vector
register implementations. Its performance even in single threaded
interpreted languages seems far superior to the BBS generator, so
a parallel implementation of Chasm with security proofs would be a
very attractive primitive. Setting these proofs aside, our test results
suggest that Chasm is certainly a viable option for a statistical
pseudorandom generator. While there are slightly more failures in
the STS suite than one would like to see, it is quite conceivable
that small adjustments to the Chasm algorithm can correct this. Simply
mixing more values per output may be adequate.

Beyond cryptography, it may be worth considering other applications
of more or less bounded FSCA. For instance, are their problems in
NC0 which might be modeled as FSCA computations and examined from
a different perspective? What would a poly($n$) bounded FSCA be capable
of? Are there applications to problems in PSPACE or EXPTIME? It seems
there are new questions for the adherents of cellular automata to
tackle.

\section{Future Work}

There are quite a few open questions raised above. First, it may be
interesting to consider a decision problem related to \propref{2}:
Given the state vector for a uniform cyclic CA over any rule, is there
a deterministic polynomial-time algorithm to compute its predecessor
state if one exists? We've shown that, when certain patterns exist
in the state vector for certain rules, such an algorithm exists. It
may be interesting to consider the possibilities left when those patterns
\emph{do not} exist.

Second, the algorithm in \subbref{ImproveMeierStaffel} to improve
the bounds on the Meier/Staffelbach algorithm remains to be fully
developed and tested. Such an improvement would have an impact on
much of the literature related to current CA-based cryptosystems.
Related to this are all the open problems discussed in \subbref{CaOpenProblems}.

With respect to FSCA and Chasm, proofs of forward and backward security
would be greatly beneficial to instill confidence in a new primitive.
Also, further study of the linearity and differential properties of
the generator is needed, as well as a more complete assessment of
the potential for weak keys. Finally, applying FSCA to other cryptographic
primitives seems likely to yield interesting results.

\bibliographystyle{plainnat}
\bibliography{bibliography}

\appendix

\section{AES S-Box}

The Values of the AES S-Box are provided for reference. See \cite{FIPS197}
for the full AES specification.

\begin{table}[H]
\begin{tabular}{|c||c|c|c|c|c|c|c|c|c|c|c|c|c|c|c|c|}
\cline{2-17} 
\multicolumn{1}{c||}{} & 0 & 1 & 2 & 3 & 4 & 5 & 6 & 7 & 8 & 9 & a & b & c & d & e & f\tabularnewline
\hline 
\hline 
0 & 63 & 7c & 77 & 7b & f2 & 6b & 6f & c5 & 30 & 01 & 67 & 2b & fe & d7 & ab & 76\tabularnewline
\hline 
10 & ca & 82 & c9 & 7d & fa & 59 & 47 & f0 & ad & d4 & a2 & af & 9c & a4 & 72 & c0\tabularnewline
\hline 
20 & b7 & fd & 93 & 26 & 36 & 3f & f7 & cc & 34 & a5 & e5 & f1 & 71 & d8 & 31 & 15\tabularnewline
\hline 
30 & 04 & c7 & 23 & c3 & 18 & 96 & 5 & 9a & 07 & 12 & 80 & e2 & eb & 27 & b2 & 75\tabularnewline
\hline 
40 & 09 & 83 & 2c & 1a & 1b & 6e & 5a & a0 & 52 & 3b & d6 & b3 & 29 & e3 & 2f & 84\tabularnewline
\hline 
50 & 53 & d1 & 00 & ed & 20 & fc & b1 & 5b & 6a & cb & be & 39 & 4a & 4c & 58 & cf\tabularnewline
\hline 
60 & d0 & ef & aa & fb & 43 & 4d & 33 & 85 & 45 & f9 & 02 & 7f & 50 & 3c & 9f & a8\tabularnewline
\hline 
70 & 51 & a3 & 40 & 8f & 92 & 9d & 38 & f5 & bc & b6 & da & 21 & 10 & ff & f3 & d2\tabularnewline
\hline 
80 & cd & 0c & 13 & ec & 5f & 97 & 44 & 17 & c4 & a7 & 7e & 3d & 64 & 5d & 19 & 73\tabularnewline
\hline 
90 & 60 & 81 & 4f & dc & 22 & 2a & 90 & 88 & 46 & ee & b8 & 14 & de & 5e & 0b & db\tabularnewline
\hline 
a0 & e0 & 32 & 3a & 0a & 49 & 06 & 24 & 5c & c2 & d3 & ac & 62 & 91 & 95 & e4 & 79\tabularnewline
\hline 
b0 & e7 & c8 & 37 & 6d & 8d & d5 & 4e & a9 & 6c & 56 & f4 & ea & 65 & 7a & ae & 08\tabularnewline
\hline 
c0 & ba & 78 & 25 & 2e & 1c & a6 & b4 & c6 & e8 & dd & 74 & 1f & 4b & bd & 8b & 8a\tabularnewline
\hline 
d0 & 70 & 3e & b5 & 66 & 48 & 03 & f6 & 0e & 61 & 35 & 57 & b9 & 86 & c1 & 1d & 9e\tabularnewline
\hline 
e0 & e1 & f8 & 98 & 11 & 69 & d9 & 8e & 94 & 9b & 1e & 87 & e9 & ce & 55 & 28 & df\tabularnewline
\hline 
f0 & 8c & a1 & 89 & 0d & bf & e6 & 42 & 68 & 41 & 99 & 2d & 0f & b0 & 54 & bb & 16\tabularnewline
\hline 
\end{tabular}

\caption{\label{tab:AesSBox}\textsf{Hexadecimal values of the AES S-Box as
a lookup table.}}

\end{table}

\section{Experiments}

The sections in this appendix demonstrate representative experiments
showing some property or other. These experiments were conducted with
python 2.7.1 using Numpy 1.6.1. In some cases, actual python interpreter
session are captured.

\subsection{\label{apn:SymEffects}Effects of Rule Symmetry on Multiple Seeds
for a Given Sequence}

The following experiments show that symmetry in the rule set seems
to affect the number of seeds that can generate the same temporal
sequence. The function OtherMatchingSeeds finds all seeds that produce
the same temporal sequence over $n$ steps as a given seed under a
given rule set.
\begin{codesmall}
>\textcompwordmark{}>\textcompwordmark{}>~OtherMatchingSeeds({[}150,150,150,150,90,150,150,150,150{]},~SeedFromStr('010110110'))

{[}0~0~0~0~1~1~1~0~0{]}

{[}0~0~0~1~1~0~1~0~0{]}

{[}0~0~1~0~1~1~0~0~0{]}

{[}0~0~1~1~1~0~0~0~0{]}

{[}0~1~0~0~1~1~1~1~0{]}

{[}0~1~0~1~1~0~1~1~0{]}

{[}0~1~1~0~1~1~0~1~0{]}

{[}0~1~1~1~1~0~0~1~0{]}

{[}1~0~0~0~1~1~1~0~1{]}

{[}1~0~0~1~1~0~1~0~1{]}

{[}1~0~1~0~1~1~0~0~1{]}

{[}1~0~1~1~1~0~0~0~1{]}

{[}1~1~0~0~1~1~1~1~1{]}

{[}1~1~0~1~1~0~1~1~1{]}

{[}1~1~1~0~1~1~0~1~1{]}

{[}1~1~1~1~1~0~0~1~1{]}

>\textcompwordmark{}>\textcompwordmark{}>~OtherMatchingSeeds({[}150,150,150,150,150,150,150,90,150{]},~SeedFromStr('110010110'))

{[}0~1~1~1~1~1~0~1~1{]}

{[}1~1~0~0~1~0~1~1~0{]}

>\textcompwordmark{}>\textcompwordmark{}>~OtherMatchingSeeds({[}150,150,150,150,150,150,150,150,90{]},~SeedFromStr('110010110'))

{[}1~1~0~0~1~0~1~1~0{]}
\end{codesmall}
Severely asymmetrical rules limit the matches severly--one rule 90
at the edge means there's a single seed that generates the temporal
sequence.
\begin{codesmall}
>\textcompwordmark{}>\textcompwordmark{}>~OtherMatchingSeeds({[}150,150,150,150,150,150,150,150,90{]},~SeedFromStr('011001011'))

{[}0~1~1~0~0~1~0~1~1{]}

>\textcompwordmark{}>\textcompwordmark{}>~OtherMatchingSeeds({[}150,150,150,150,150,150,150,150,90{]},~SeedFromStr('101100101'))

{[}1~0~1~1~0~0~1~0~1{]}

>\textcompwordmark{}>\textcompwordmark{}>~OtherMatchingSeeds({[}150,150,150,150,150,150,150,150,165{]},~SeedFromStr('101100101'))

{[}1~0~1~1~0~0~1~0~1{]}
\end{codesmall}
As long as that mismatched rule is a two-input rule, there's only
one seed. If it's changed to a 3-input rule, even complementary, many
matching seeds exist.
\begin{codesmall}
>\textcompwordmark{}>\textcompwordmark{}>~OtherMatchingSeeds({[}150,150,150,150,150,150,150,150,105{]},~SeedFromStr('101100101'))

{[}0~0~0~0~0~1~0~0~0{]}

{[}0~0~0~1~0~0~0~0~0{]}

{[}0~0~1~0~0~1~1~0~0{]}

{[}0~0~1~1~0~0~1~0~0{]}

{[}0~1~0~0~0~1~0~1~0{]}

{[}0~1~0~1~0~0~0~1~0{]}

{[}0~1~1~0~0~1~1~1~0{]}

{[}0~1~1~1~0~0~1~1~0{]}

{[}1~0~0~0~0~1~0~0~1{]}

{[}1~0~0~1~0~0~0~0~1{]}

{[}1~0~1~0~0~1~1~0~1{]}

{[}1~0~1~1~0~0~1~0~1{]}

{[}1~1~0~0~0~1~0~1~1{]}

{[}1~1~0~1~0~0~0~1~1{]}

{[}1~1~1~0~0~1~1~1~1{]}

{[}1~1~1~1~0~0~1~1~1{]}

>\textcompwordmark{}>\textcompwordmark{}>~OtherMatchingSeeds({[}150,150,150,150,150,150,\uline{105},150,150{]},~SeedFromStr('101100101'))

{[}0~0~0~0~0~1~0~0~0{]}

{[}0~0~0~1~0~0~0~0~0{]}

{[}0~0~1~0~0~1~1~0~0{]}

{[}0~0~1~1~0~0~1~0~0{]}

{[}0~1~0~0~0~1~0~1~0{]}

{[}0~1~0~1~0~0~0~1~0{]}

{[}0~1~1~0~0~1~1~1~0{]}

{[}0~1~1~1~0~0~1~1~0{]}

{[}1~0~0~0~0~1~0~0~1{]}

{[}1~0~0~1~0~0~0~0~1{]}

{[}1~0~1~0~0~1~1~0~1{]}

{[}1~0~1~1~0~0~1~0~1{]}

{[}1~1~0~0~0~1~0~1~1{]}

{[}1~1~0~1~0~0~0~1~1{]}

{[}1~1~1~0~0~1~1~1~1{]}

{[}1~1~1~1~0~0~1~1~1{]}

>\textcompwordmark{}>\textcompwordmark{}>~OtherMatchingSeeds({[}150,150,150,150,\uline{105},150,\uline{105},150,150{]},~SeedFromStr('101100101'))

{[}0~0~0~0~0~1~0~0~0{]}

{[}0~0~0~1~0~0~0~0~0{]}

{[}0~0~1~0~0~1~1~0~0{]}

{[}0~0~1~1~0~0~1~0~0{]}

{[}0~1~0~0~0~1~0~1~0{]}

{[}0~1~0~1~0~0~0~1~0{]}

{[}0~1~1~0~0~1~1~1~0{]}

{[}0~1~1~1~0~0~1~1~0{]}

{[}1~0~0~0~0~1~0~0~1{]}

{[}1~0~0~1~0~0~0~0~1{]}

{[}1~0~1~0~0~1~1~0~1{]}

{[}1~0~1~1~0~0~1~0~1{]}

{[}1~1~0~0~0~1~0~1~1{]}

{[}1~1~0~1~0~0~0~1~1{]}

{[}1~1~1~0~0~1~1~1~1{]}

{[}1~1~1~1~0~0~1~1~1{]}

>\textcompwordmark{}>\textcompwordmark{}>~OtherMatchingSeeds({[}\uline{105},150,150,150,\uline{105},150,\uline{105},150,150{]},~SeedFromStr('101100101'))

{[}0~0~0~0~0~1~0~0~0{]}

{[}0~0~0~1~0~0~0~0~0{]}

{[}0~0~1~0~0~1~1~0~0{]}

{[}0~0~1~1~0~0~1~0~0{]}

{[}0~1~0~0~0~1~0~1~0{]}

{[}0~1~0~1~0~0~0~1~0{]}

{[}0~1~1~0~0~1~1~1~0{]}

{[}0~1~1~1~0~0~1~1~0{]}

{[}1~0~0~0~0~1~0~0~1{]}

{[}1~0~0~1~0~0~0~0~1{]}

{[}1~0~1~0~0~1~1~0~1{]}

{[}1~0~1~1~0~0~1~0~1{]}

{[}1~1~0~0~0~1~0~1~1{]}

{[}1~1~0~1~0~0~0~1~1{]}

{[}1~1~1~0~0~1~1~1~1{]}

{[}1~1~1~1~0~0~1~1~1{]}
\end{codesmall}
So placing complementary 3-input rules in various asymetric locations
appears not to change the results. Here, different seeds are shown
to produce the same temporal sequence.
\begin{codesmall}
>\textcompwordmark{}>\textcompwordmark{}>~TempSeqFromSeed({[}\uline{105},150,150,150,\uline{105},150,\uline{105},150,150{]},~SeedFromStr('100001001'),~18)

array({[}0,~0,~0,~0,~0,~1,~1,~0,~1,~1,~1,~1,~0,~0,~1,~0,~0,~0,~0{]},~dtype=uint8)

>\textcompwordmark{}>\textcompwordmark{}>~TempSeqFromSeed({[}\uline{105},150,150,150,\uline{105},150,\uline{105},150,150{]},~SeedFromStr('101100101'),~18)

array({[}0,~0,~0,~0,~0,~1,~1,~0,~1,~1,~1,~1,~0,~0,~1,~0,~0,~0,~0{]},~dtype=uint8)
\end{codesmall}
The location of the asymmetry appears to affect the number of matching
seeds.  The following shows how the number of matches drops as we
go from perfectly symmetrical through different asymmetry patterns.
Notice that in the extreme cases, the only seed that matches is the
original seed, stored as s1.
\begin{codesmall}
>\textcompwordmark{}>\textcompwordmark{}>~s1~=~SeedFromStr('101100101')

>\textcompwordmark{}>\textcompwordmark{}>~rules~=~{[}150,~90,~150,~150,~150,~150,~150,~90,~150{]}

>\textcompwordmark{}>\textcompwordmark{}>~OtherMatchingSeeds(rules,~s1)

{[}0~0~0~0~0~1~0~0~0{]}

{[}0~0~0~1~0~0~0~0~0{]}

{[}0~0~1~0~0~1~1~0~0{]}

{[}0~0~1~1~0~0~1~0~0{]}

{[}0~1~0~0~0~1~0~1~0{]}

{[}0~1~0~1~0~0~0~1~0{]}

{[}0~1~1~0~0~1~1~1~0{]}

{[}0~1~1~1~0~0~1~1~0{]}

{[}1~0~0~0~0~1~0~0~1{]}

{[}1~0~0~1~0~0~0~0~1{]}

{[}1~0~1~0~0~1~1~0~1{]}

{[}1~0~1~1~0~0~1~0~1{]}

{[}1~1~0~0~0~1~0~1~1{]}

{[}1~1~0~1~0~0~0~1~1{]}

{[}1~1~1~0~0~1~1~1~1{]}

{[}1~1~1~1~0~0~1~1~1{]}

>\textcompwordmark{}>\textcompwordmark{}>~rules~=~{[}\uline{90,}~\uline{150},~150,~150,~150,~150,~150,~90,~150{]}

>\textcompwordmark{}>\textcompwordmark{}>~OtherMatchingSeeds(rules,~s1)

{[}1~0~1~1~0~0~1~0~1{]}

>\textcompwordmark{}>\textcompwordmark{}>~rules~=~{[}90,~~\uline{90},~150,~150,~150,~150,~150,~90,~150{]}

>\textcompwordmark{}>\textcompwordmark{}>~OtherMatchingSeeds(rules,~s1)

{[}1~0~1~1~0~0~1~0~1{]}

>\textcompwordmark{}>\textcompwordmark{}>~rules~=~{[}\uline{150,}~90,~~\uline{90},~150,~150,~150,~150,~90,~150{]}

>\textcompwordmark{}>\textcompwordmark{}>~OtherMatchingSeeds(rules,~s1)

{[}0~1~0~1~0~0~0~0~1{]}

{[}0~1~1~0~0~1~1~1~0{]}

{[}1~0~0~0~0~1~0~1~0{]}

{[}1~0~1~1~0~0~1~0~1{]}

>\textcompwordmark{}>\textcompwordmark{}>~rules~=~{[}150,~90,~~90,~150,~150,~150,~~\uline{90},~90,~150{]}

>\textcompwordmark{}>\textcompwordmark{}>~OtherMatchingSeeds(rules,~s1)

{[}0~0~0~0~0~1~0~0~0{]}

{[}0~0~0~1~0~0~0~0~0{]}

{[}0~0~1~0~0~1~1~0~0{]}

{[}0~0~1~1~0~0~1~0~0{]}

{[}0~1~0~0~0~1~0~1~0{]}

{[}0~1~0~1~0~0~0~1~0{]}

{[}0~1~1~0~0~1~1~1~0{]}

{[}0~1~1~1~0~0~1~1~0{]}

{[}1~0~0~0~0~1~0~0~1{]}

{[}1~0~0~1~0~0~0~0~1{]}

{[}1~0~1~0~0~1~1~0~1{]}

{[}1~0~1~1~0~0~1~0~1{]}

{[}1~1~0~0~0~1~0~1~1{]}

{[}1~1~0~1~0~0~0~1~1{]}

{[}1~1~1~0~0~1~1~1~1{]}

{[}1~1~1~1~0~0~1~1~1{]}

>\textcompwordmark{}>\textcompwordmark{}>~rules~=~{[}150,~90,~~90,~150,~150,~150,~90,~\uline{150},~150{]}

>\textcompwordmark{}>\textcompwordmark{}>~OtherMatchingSeeds(rules,~s1)

{[}0~1~0~0~0~1~0~1~1{]}

{[}1~0~1~1~0~0~1~0~1{]}

>\textcompwordmark{}>\textcompwordmark{}>~rules~=~{[}150,150,150,150,150,90,150,150,150{]}

>\textcompwordmark{}>\textcompwordmark{}>~OtherMatchingSeeds(rules,~s1)

{[}1~0~1~1~0~0~1~0~1{]}~\\
~\\

>\textcompwordmark{}>\textcompwordmark{}>~rules~=~{[}150,90,90,150,150,150,90,150,150,90,150,150,150,150,90,150,150{]}

>\textcompwordmark{}>\textcompwordmark{}>~s3~=~SeedFromStr('00101001110111101')

>\textcompwordmark{}>\textcompwordmark{}>~OtherMatchingSeeds(rules,~s3)

{[}0~0~1~0~1~0~0~1~1~1~0~1~1~1~1~0~1{]}
\end{codesmall}

\subsection{\label{apn:SolveTempSeq}Solving Temporal Sequences Under Arbitrary
Symmetrical Rulesets}

This experiment examines how local rule differences affect a three-cell
neighbor hood by looking at difference patterns in the output under
different rule sets from a uniform rule 150 CA. The function prints
$^{H}S_{i-1:i+1}^{t}$, $^{H}S_{i-2:i+2}^{t-1}$ for each predecessor,
and $^{150}S_{i-2:i+2}^{t-1}\oplus{}^{H}S_{i-2:i+2}^{t-1}$ for all
predecessors under a hybrid rule set $H$. The notation $^{R}S_{i:j}$
denotes the state vector values from cell $i$ to $j$ under rule
set $R$. 

This information shows how to get the same successor values under
a different rule set by changing the predecessor state values. Notice
that those rules that differ only in their complementarity have constant
difference patterns for all successor values. There is no rule set
with unique differences in all 8 successor values--4 seems to be the
maximum.
\begin{codesmall}

Key:

s\_t~|-{}-~S\_i-2:i+2~at~t-1~-|~|-{}-xor~with~preds(150)-{}-|

>\textcompwordmark{}>\textcompwordmark{}>~rs.hybridPredDiffs()

Hybrid~Diff~Rules:~90~90~90~:

000~00000~01010~10101~11111~00000~00111~00011~00100

001~00001~01011~10100~11110~00000~00111~00011~00100

010~00010~01000~10111~11101~00001~00110~00010~00101

011~00011~01001~10110~11100~00001~00110~00010~00101

100~00101~01111~10000~11010~00011~00100~00000~00111

101~00100~01110~10001~11011~00011~00100~00000~00111

110~00111~01101~10010~11000~00010~00101~00001~00110

111~00110~01100~10011~11001~00010~00101~00001~00110

Hybrid~Diff~Rules:~90~90~150~:

000~00000~01011~10101~11110~00000~00110~00011~00101

001~00001~01010~10100~11111~00000~00110~00011~00101

010~00011~01000~10110~11101~00000~00110~00011~00101

011~00010~01001~10111~11100~00000~00110~00011~00101

100~00101~01110~10000~11011~00011~00101~00000~00110

101~00100~01111~10001~11010~00011~00101~00000~00110

110~00110~01101~10011~11000~00011~00101~00000~00110

111~00111~01100~10010~11001~00011~00101~00000~00110

Hybrid~Diff~Rules:~90~90~105~:

000~00001~01010~10100~11111~00001~00111~00010~00100

001~00000~01011~10101~11110~00001~00111~00010~00100

010~00010~01001~10111~11100~00001~00111~00010~00100

011~00011~01000~10110~11101~00001~00111~00010~00100

100~00100~01111~10001~11010~00010~00100~00001~00111

101~00101~01110~10000~11011~00010~00100~00001~00111

110~00111~01100~10010~11001~00010~00100~00001~00111

111~00110~01101~10011~11000~00010~00100~00001~00111

Hybrid~Diff~Rules:~90~90~165~:

000~00001~01011~10100~11110~00001~00110~00010~00101

001~00000~01010~10101~11111~00001~00110~00010~00101

010~00011~01001~10110~11100~00000~00111~00011~00100

011~00010~01000~10111~11101~00000~00111~00011~00100

100~00100~01110~10001~11011~00010~00101~00001~00110

101~00101~01111~10000~11010~00010~00101~00001~00110

110~00110~01100~10011~11001~00011~00100~00000~00111

111~00111~01101~10010~11000~00011~00100~00000~00111

Hybrid~Diff~Rules:~90~150~90~:

000~00000~01010~10111~11101~00000~00111~00001~00110

001~00001~01011~10110~11100~00000~00111~00001~00110

010~00010~01000~10101~11111~00001~00110~00000~00111

011~00011~01001~10100~11110~00001~00110~00000~00111

100~00111~01101~10000~11010~00001~00110~00000~00111

101~00110~01100~10001~11011~00001~00110~00000~00111

110~00101~01111~10010~11000~00000~00111~00001~00110

111~00100~01110~10011~11001~00000~00111~00001~00110

Hybrid~Diff~Rules:~90~150~150~:

000~00000~01011~10110~11101~00000~00110~00000~00110

001~00001~01010~10111~11100~00000~00110~00000~00110

010~00011~01000~10101~11110~00000~00110~00000~00110

011~00010~01001~10100~11111~00000~00110~00000~00110

100~00110~01101~10000~11011~00000~00110~00000~00110

101~00111~01100~10001~11010~00000~00110~00000~00110

110~00101~01110~10011~11000~00000~00110~00000~00110

111~00100~01111~10010~11001~00000~00110~00000~00110

Hybrid~Diff~Rules:~90~150~105~:

000~00001~01010~10111~11100~00001~00111~00001~00111

001~00000~01011~10110~11101~00001~00111~00001~00111

010~00010~01001~10100~11111~00001~00111~00001~00111

011~00011~01000~10101~11110~00001~00111~00001~00111

100~00111~01100~10001~11010~00001~00111~00001~00111

101~00110~01101~10000~11011~00001~00111~00001~00111

110~00100~01111~10010~11001~00001~00111~00001~00111

111~00101~01110~10011~11000~00001~00111~00001~00111

Hybrid~Diff~Rules:~90~150~165~:

000~00001~01011~10110~11100~00001~00110~00000~00111

001~00000~01010~10111~11101~00001~00110~00000~00111

010~00011~01001~10100~11110~00000~00111~00001~00110

011~00010~01000~10101~11111~00000~00111~00001~00110

100~00110~01100~10001~11011~00000~00111~00001~00110

101~00111~01101~10000~11010~00000~00111~00001~00110

110~00100~01110~10011~11001~00001~00110~00000~00111

111~00101~01111~10010~11000~00001~00110~00000~00111

Hybrid~Diff~Rules:~90~105~90~:

000~00010~01000~10101~11111~00010~00101~00011~00100

001~00011~01001~10100~11110~00010~00101~00011~00100

010~00000~01010~10111~11101~00011~00100~00010~00101

011~00001~01011~10110~11100~00011~00100~00010~00101

100~00101~01111~10010~11000~00011~00100~00010~00101

101~00100~01110~10011~11001~00011~00100~00010~00101

110~00111~01101~10000~11010~00010~00101~00011~00100

111~00110~01100~10001~11011~00010~00101~00011~00100

Hybrid~Diff~Rules:~90~105~150~:

000~00011~01000~10101~11110~00011~00101~00011~00101

001~00010~01001~10100~11111~00011~00101~00011~00101

010~00000~01011~10110~11101~00011~00101~00011~00101

011~00001~01010~10111~11100~00011~00101~00011~00101

100~00101~01110~10011~11000~00011~00101~00011~00101

101~00100~01111~10010~11001~00011~00101~00011~00101

110~00110~01101~10000~11011~00011~00101~00011~00101

111~00111~01100~10001~11010~00011~00101~00011~00101

Hybrid~Diff~Rules:~90~105~105~:

000~00010~01001~10100~11111~00010~00100~00010~00100

001~00011~01000~10101~11110~00010~00100~00010~00100

010~00001~01010~10111~11100~00010~00100~00010~00100

011~00000~01011~10110~11101~00010~00100~00010~00100

100~00100~01111~10010~11001~00010~00100~00010~00100

101~00101~01110~10011~11000~00010~00100~00010~00100

110~00111~01100~10001~11010~00010~00100~00010~00100

111~00110~01101~10000~11011~00010~00100~00010~00100

Hybrid~Diff~Rules:~90~105~165~:

000~00011~01001~10100~11110~00011~00100~00010~00101

001~00010~01000~10101~11111~00011~00100~00010~00101

010~00001~01011~10110~11100~00010~00101~00011~00100

011~00000~01010~10111~11101~00010~00101~00011~00100

100~00100~01110~10011~11001~00010~00101~00011~00100

101~00101~01111~10010~11000~00010~00101~00011~00100

110~00110~01100~10001~11011~00011~00100~00010~00101

111~00111~01101~10000~11010~00011~00100~00010~00101

Hybrid~Diff~Rules:~90~165~90~:

000~00010~01000~10111~11101~00010~00101~00001~00110

001~00011~01001~10110~11100~00010~00101~00001~00110

010~00000~01010~10101~11111~00011~00100~00000~00111

011~00001~01011~10100~11110~00011~00100~00000~00111

100~00111~01101~10010~11000~00001~00110~00010~00101

101~00110~01100~10011~11001~00001~00110~00010~00101

110~00101~01111~10000~11010~00000~00111~00011~00100

111~00100~01110~10001~11011~00000~00111~00011~00100

Hybrid~Diff~Rules:~90~165~150~:

000~00011~01000~10110~11101~00011~00101~00000~00110

001~00010~01001~10111~11100~00011~00101~00000~00110

010~00000~01011~10101~11110~00011~00101~00000~00110

011~00001~01010~10100~11111~00011~00101~00000~00110

100~00110~01101~10011~11000~00000~00110~00011~00101

101~00111~01100~10010~11001~00000~00110~00011~00101

110~00101~01110~10000~11011~00000~00110~00011~00101

111~00100~01111~10001~11010~00000~00110~00011~00101

Hybrid~Diff~Rules:~90~165~105~:

000~00010~01001~10111~11100~00010~00100~00001~00111

001~00011~01000~10110~11101~00010~00100~00001~00111

010~00001~01010~10100~11111~00010~00100~00001~00111

011~00000~01011~10101~11110~00010~00100~00001~00111

100~00111~01100~10010~11001~00001~00111~00010~00100

101~00110~01101~10011~11000~00001~00111~00010~00100

110~00100~01111~10001~11010~00001~00111~00010~00100

111~00101~01110~10000~11011~00001~00111~00010~00100

Hybrid~Diff~Rules:~90~165~165~:

000~00011~01001~10110~11100~00011~00100~00000~00111

001~00010~01000~10111~11101~00011~00100~00000~00111

010~00001~01011~10100~11110~00010~00101~00001~00110

011~00000~01010~10101~11111~00010~00101~00001~00110

100~00110~01100~10011~11001~00000~00111~00011~00100

101~00111~01101~10010~11000~00000~00111~00011~00100

110~00100~01110~10001~11011~00001~00110~00010~00101

111~00101~01111~10000~11010~00001~00110~00010~00101

Hybrid~Diff~Rules:~150~90~90~:

000~00000~01111~10101~11010~00000~00010~00011~00001

001~00001~01110~10100~11011~00000~00010~00011~00001

010~00010~01101~10111~11000~00001~00011~00010~00000

011~00011~01100~10110~11001~00001~00011~00010~00000

100~00101~01010~10000~11111~00011~00001~00000~00010

101~00100~01011~10001~11110~00011~00001~00000~00010

110~00111~01000~10010~11101~00010~00000~00001~00011

111~00110~01001~10011~11100~00010~00000~00001~00011

Hybrid~Diff~Rules:~150~90~150~:

000~00000~01110~10101~11011~00000~00011~00011~00000

001~00001~01111~10100~11010~00000~00011~00011~00000

010~00011~01101~10110~11000~00000~00011~00011~00000

011~00010~01100~10111~11001~00000~00011~00011~00000

100~00101~01011~10000~11110~00011~00000~00000~00011

101~00100~01010~10001~11111~00011~00000~00000~00011

110~00110~01000~10011~11101~00011~00000~00000~00011

111~00111~01001~10010~11100~00011~00000~00000~00011

Hybrid~Diff~Rules:~150~90~105~:

000~00001~01111~10100~11010~00001~00010~00010~00001

001~00000~01110~10101~11011~00001~00010~00010~00001

010~00010~01100~10111~11001~00001~00010~00010~00001

011~00011~01101~10110~11000~00001~00010~00010~00001

100~00100~01010~10001~11111~00010~00001~00001~00010

101~00101~01011~10000~11110~00010~00001~00001~00010

110~00111~01001~10010~11100~00010~00001~00001~00010

111~00110~01000~10011~11101~00010~00001~00001~00010

Hybrid~Diff~Rules:~150~90~165~:

000~00001~01110~10100~11011~00001~00011~00010~00000

001~00000~01111~10101~11010~00001~00011~00010~00000

010~00011~01100~10110~11001~00000~00010~00011~00001

011~00010~01101~10111~11000~00000~00010~00011~00001

100~00100~01011~10001~11110~00010~00000~00001~00011

101~00101~01010~10000~11111~00010~00000~00001~00011

110~00110~01001~10011~11100~00011~00001~00000~00010

111~00111~01000~10010~11101~00011~00001~00000~00010

Hybrid~Diff~Rules:~150~150~90~:

000~00000~01101~10111~11010~00000~00000~00001~00001

001~00001~01100~10110~11011~00000~00000~00001~00001

010~00010~01111~10101~11000~00001~00001~00000~00000

011~00011~01110~10100~11001~00001~00001~00000~00000

100~00111~01010~10000~11101~00001~00001~00000~00000

101~00110~01011~10001~11100~00001~00001~00000~00000

110~00101~01000~10010~11111~00000~00000~00001~00001

111~00100~01001~10011~11110~00000~00000~00001~00001

Hybrid~Diff~Rules:~150~150~150~:

000~00000~01101~10110~11011~00000~00000~00000~00000

001~00001~01100~10111~11010~00000~00000~00000~00000

010~00011~01110~10101~11000~00000~00000~00000~00000

011~00010~01111~10100~11001~00000~00000~00000~00000

100~00110~01011~10000~11101~00000~00000~00000~00000

101~00111~01010~10001~11100~00000~00000~00000~00000

110~00101~01000~10011~11110~00000~00000~00000~00000

111~00100~01001~10010~11111~00000~00000~00000~00000

Hybrid~Diff~Rules:~150~150~105~:

000~00001~01100~10111~11010~00001~00001~00001~00001

001~00000~01101~10110~11011~00001~00001~00001~00001

010~00010~01111~10100~11001~00001~00001~00001~00001

011~00011~01110~10101~11000~00001~00001~00001~00001

100~00111~01010~10001~11100~00001~00001~00001~00001

101~00110~01011~10000~11101~00001~00001~00001~00001

110~00100~01001~10010~11111~00001~00001~00001~00001

111~00101~01000~10011~11110~00001~00001~00001~00001

Hybrid~Diff~Rules:~150~150~165~:

000~00001~01100~10110~11011~00001~00001~00000~00000

001~00000~01101~10111~11010~00001~00001~00000~00000

010~00011~01110~10100~11001~00000~00000~00001~00001

011~00010~01111~10101~11000~00000~00000~00001~00001

100~00110~01011~10001~11100~00000~00000~00001~00001

101~00111~01010~10000~11101~00000~00000~00001~00001

110~00100~01001~10011~11110~00001~00001~00000~00000

111~00101~01000~10010~11111~00001~00001~00000~00000

Hybrid~Diff~Rules:~150~105~90~:

000~00010~01111~10101~11000~00010~00010~00011~00011

001~00011~01110~10100~11001~00010~00010~00011~00011

010~00000~01101~10111~11010~00011~00011~00010~00010

011~00001~01100~10110~11011~00011~00011~00010~00010

100~00101~01000~10010~11111~00011~00011~00010~00010

101~00100~01001~10011~11110~00011~00011~00010~00010

110~00111~01010~10000~11101~00010~00010~00011~00011

111~00110~01011~10001~11100~00010~00010~00011~00011

Hybrid~Diff~Rules:~150~105~150~:

000~00011~01110~10101~11000~00011~00011~00011~00011

001~00010~01111~10100~11001~00011~00011~00011~00011

010~00000~01101~10110~11011~00011~00011~00011~00011

011~00001~01100~10111~11010~00011~00011~00011~00011

100~00101~01000~10011~11110~00011~00011~00011~00011

101~00100~01001~10010~11111~00011~00011~00011~00011

110~00110~01011~10000~11101~00011~00011~00011~00011

111~00111~01010~10001~11100~00011~00011~00011~00011

Hybrid~Diff~Rules:~150~105~105~:

000~00010~01111~10100~11001~00010~00010~00010~00010

001~00011~01110~10101~11000~00010~00010~00010~00010

010~00001~01100~10111~11010~00010~00010~00010~00010

011~00000~01101~10110~11011~00010~00010~00010~00010

100~00100~01001~10010~11111~00010~00010~00010~00010

101~00101~01000~10011~11110~00010~00010~00010~00010

110~00111~01010~10001~11100~00010~00010~00010~00010

111~00110~01011~10000~11101~00010~00010~00010~00010

Hybrid~Diff~Rules:~150~105~165~:

000~00011~01110~10100~11001~00011~00011~00010~00010

001~00010~01111~10101~11000~00011~00011~00010~00010

010~00001~01100~10110~11011~00010~00010~00011~00011

011~00000~01101~10111~11010~00010~00010~00011~00011

100~00100~01001~10011~11110~00010~00010~00011~00011

101~00101~01000~10010~11111~00010~00010~00011~00011

110~00110~01011~10001~11100~00011~00011~00010~00010

111~00111~01010~10000~11101~00011~00011~00010~00010

Hybrid~Diff~Rules:~150~165~90~:

000~00010~01101~10111~11000~00010~00000~00001~00011

001~00011~01100~10110~11001~00010~00000~00001~00011

010~00000~01111~10101~11010~00011~00001~00000~00010

011~00001~01110~10100~11011~00011~00001~00000~00010

100~00111~01000~10010~11101~00001~00011~00010~00000

101~00110~01001~10011~11100~00001~00011~00010~00000

110~00101~01010~10000~11111~00000~00010~00011~00001

111~00100~01011~10001~11110~00000~00010~00011~00001

Hybrid~Diff~Rules:~150~165~150~:

000~00011~01101~10110~11000~00011~00000~00000~00011

001~00010~01100~10111~11001~00011~00000~00000~00011

010~00000~01110~10101~11011~00011~00000~00000~00011

011~00001~01111~10100~11010~00011~00000~00000~00011

100~00110~01000~10011~11101~00000~00011~00011~00000

101~00111~01001~10010~11100~00000~00011~00011~00000

110~00101~01011~10000~11110~00000~00011~00011~00000

111~00100~01010~10001~11111~00000~00011~00011~00000

Hybrid~Diff~Rules:~150~165~105~:

000~00010~01100~10111~11001~00010~00001~00001~00010

001~00011~01101~10110~11000~00010~00001~00001~00010

010~00001~01111~10100~11010~00010~00001~00001~00010

011~00000~01110~10101~11011~00010~00001~00001~00010

100~00111~01001~10010~11100~00001~00010~00010~00001

101~00110~01000~10011~11101~00001~00010~00010~00001

110~00100~01010~10001~11111~00001~00010~00010~00001

111~00101~01011~10000~11110~00001~00010~00010~00001

Hybrid~Diff~Rules:~150~165~165~:

000~00011~01100~10110~11001~00011~00001~00000~00010

001~00010~01101~10111~11000~00011~00001~00000~00010

010~00001~01110~10100~11011~00010~00000~00001~00011

011~00000~01111~10101~11010~00010~00000~00001~00011

100~00110~01001~10011~11100~00000~00010~00011~00001

101~00111~01000~10010~11101~00000~00010~00011~00001

110~00100~01011~10001~11110~00001~00011~00010~00000

111~00101~01010~10000~11111~00001~00011~00010~00000

Hybrid~Diff~Rules:~105~90~90~:

000~00101~01010~10000~11111~00101~00111~00110~00100

001~00100~01011~10001~11110~00101~00111~00110~00100

010~00111~01000~10010~11101~00100~00110~00111~00101

011~00110~01001~10011~11100~00100~00110~00111~00101

100~00000~01111~10101~11010~00110~00100~00101~00111

101~00001~01110~10100~11011~00110~00100~00101~00111

110~00010~01101~10111~11000~00111~00101~00100~00110

111~00011~01100~10110~11001~00111~00101~00100~00110

Hybrid~Diff~Rules:~105~90~150~:

000~00101~01011~10000~11110~00101~00110~00110~00101

001~00100~01010~10001~11111~00101~00110~00110~00101

010~00110~01000~10011~11101~00101~00110~00110~00101

011~00111~01001~10010~11100~00101~00110~00110~00101

100~00000~01110~10101~11011~00110~00101~00101~00110

101~00001~01111~10100~11010~00110~00101~00101~00110

110~00011~01101~10110~11000~00110~00101~00101~00110

111~00010~01100~10111~11001~00110~00101~00101~00110

Hybrid~Diff~Rules:~105~90~105~:

000~00100~01010~10001~11111~00100~00111~00111~00100

001~00101~01011~10000~11110~00100~00111~00111~00100

010~00111~01001~10010~11100~00100~00111~00111~00100

011~00110~01000~10011~11101~00100~00111~00111~00100

100~00001~01111~10100~11010~00111~00100~00100~00111

101~00000~01110~10101~11011~00111~00100~00100~00111

110~00010~01100~10111~11001~00111~00100~00100~00111

111~00011~01101~10110~11000~00111~00100~00100~00111

Hybrid~Diff~Rules:~105~90~165~:

000~00100~01011~10001~11110~00100~00110~00111~00101

001~00101~01010~10000~11111~00100~00110~00111~00101

010~00110~01001~10011~11100~00101~00111~00110~00100

011~00111~01000~10010~11101~00101~00111~00110~00100

100~00001~01110~10100~11011~00111~00101~00100~00110

101~00000~01111~10101~11010~00111~00101~00100~00110

110~00011~01100~10110~11001~00110~00100~00101~00111

111~00010~01101~10111~11000~00110~00100~00101~00111

Hybrid~Diff~Rules:~105~150~90~:

000~00111~01010~10000~11101~00111~00111~00110~00110

001~00110~01011~10001~11100~00111~00111~00110~00110

010~00101~01000~10010~11111~00110~00110~00111~00111

011~00100~01001~10011~11110~00110~00110~00111~00111

100~00000~01101~10111~11010~00110~00110~00111~00111

101~00001~01100~10110~11011~00110~00110~00111~00111

110~00010~01111~10101~11000~00111~00111~00110~00110

111~00011~01110~10100~11001~00111~00111~00110~00110

Hybrid~Diff~Rules:~105~150~150~:

000~00110~01011~10000~11101~00110~00110~00110~00110

001~00111~01010~10001~11100~00110~00110~00110~00110

010~00101~01000~10011~11110~00110~00110~00110~00110

011~00100~01001~10010~11111~00110~00110~00110~00110

100~00000~01101~10110~11011~00110~00110~00110~00110

101~00001~01100~10111~11010~00110~00110~00110~00110

110~00011~01110~10101~11000~00110~00110~00110~00110

111~00010~01111~10100~11001~00110~00110~00110~00110

Hybrid~Diff~Rules:~105~150~105~:

000~00111~01010~10001~11100~00111~00111~00111~00111

001~00110~01011~10000~11101~00111~00111~00111~00111

010~00100~01001~10010~11111~00111~00111~00111~00111

011~00101~01000~10011~11110~00111~00111~00111~00111

100~00001~01100~10111~11010~00111~00111~00111~00111

101~00000~01101~10110~11011~00111~00111~00111~00111

110~00010~01111~10100~11001~00111~00111~00111~00111

111~00011~01110~10101~11000~00111~00111~00111~00111

Hybrid~Diff~Rules:~105~150~165~:

000~00110~01011~10001~11100~00110~00110~00111~00111

001~00111~01010~10000~11101~00110~00110~00111~00111

010~00100~01001~10011~11110~00111~00111~00110~00110

011~00101~01000~10010~11111~00111~00111~00110~00110

100~00001~01100~10110~11011~00111~00111~00110~00110

101~00000~01101~10111~11010~00111~00111~00110~00110

110~00011~01110~10100~11001~00110~00110~00111~00111

111~00010~01111~10101~11000~00110~00110~00111~00111

Hybrid~Diff~Rules:~105~105~90~:

000~00101~01000~10010~11111~00101~00101~00100~00100

001~00100~01001~10011~11110~00101~00101~00100~00100

010~00111~01010~10000~11101~00100~00100~00101~00101

011~00110~01011~10001~11100~00100~00100~00101~00101

100~00010~01111~10101~11000~00100~00100~00101~00101

101~00011~01110~10100~11001~00100~00100~00101~00101

110~00000~01101~10111~11010~00101~00101~00100~00100

111~00001~01100~10110~11011~00101~00101~00100~00100

Hybrid~Diff~Rules:~105~105~150~:

000~00101~01000~10011~11110~00101~00101~00101~00101

001~00100~01001~10010~11111~00101~00101~00101~00101

010~00110~01011~10000~11101~00101~00101~00101~00101

011~00111~01010~10001~11100~00101~00101~00101~00101

100~00011~01110~10101~11000~00101~00101~00101~00101

101~00010~01111~10100~11001~00101~00101~00101~00101

110~00000~01101~10110~11011~00101~00101~00101~00101

111~00001~01100~10111~11010~00101~00101~00101~00101

Hybrid~Diff~Rules:~105~105~105~:

000~00100~01001~10010~11111~00100~00100~00100~00100

001~00101~01000~10011~11110~00100~00100~00100~00100

010~00111~01010~10001~11100~00100~00100~00100~00100

011~00110~01011~10000~11101~00100~00100~00100~00100

100~00010~01111~10100~11001~00100~00100~00100~00100

101~00011~01110~10101~11000~00100~00100~00100~00100

110~00001~01100~10111~11010~00100~00100~00100~00100

111~00000~01101~10110~11011~00100~00100~00100~00100

Hybrid~Diff~Rules:~105~105~165~:

000~00100~01001~10011~11110~00100~00100~00101~00101

001~00101~01000~10010~11111~00100~00100~00101~00101

010~00110~01011~10001~11100~00101~00101~00100~00100

011~00111~01010~10000~11101~00101~00101~00100~00100

100~00011~01110~10100~11001~00101~00101~00100~00100

101~00010~01111~10101~11000~00101~00101~00100~00100

110~00001~01100~10110~11011~00100~00100~00101~00101

111~00000~01101~10111~11010~00100~00100~00101~00101

Hybrid~Diff~Rules:~105~165~90~:

000~00111~01000~10010~11101~00111~00101~00100~00110

001~00110~01001~10011~11100~00111~00101~00100~00110

010~00101~01010~10000~11111~00110~00100~00101~00111

011~00100~01011~10001~11110~00110~00100~00101~00111

100~00010~01101~10111~11000~00100~00110~00111~00101

101~00011~01100~10110~11001~00100~00110~00111~00101

110~00000~01111~10101~11010~00101~00111~00110~00100

111~00001~01110~10100~11011~00101~00111~00110~00100

Hybrid~Diff~Rules:~105~165~150~:

000~00110~01000~10011~11101~00110~00101~00101~00110

001~00111~01001~10010~11100~00110~00101~00101~00110

010~00101~01011~10000~11110~00110~00101~00101~00110

011~00100~01010~10001~11111~00110~00101~00101~00110

100~00011~01101~10110~11000~00101~00110~00110~00101

101~00010~01100~10111~11001~00101~00110~00110~00101

110~00000~01110~10101~11011~00101~00110~00110~00101

111~00001~01111~10100~11010~00101~00110~00110~00101

Hybrid~Diff~Rules:~105~165~105~:

000~00111~01001~10010~11100~00111~00100~00100~00111

001~00110~01000~10011~11101~00111~00100~00100~00111

010~00100~01010~10001~11111~00111~00100~00100~00111

011~00101~01011~10000~11110~00111~00100~00100~00111

100~00010~01100~10111~11001~00100~00111~00111~00100

101~00011~01101~10110~11000~00100~00111~00111~00100

110~00001~01111~10100~11010~00100~00111~00111~00100

111~00000~01110~10101~11011~00100~00111~00111~00100

Hybrid~Diff~Rules:~105~165~165~:

000~00110~01001~10011~11100~00110~00100~00101~00111

001~00111~01000~10010~11101~00110~00100~00101~00111

010~00100~01011~10001~11110~00111~00101~00100~00110

011~00101~01010~10000~11111~00111~00101~00100~00110

100~00011~01100~10110~11001~00101~00111~00110~00100

101~00010~01101~10111~11000~00101~00111~00110~00100

110~00001~01110~10100~11011~00100~00110~00111~00101

111~00000~01111~10101~11010~00100~00110~00111~00101

Hybrid~Diff~Rules:~165~90~90~:

000~00101~01111~10000~11010~00101~00010~00110~00001

001~00100~01110~10001~11011~00101~00010~00110~00001

010~00111~01101~10010~11000~00100~00011~00111~00000

011~00110~01100~10011~11001~00100~00011~00111~00000

100~00000~01010~10101~11111~00110~00001~00101~00010

101~00001~01011~10100~11110~00110~00001~00101~00010

110~00010~01000~10111~11101~00111~00000~00100~00011

111~00011~01001~10110~11100~00111~00000~00100~00011

Hybrid~Diff~Rules:~165~90~150~:

000~00101~01110~10000~11011~00101~00011~00110~00000

001~00100~01111~10001~11010~00101~00011~00110~00000

010~00110~01101~10011~11000~00101~00011~00110~00000

011~00111~01100~10010~11001~00101~00011~00110~00000

100~00000~01011~10101~11110~00110~00000~00101~00011

101~00001~01010~10100~11111~00110~00000~00101~00011

110~00011~01000~10110~11101~00110~00000~00101~00011

111~00010~01001~10111~11100~00110~00000~00101~00011

Hybrid~Diff~Rules:~165~90~105~:

000~00100~01111~10001~11010~00100~00010~00111~00001

001~00101~01110~10000~11011~00100~00010~00111~00001

010~00111~01100~10010~11001~00100~00010~00111~00001

011~00110~01101~10011~11000~00100~00010~00111~00001

100~00001~01010~10100~11111~00111~00001~00100~00010

101~00000~01011~10101~11110~00111~00001~00100~00010

110~00010~01001~10111~11100~00111~00001~00100~00010

111~00011~01000~10110~11101~00111~00001~00100~00010

Hybrid~Diff~Rules:~165~90~165~:

000~00100~01110~10001~11011~00100~00011~00111~00000

001~00101~01111~10000~11010~00100~00011~00111~00000

010~00110~01100~10011~11001~00101~00010~00110~00001

011~00111~01101~10010~11000~00101~00010~00110~00001

100~00001~01011~10100~11110~00111~00000~00100~00011

101~00000~01010~10101~11111~00111~00000~00100~00011

110~00011~01001~10110~11100~00110~00001~00101~00010

111~00010~01000~10111~11101~00110~00001~00101~00010

Hybrid~Diff~Rules:~165~150~90~:

000~00111~01101~10000~11010~00111~00000~00110~00001

001~00110~01100~10001~11011~00111~00000~00110~00001

010~00101~01111~10010~11000~00110~00001~00111~00000

011~00100~01110~10011~11001~00110~00001~00111~00000

100~00000~01010~10111~11101~00110~00001~00111~00000

101~00001~01011~10110~11100~00110~00001~00111~00000

110~00010~01000~10101~11111~00111~00000~00110~00001

111~00011~01001~10100~11110~00111~00000~00110~00001

Hybrid~Diff~Rules:~165~150~150~:

000~00110~01101~10000~11011~00110~00000~00110~00000

001~00111~01100~10001~11010~00110~00000~00110~00000

010~00101~01110~10011~11000~00110~00000~00110~00000

011~00100~01111~10010~11001~00110~00000~00110~00000

100~00000~01011~10110~11101~00110~00000~00110~00000

101~00001~01010~10111~11100~00110~00000~00110~00000

110~00011~01000~10101~11110~00110~00000~00110~00000

111~00010~01001~10100~11111~00110~00000~00110~00000

Hybrid~Diff~Rules:~165~150~105~:

000~00111~01100~10001~11010~00111~00001~00111~00001

001~00110~01101~10000~11011~00111~00001~00111~00001

010~00100~01111~10010~11001~00111~00001~00111~00001

011~00101~01110~10011~11000~00111~00001~00111~00001

100~00001~01010~10111~11100~00111~00001~00111~00001

101~00000~01011~10110~11101~00111~00001~00111~00001

110~00010~01001~10100~11111~00111~00001~00111~00001

111~00011~01000~10101~11110~00111~00001~00111~00001

Hybrid~Diff~Rules:~165~150~165~:

000~00110~01100~10001~11011~00110~00001~00111~00000

001~00111~01101~10000~11010~00110~00001~00111~00000

010~00100~01110~10011~11001~00111~00000~00110~00001

011~00101~01111~10010~11000~00111~00000~00110~00001

100~00001~01011~10110~11100~00111~00000~00110~00001

101~00000~01010~10111~11101~00111~00000~00110~00001

110~00011~01001~10100~11110~00110~00001~00111~00000

111~00010~01000~10101~11111~00110~00001~00111~00000

Hybrid~Diff~Rules:~165~105~90~:

000~00101~01111~10010~11000~00101~00010~00100~00011

001~00100~01110~10011~11001~00101~00010~00100~00011

010~00111~01101~10000~11010~00100~00011~00101~00010

011~00110~01100~10001~11011~00100~00011~00101~00010

100~00010~01000~10101~11111~00100~00011~00101~00010

101~00011~01001~10100~11110~00100~00011~00101~00010

110~00000~01010~10111~11101~00101~00010~00100~00011

111~00001~01011~10110~11100~00101~00010~00100~00011

Hybrid~Diff~Rules:~165~105~150~:

000~00101~01110~10011~11000~00101~00011~00101~00011

001~00100~01111~10010~11001~00101~00011~00101~00011

010~00110~01101~10000~11011~00101~00011~00101~00011

011~00111~01100~10001~11010~00101~00011~00101~00011

100~00011~01000~10101~11110~00101~00011~00101~00011

101~00010~01001~10100~11111~00101~00011~00101~00011

110~00000~01011~10110~11101~00101~00011~00101~00011

111~00001~01010~10111~11100~00101~00011~00101~00011

Hybrid~Diff~Rules:~165~105~105~:

000~00100~01111~10010~11001~00100~00010~00100~00010

001~00101~01110~10011~11000~00100~00010~00100~00010

010~00111~01100~10001~11010~00100~00010~00100~00010

011~00110~01101~10000~11011~00100~00010~00100~00010

100~00010~01001~10100~11111~00100~00010~00100~00010

101~00011~01000~10101~11110~00100~00010~00100~00010

110~00001~01010~10111~11100~00100~00010~00100~00010

111~00000~01011~10110~11101~00100~00010~00100~00010

Hybrid~Diff~Rules:~165~105~165~:

000~00100~01110~10011~11001~00100~00011~00101~00010

001~00101~01111~10010~11000~00100~00011~00101~00010

010~00110~01100~10001~11011~00101~00010~00100~00011

011~00111~01101~10000~11010~00101~00010~00100~00011

100~00011~01001~10100~11110~00101~00010~00100~00011

101~00010~01000~10101~11111~00101~00010~00100~00011

110~00001~01011~10110~11100~00100~00011~00101~00010

111~00000~01010~10111~11101~00100~00011~00101~00010

Hybrid~Diff~Rules:~165~165~90~:

000~00111~01101~10010~11000~00111~00000~00100~00011

001~00110~01100~10011~11001~00111~00000~00100~00011

010~00101~01111~10000~11010~00110~00001~00101~00010

011~00100~01110~10001~11011~00110~00001~00101~00010

100~00010~01000~10111~11101~00100~00011~00111~00000

101~00011~01001~10110~11100~00100~00011~00111~00000

110~00000~01010~10101~11111~00101~00010~00110~00001

111~00001~01011~10100~11110~00101~00010~00110~00001

Hybrid~Diff~Rules:~165~165~150~:

000~00110~01101~10011~11000~00110~00000~00101~00011

001~00111~01100~10010~11001~00110~00000~00101~00011

010~00101~01110~10000~11011~00110~00000~00101~00011

011~00100~01111~10001~11010~00110~00000~00101~00011

100~00011~01000~10110~11101~00101~00011~00110~00000

101~00010~01001~10111~11100~00101~00011~00110~00000

110~00000~01011~10101~11110~00101~00011~00110~00000

111~00001~01010~10100~11111~00101~00011~00110~00000

Hybrid~Diff~Rules:~165~165~105~:

000~00111~01100~10010~11001~00111~00001~00100~00010

001~00110~01101~10011~11000~00111~00001~00100~00010

010~00100~01111~10001~11010~00111~00001~00100~00010

011~00101~01110~10000~11011~00111~00001~00100~00010

100~00010~01001~10111~11100~00100~00010~00111~00001

101~00011~01000~10110~11101~00100~00010~00111~00001

110~00001~01010~10100~11111~00100~00010~00111~00001

111~00000~01011~10101~11110~00100~00010~00111~00001

Hybrid~Diff~Rules:~165~165~165~:

000~00110~01100~10011~11001~00110~00001~00101~00010

001~00111~01101~10010~11000~00110~00001~00101~00010

010~00100~01110~10001~11011~00111~00000~00100~00011

011~00101~01111~10000~11010~00111~00000~00100~00011

100~00011~01001~10110~11100~00101~00010~00110~00001

101~00010~01000~10111~11101~00101~00010~00110~00001

110~00001~01011~10100~11110~00100~00011~00111~00000

111~00000~01010~10101~11111~00100~00011~00111~00000
\end{codesmall}

\subsection{\label{apn:CorrSeqSeeds}Correlation of Sequences to Seeds}

These experiments show the relationship between seeds and the sequences
they generate under various rule configurations. The SeedDiffFromTemp
function below creates a dictionary with all possible sequences as
keys and a list of the seeds that generate them as values. Numbers
here are often stored as native integers for performance reasons;
results showing decimal numbers are meant to be interpreted as the
binary equivalent of that number.

\medskip{}

This first experiment shows that rule sets having rules of the same
arity in each position generate the same temporal sequences with only
a constant factor added into the seed to account for rules that differ
only in their complimentarity. Rulesets having different arity in
corresponding rules generate the same sequences only from seeds with
no constant relationship across the rule sets. Indeed, depending on
the difference in rules, the same sequence may not even be possible
in another rule set. 
\begin{codesmall}
>\textcompwordmark{}>\textcompwordmark{}>~hr1~=~{[}90,~90,~90,~90,~90,~90,~90,~90,~150{]}

>\textcompwordmark{}>\textcompwordmark{}>~hr2~=~{[}90,~90,~90,~90,~90,~90,~90,~90,~105{]}

>\textcompwordmark{}>\textcompwordmark{}>~hr3~=~{[}90,~90,~90,~90,~90,~90,~150,~90,~150{]}

>\textcompwordmark{}>\textcompwordmark{}>~d1~=~hca.SeedDiffFromTemp(hr1,~9)

>\textcompwordmark{}>\textcompwordmark{}>~d2~=~hca.SeedDiffFromTemp(hr2,~9)

>\textcompwordmark{}>\textcompwordmark{}>~d3~=~hca.SeedDiffFromTemp(hr3,~9)

>\textcompwordmark{}>\textcompwordmark{}>~len(d3)

511

>\textcompwordmark{}>\textcompwordmark{}>~len(d2)

511

>\textcompwordmark{}>\textcompwordmark{}>~len(d1)

511

>\textcompwordmark{}>\textcompwordmark{}>~for~i~in~range(1,~10):

	print~'d1~\textasciicircum{}~d2:',

	pb(d1{[}i{]}{[}0{]}~\textasciicircum{}~d2{[}i{]}{[}0{]},~9)

	print~'d2~\textasciicircum{}~d3:',

	pb(d2{[}i{]}{[}0{]}~\textasciicircum{}~d3{[}i{]}{[}0{]},~9)

	print~'d1~\textasciicircum{}~d3:',

	pb(d1{[}i{]}{[}0{]}~\textasciicircum{}~d3{[}i{]}{[}0{]},~9)

d1~\textasciicircum{}~d2:~101101101

d1~\textasciicircum{}~d2:~101101101

d1~\textasciicircum{}~d2:~101101101

d1~\textasciicircum{}~d2:~101101101

d1~\textasciicircum{}~d2:~101101101

d1~\textasciicircum{}~d2:~101101101

d1~\textasciicircum{}~d2:~101101101

d1~\textasciicircum{}~d2:~101101101

d1~\textasciicircum{}~d2:~101101101~\\
~\\
d2~\textasciicircum{}~d3:~111101111

d2~\textasciicircum{}~d3:~000000000

d2~\textasciicircum{}~d3:~010000010

d2~\textasciicircum{}~d3:~000101001

d2~\textasciicircum{}~d3:~010101011

d2~\textasciicircum{}~d3:~101000100

d2~\textasciicircum{}~d3:~111000110

d2~\textasciicircum{}~d3:~101101110

d2~\textasciicircum{}~d3:~111101100~\\
~\\
d1~\textasciicircum{}~d3:~010000010

d1~\textasciicircum{}~d3:~101101101

d1~\textasciicircum{}~d3:~111101111

d1~\textasciicircum{}~d3:~101000100

d1~\textasciicircum{}~d3:~111000110

d1~\textasciicircum{}~d3:~000101001

d1~\textasciicircum{}~d3:~010101011

d1~\textasciicircum{}~d3:~000000011

d1~\textasciicircum{}~d3:~010000001
\end{codesmall}
This holds regardless of mixing arity or complimentarity, as shown
by the following rule sets.
\begin{codesmall}
>\textcompwordmark{}>\textcompwordmark{}>~hr4~=~{[}90{]}~{*}~8~+~{[}105{]}

>\textcompwordmark{}>\textcompwordmark{}>~d4~=~SeedDiffFromTemp(hr4,~9)

>\textcompwordmark{}>\textcompwordmark{}>~hr5~=~{[}165{]}~{*}~8~+~{[}150{]}

>\textcompwordmark{}>\textcompwordmark{}>~d5~=~SeedDiffFromTemp(hr5,~9)

>\textcompwordmark{}>\textcompwordmark{}>~hr1~=~{[}90{]}~{*}~8~+~{[}150{]}

>\textcompwordmark{}>\textcompwordmark{}>~for~j,~d~in~enumerate({[}d1,~d4,~d5{]}):

	for~k,~e~in~enumerate({[}d1,~d4,~d5{]}):

		if~d~!=~e:

			print~'d{[}'+str(j)+'{]}~\textasciicircum{}~e{[}'+str(k)+'{]}'

			for~i~in~range(1,~10):

				if~i~in~d~and~i~in~e:

					pb(d{[}i{]}{[}0{]}~\textasciicircum{}~e{[}i{]}{[}0{]},~9)

				elif~i~in~d:

					print~'right~does~not~have',

					pb(i,~9)

				else:

					print~'left~does~not~have',

					pb(i,9)

d{[}0{]}~\textasciicircum{}~e{[}1{]}

101101101

101101101

101101101

101101101

101101101

101101101

101101101

101101101

101101101

d{[}0{]}~\textasciicircum{}~e{[}2{]}

001001001

001001001

001001001

001001001

001001001

001001001

001001001

001001001

001001001

d{[}1{]}~\textasciicircum{}~e{[}0{]}

<...>

d{[}1{]}~\textasciicircum{}~e{[}2{]}

100100100

100100100

100100100

100100100

100100100

100100100

100100100

100100100

100100100

d{[}2{]}~\textasciicircum{}~e{[}0{]}

<...>

001001001

d{[}2{]}~\textasciicircum{}~e{[}1{]}

<...>

100100100
\end{codesmall}
It is possible for two rule sets to generate the same sequence up
to $t=n$, then diverge. This can happen when one rule set ``dead-ends''
(i.e. zero's out) or has a different period than another. Continuing
the environment from above, we see both of these conditions in two
rule sets:
\begin{codesmall}
>\textcompwordmark{}>\textcompwordmark{}>~hr2~=~{[}90,~90,~90,~90,~90,~90,~90,~90,~105{]}

>\textcompwordmark{}>\textcompwordmark{}>~hr3~=~{[}90,~90,~90,~90,~90,~90,~\uline{150},~90,~\uline{150}{]}

>\textcompwordmark{}>\textcompwordmark{}>~d2{[}141{]}

{[}417{]}

>\textcompwordmark{}>\textcompwordmark{}>~d3{[}141{]}

{[}137{]}

>\textcompwordmark{}>\textcompwordmark{}>~t2~=~hca.TempSeqFromSeed(hr2,~hca.SeedFromInt(417,9),18)

>\textcompwordmark{}>\textcompwordmark{}>~t3~=~hca.TempSeqFromSeed(hr3,~hca.SeedFromInt(137,9),18)

>\textcompwordmark{}>\textcompwordmark{}>~t2

array({[}0,~1,~0,~0,~0,~1,~1,~0,~1,~0,~0,~0,~1,~1,~0,~1,~0,~0,~0{]},~dtype=int8)

>\textcompwordmark{}>\textcompwordmark{}>~t3

array({[}0,~1,~0,~0,~0,~1,~1,~0,~1,~0,~0,~0,~\uline{0},~\uline{0},~0,~\uline{0},~0,~0,~0{]},~dtype=int8)
\end{codesmall}
The next experiment examines the hamming distance between seeds in
two asymmetric CA which generate the same temporal sequence. Here,
the sequence is printed in decimal next to the difference in seeds
for that sequence between the two rule sets. Some editing of the results
has been done to gather similar results. In the resulting table below,
all decimal sequences in each row share the same low 6 bits. The top
two sequences of each group of four in a column share the same upper
3 bits, and again for the bottom two of four. The middle two of each
group of four have the same last 3 bits, and the outer two have the
same last 4 bits.

The fact that these are groups of four seems to be related to the
rule set at play. For rule set hr1 (same as below) and hr2 = $\{90\}^{6}+\{150,90,150\}$
, the groups have 8 members of the same hamming distance. hr2 has
an eventual period of 0, hr1 of 28. Both have unique seeds for all
sequences.
\begin{codesmall}
>\textcompwordmark{}>\textcompwordmark{}>~hr1~=~{[}90,~90,~90,~90,~90,~90,~90,~90,~150{]}

>\textcompwordmark{}>\textcompwordmark{}>~(p,d)~=~MaxPeriod(hr1);~print~p

28

>\textcompwordmark{}>\textcompwordmark{}>~len(d1)

511

>\textcompwordmark{}>\textcompwordmark{}>~hr3~=~{[}150{]}~{*}~5~+~{[}90{]}~{*}~4

>\textcompwordmark{}>\textcompwordmark{}>~(p,d)~=~MaxPeriod(hr3);~print~p

30

>\textcompwordmark{}>\textcompwordmark{}>~d3~=~SeedDiffFromTemp(hr3,~9)

>\textcompwordmark{}>\textcompwordmark{}>~len(d3)

511

>\textcompwordmark{}>\textcompwordmark{}>~for~i~in~range(1,len(d1)):

	if~d1{[}i{]}~==~d3{[}i{]}:

		print~i,

59~195~248

>\textcompwordmark{}>\textcompwordmark{}>~for~i~in~range(1,len(d1)):

	print~i,~pb(d1{[}i{]}{[}0{]}~\textasciicircum{}~d3{[}i{]}{[}0{]})

~~0~000000000~~~~~~64~011101010~~~~~256~001001100~~~~~320~010100110

~59~000000000~~~~~123~011101010~~~~~315~001001100~~~~~379~010100110

195~000000000~~~~~131~011101010~~~~~451~001001100~~~~~387~010100110

258~000000000~~~~~184~011101010~~~~~504~001001100~~~~~440~010100110

	~~~~~~~~~~~~~~~~~~~~~~~~~~~~~~~~~~~~~~~~~~~~~~~~~~~

~~1~011101110~~~~~~65~000000100~~~~~257~010100010~~~~~321~001001000

~58~011101110~~~~~122~000000100~~~~~314~010100010~~~~~378~001001000

194~011101110~~~~~130~000000100~~~~~450~010100010~~~~~386~001001000

249~011101110~~~~~185~000000100~~~~~505~010100010~~~~~441~001001000

~~~~~~~~~~~~~~~~~~~~~~~~~~~~~~~~~~~~~~~~~~~~~~~~~~~~~~

~~2~110000101~~~~~~66~101101111~~~~~258~111001001~~~~~322~100100011

~57~110000101~~~~~121~101101111~~~~~313~111001001~~~~~377~100100011

193~110000101~~~~~129~101101111~~~~~449~111001001~~~~~385~100100011

250~110000101~~~~~186~101101111~~~~~506~111001001~~~~~442~100100011

~~~~~~~~~~~~~~~~~~~~~~~~~~~~~~~~~~~~~~~~~~~~~~~~~~~~~~

~~3~101101011~~~~~~67~110000001~~~~~259~100100111~~~~~323~111001101

~56~101101011~~~~~120~110000001~~~~~312~100100111~~~~~376~111001101

192~101101011~~~~~128~110000001~~~~~448~100100111~~~~~384~111001101

251~101101011~~~~~187~110000001~~~~~507~100100111~~~~~443~111001101

~~~~~~~~~~~~~~~~~~~~~~~~~~~~~~~~~~~~~~~~~~~~~~~~~~~~~~

~~4~111000111~~~~~~68~100101101~~~~~260~110001011~~~~~324~101100001

~63~111000111~~~~~127~100101101~~~~~319~110001011~~~~~383~101100001

199~111000111~~~~~135~100101101~~~~~455~110001011~~~~~391~101100001

252~111000111~~~~~188~100101101~~~~~508~110001011~~~~~444~101100001

~~~~~~~~~~~~~~~~~~~~~~~~~~~~~~~~~~~~~~~~~~~~~~~~~~~~~~

~~5~100101001~~~~~~69~111000011~~~~~261~101100101~~~~~325~110001111

~62~100101001~~~~~126~111000011~~~~~318~101100101~~~~~382~110001111

198~100101001~~~~~134~111000011~~~~~454~101100101~~~~~390~110001111

253~100101001~~~~~189~111000011~~~~~509~101100101~~~~~445~110001111

~~~~~~~~~~~~~~~~~~~~~~~~~~~~~~~~~~~~~~~~~~~~~~~~~~~~~~

~~6~001000010~~~~~~70~010101000~~~~~262~000001110~~~~~326~011100100

~61~001000010~~~~~125~010101000~~~~~317~000001110~~~~~381~011100100

197~001000010~~~~~133~010101000~~~~~453~000001110~~~~~389~011100100

254~001000010~~~~~190~010101000~~~~~510~000001110~~~~~446~011100100

~~~~~~~~~~~~~~~~~~~~~~~~~~~~~~~~~~~~~~~~~~~~~~~~~~~~~~

~~7~010101100~~~~~~71~001000110~~~~~263~011100000~~~~~327~000001010

~60~010101100~~~~~124~001000110~~~~~316~011100000~~~~~380~000001010

196~010101100~~~~~132~001000110~~~~~452~011100000~~~~~388~000001010

255~010101100~~~~~191~001000110~~~?\_511~011100000\_?~~~447~000001010~~

~~~~~~~~~~~~~~~~~~~~~~~~~~~~~~~~~~~~~~~~~~~~~~~~~~~~~~

~~8~110000010~~~~~~72~101101000~~~~~264~111001110~~~~~328~100100100

~51~110000010~~~~~115~101101000~~~~~307~111001110~~~~~371~100100100

203~110000010~~~~~139~101101000~~~~~459~111001110~~~~~395~100100100

240~110000010~~~~~176~101101000~~~~~496~111001110~~~~~432~100100100

~~~~~~~~~~~~~~~~~~~~~~~~~~~~~~~~~~~~~~~~~~~~~~~~~~~~~~

~~9~101101100~~~~~~73~110000110~~~~~265~100100000~~~~~329~111001010

~50~101101100~~~~~114~110000110~~~~~306~100100000~~~~~370~111001010

202~101101100~~~~~138~110000110~~~~~458~100100000~~~~~394~111001010

241~101101100~~~~~177~110000110~~~~~497~100100000~~~~~433~111001010

~~~~~~~~~~~~~~~~~~~~~~~~~~~~~~~~~~~~~~~~~~~~~~~~~~~~~~

~10~000000111~~~~~~74~011101101~~~~~266~001001011~~~~~330~010100001

~49~000000111~~~~~113~011101101~~~~~305~001001011~~~~~369~010100001

201~000000111~~~~~137~011101101~~~~~457~001001011~~~~~393~010100001

242~000000111~~~~~178~011101101~~~~~498~001001011~~~~~434~010100001

~~~~~~~~~~~~~~~~~~~~~~~~~~~~~~~~~~~~~~~~~~~~~~~~~~~~~~

~11~011101001~~~~~~75~000000011~~~~~267~010100101~~~~~331~001001111

~48~011101001~~~~~112~000000011~~~~~304~010100101~~~~~368~001001111

200~011101001~~~~~136~000000011~~~~~456~010100101~~~~~392~001001111

243~011101001~~~~~179~000000011~~~~~499~010100101~~~~~435~001001111

~~~~~~~~~~~~~~~~~~~~~~~~~~~~~~~~~~~~~~~~~~~~~~~~~~~~~~

~12~001000101~~~~~~76~010101111~~~~~268~000001001~~~~~332~011100011

~55~001000101~~~~~119~010101111~~~~~311~000001001~~~~~375~011100011

207~001000101~~~~~143~010101111~~~~~463~000001001~~~~~399~011100011

244~001000101~~~~~180~010101111~~~~~500~000001001~~~~~436~011100011

~~~~~~~~~~~~~~~~~~~~~~~~~~~~~~~~~~~~~~~~~~~~~~~~~~~~~~

~13~010101011~~~~~~77~001000001~~~~~269~011100111~~~~~333~000001101

~54~010101011~~~~~118~001000001~~~~~310~011100111~~~~~374~000001101

206~010101011~~~~~142~001000001~~~~~462~011100111~~~~~398~000001101

245~010101011~~~~~181~001000001~~~~~501~011100111~~~~~437~000001101

~~~~~~~~~~~~~~~~~~~~~~~~~~~~~~~~~~~~~~~~~~~~~~~~~~~~~~

~14~111000000~~~~~~78~100101010~~~~~270~110001100~~~~~334~101100110

~53~111000000~~~~~117~100101010~~~~~309~110001100~~~~~373~101100110

205~111000000~~~~~141~100101010~~~~~461~110001100~~~~~397~101100110

246~111000000~~~~~182~100101010~~~~~502~110001100~~~~~438~101100110

~~~~~~~~~~~~~~~~~~~~~~~~~~~~~~~~~~~~~~~~~~~~~~~~~~~~~~

~15~100101110~~~~~~79~111000100~~~~~271~101100010~~~~~335~110001000

~52~100101110~~~~~116~111000100~~~~~308~101100010~~~~~372~110001000

204~100101110~~~~~140~111000100~~~~~460~101100010~~~~~396~110001000

247~100101110~~~~~183~111000100~~~~~503~101100010~~~~~439~110001000

~~~~~~~~~~~~~~~~~~~~~~~~~~~~~~~~~~~~~~~~~~~~~~~~~~~~~~

~16~110101101~~~~~~80~101000111~~~~~272~111100001~~~~~336~100001011

~43~110101101~~~~~107~101000111~~~~~299~111100001~~~~~363~100001011

211~110101101~~~~~147~101000111~~~~~467~111100001~~~~~403~100001011

232~110101101~~~~~168~101000111~~~~~488~111100001~~~~~424~100001011

~~~~~~~~~~~~~~~~~~~~~~~~~~~~~~~~~~~~~~~~~~~~~~~~~~~~~~

~17~101000011~~~~~~81~110101001~~~~~273~100001111~~~~~337~111100101

~42~101000011~~~~~106~110101001~~~~~298~100001111~~~~~362~111100101

210~101000011~~~~~146~110101001~~~~~466~100001111~~~~~402~111100101

233~101000011~~~~~169~110101001~~~~~489~100001111~~~~~425~111100101

~~~~~~~~~~~~~~~~~~~~~~~~~~~~~~~~~~~~~~~~~~~~~~~~~~~~~~

~18~000101000~~~~~~82~011000010~~~~~274~001100100~~~~~338~010001110

~41~000101000~~~~~105~011000010~~~~~297~001100100~~~~~361~010001110

209~000101000~~~~~145~011000010~~~~~465~001100100~~~~~401~010001110

234~000101000~~~~~170~011000010~~~~~490~001100100~~~~~426~010001110

~~~~~~~~~~~~~~~~~~~~~~~~~~~~~~~~~~~~~~~~~~~~~~~~~~~~~~

~19~011000110~~~~~~83~000101100~~~~~275~010001010~~~~~339~001100000

~40~011000110~~~~~104~000101100~~~~~296~010001010~~~~~360~001100000

208~011000110~~~~~144~000101100~~~~~464~010001010~~~~~400~001100000

235~011000110~~~~~171~000101100~~~~~491~010001010~~~~~427~001100000

~~~~~~~~~~~~~~~~~~~~~~~~~~~~~~~~~~~~~~~~~~~~~~~~~~~~~~

~20~001101010~~~~~~84~010000000~~~~~276~000100110~~~~~340~011001100

~47~001101010~~~~~111~010000000~~~~~303~000100110~~~~~367~011001100

215~001101010~~~~~151~010000000~~~~~471~000100110~~~~~407~011001100

236~001101010~~~~~172~010000000~~~~~492~000100110~~~~~428~011001100

~~~~~~~~~~~~~~~~~~~~~~~~~~~~~~~~~~~~~~~~~~~~~~~~~~~~~~

~21~010000100~~~~~~85~001101110~~~~~277~011001000~~~~~341~000100010

~46~010000100~~~~~110~001101110~~~~~302~011001000~~~~~366~000100010

214~010000100~~~~~150~001101110~~~~~470~011001000~~~~~406~000100010

237~010000100~~~~~173~001101110~~~~~493~011001000~~~~~429~000100010

~~~~~~~~~~~~~~~~~~~~~~~~~~~~~~~~~~~~~~~~~~~~~~~~~~~~~~

~22~111101111~~~~~~86~100000101~~~~~278~110100011~~~~~342~101001001

~45~111101111~~~~~109~100000101~~~~~301~110100011~~~~~365~101001001

213~111101111~~~~~149~100000101~~~~~469~110100011~~~~~405~101001001

238~111101111~~~~~174~100000101~~~~~494~110100011~~~~~430~101001001

~~~~~~~~~~~~~~~~~~~~~~~~~~~~~~~~~~~~~~~~~~~~~~~~~~~~~~

~23~100000001~~~~~~87~111101011~~~~~279~101001101~~~~~343~110100111

~44~100000001~~~~~108~111101011~~~~~300~101001101~~~~~364~110100111

212~100000001~~~~~148~111101011~~~~~468~101001101~~~~~404~110100111

239~100000001~~~~~175~111101011~~~~~495~101001101~~~~~431~110100111

~~~~~~~~~~~~~~~~~~~~~~~~~~~~~~~~~~~~~~~~~~~~~~~~~~~~~~

~24~000101111~~~~~~88~011000101~~~~~280~001100011~~~~~344~010001001

~35~000101111~~~~~~99~011000101~~~~~291~001100011~~~~~355~010001001

219~000101111~~~~~155~011000101~~~~~475~001100011~~~~~411~010001001

224~000101111~~~~~160~011000101~~~~~480~001100011~~~~~416~010001001

~~~~~~~~~~~~~~~~~~~~~~~~~~~~~~~~~~~~~~~~~~~~~~~~~~~~~~

~25~011000001~~~~~~89~000101011~~~~~281~010001101~~~~~345~001100111

~34~011000001~~~~~~98~000101011~~~~~290~010001101~~~~~354~001100111

218~011000001~~~~~154~000101011~~~~~474~010001101~~~~~410~001100111

225~011000001~~~~~161~000101011~~~~~481~010001101~~~~~417~001100111

~~~~~~~~~~~~~~~~~~~~~~~~~~~~~~~~~~~~~~~~~~~~~~~~~~~~~~

~26~110101010~~~~~~90~101000000~~~~~282~111100110~~~~~346~100001100

~33~110101010~~~~~~97~101000000~~~~~289~111100110~~~~~353~100001100

217~110101010~~~~~153~101000000~~~~~473~111100110~~~~~409~100001100

226~110101010~~~~~162~101000000~~~~~482~111100110~~~~~418~100001100

~~~~~~~~~~~~~~~~~~~~~~~~~~~~~~~~~~~~~~~~~~~~~~~~~~~~~~

~27~101000100~~~~~~91~110101110~~~~~283~100001000~~~~~347~111100010

~32~101000100~~~~~~96~110101110~~~~~288~100001000~~~~~352~111100010

216~101000100~~~~~152~110101110~~~~~472~100001000~~~~~408~111100010

227~101000100~~~~~163~110101110~~~~~483~100001000~~~~~419~111100010

~~~~~~~~~~~~~~~~~~~~~~~~~~~~~~~~~~~~~~~~~~~~~~~~~~~~~~

~28~111101000~~~~~~92~100000010~~~~~284~110100100~~~~~348~101001110

~39~111101000~~~~~103~100000010~~~~~295~110100100~~~~~359~101001110

223~111101000~~~~~159~100000010~~~~~479~110100100~~~~~415~101001110

228~111101000~~~~~164~100000010~~~~~484~110100100~~~~~420~101001110

~~~~~~~~~~~~~~~~~~~~~~~~~~~~~~~~~~~~~~~~~~~~~~~~~~~~~~

~29~100000110~~~~~~93~111101100~~~~~285~101001010~~~~~349~110100000

~38~100000110~~~~~102~111101100~~~~~294~101001010~~~~~358~110100000

222~100000110~~~~~158~111101100~~~~~478~101001010~~~~~414~110100000

229~100000110~~~~~165~111101100~~~~~485~101001010~~~~~421~110100000

~~~~~~~~~~~~~~~~~~~~~~~~~~~~~~~~~~~~~~~~~~~~~~~~~~~~~~

~30~001101101~~~~~~94~010000111~~~~~286~000100001~~~~~350~011001011

~37~001101101~~~~~101~010000111~~~~~293~000100001~~~~~357~011001011

221~001101101~~~~~157~010000111~~~~~477~000100001~~~~~413~011001011

230~001101101~~~~~166~010000111~~~~~486~000100001~~~~~422~011001011

~~~~~~~~~~~~~~~~~~~~~~~~~~~~~~~~~~~~~~~~~~~~~~~~~~~~~~

~31~010000011~~~~~~95~001101001~~~~~287~011001111~~~~~351~000100101

~36~010000011~~~~~100~001101001~~~~~292~011001111~~~~~356~000100101

220~010000011~~~~~156~001101001~~~~~476~011001111~~~~~412~000100101

231~010000011~~~~~167~001101001~~~~~487~011001111~~~~~423~000100101
\end{codesmall}

\subsection{\label{apn:MapSeedsExpr}Mapping Seeds from Symmetric to Asymmetric
Rulesets}

These experiments investigate the possibility of using information
about a sequence under a symmetric rule set to learn something about
the seed or the rule set that originally produced it.

First, we go through all sequences produced by a uniform rule 150
CA, and find the hamming distance from each seed that produces that
sequence to the seed under an asymmetric CA which produces the same
sequence.
\begin{codesmall}
>\textcompwordmark{}>\textcompwordmark{}>~hr3~=~{[}90,~90,~90,~90,~90,~90,~150,~90,~105{]}

>\textcompwordmark{}>\textcompwordmark{}>~d3~=~SeedDiffFromTemp(hr3,~9)

>\textcompwordmark{}>\textcompwordmark{}>~hr5~=~{[}150{]}~{*}~9

>\textcompwordmark{}>\textcompwordmark{}>~d5~=~SeedDiffFromTemp(hr5,~9)

>\textcompwordmark{}>\textcompwordmark{}>~intersect~=~{[}{]}

>\textcompwordmark{}>\textcompwordmark{}>~for~k~in~d5.keys():

	if~k~==~0:

		continue

	vl~=~d5{[}k{]}

	intersect.append(k)

	print~k,

	for~v~in~vl:

		print~pb(v~\textasciicircum{}~d3{[}k{]}{[}0{]}),

	print

\uline{256}~\uline{110101011}~110000011~111101111~111000111~100101001~100000001~101101101~101000101~

~~~~010101010~010000010~011101110~011000110~000101000~000000000~001101100~001000100

278~001000100~001101100~000000000~000101000~011000110~011101110~010000010~010101010~

~~~~101000101~101101101~100000001~100101001~111000111~111101111~110000011~110101011

300~101000101~101101101~100000001~100101001~111000111~111101111~110000011~110101011~

~~~~001000100~001101100~000000000~000101000~011000110~011101110~010000010~010101010

395~110000011~110101011~111000111~111101111~100000001~100101001~101000101~101101101~

~~~~010000010~010101010~011000110~011101110~000000000~000101000~001000100~001101100

344~011000110~011101110~010000010~010101010~001000100~001101100~000000000~000101000~

~~~~111000111~111101111~110000011~110101011~101000101~101101101~100000001~100101001

\uline{22}~~\uline{111101111}~111000111~110101011~110000011~101101101~101000101~100101001~100000001~

~~~~011101110~011000110~010101010~010000010~001101100~001000100~000101000~000000000

413~001101100~001000100~000101000~000000000~011101110~011000110~010101010~010000010~

~~~~101101101~101000101~100101001~100000001~111101111~111000111~110101011~110000011

453~100000001~100101001~101000101~101101101~110000011~110101011~111000111~111101111~

~~~~000000000~000101000~001000100~001101100~010000010~010101010~011000110~011101110

167~011000110~011101110~010000010~010101010~001000100~001101100~000000000~000101000~

~~~~111000111~111101111~110000011~110101011~101000101~101101101~100000001~100101001

44~~011101110~011000110~010101010~010000010~001101100~001000100~000101000~000000000~

~~~~111101111~111000111~110101011~110000011~101101101~101000101~100101001~100000001

157~111000111~111101111~110000011~110101011~101000101~101101101~100000001~100101001~

~~~~011000110~011101110~010000010~010101010~001000100~001101100~000000000~000101000

177~100101001~100000001~101101101~101000101~110101011~110000011~111101111~111000111~

~~~~000101000~000000000~001101100~001000100~010101010~010000010~011101110~011000110

372~000101000~000000000~001101100~001000100~010101010~010000010~011101110~011000110~

~~~~100101001~100000001~101101101~101000101~110101011~110000011~111101111~111000111

58~~100000001~100101001~101000101~101101101~110000011~110101011~111000111~111101111~

~~~~000000000~000101000~001000100~001101100~010000010~010101010~011000110~011101110

139~000101000~000000000~001101100~001000100~010101010~010000010~011101110~011000110~

~~~~100101001~100000001~101101101~101000101~110101011~110000011~111101111~111000111

197~010101010~010000010~011101110~011000110~000101000~000000000~001101100~001000100~

~~~~110101011~110000011~111101111~111000111~100101001~100000001~101101101~101000101

354~111000111~111101111~110000011~110101011~101000101~101101101~100000001~100101001~

~~~~011000110~011101110~010000010~010101010~001000100~001101100~000000000~000101000

78~~010000010~010101010~011000110~011101110~000000000~000101000~001000100~001101100~

~~~~110000011~110101011~111000111~111101111~100000001~100101001~101000101~101101101

433~010000010~010101010~011000110~011101110~000000000~000101000~001000100~001101100~

~~~~110000011~110101011~111000111~111101111~100000001~100101001~101000101~101101101

467~011101110~011000110~010101010~010000010~001101100~001000100~000101000~000000000~

~~~~111101111~111000111~110101011~110000011~101101101~101000101~100101001~100000001

334~100101001~100000001~101101101~101000101~110101011~110000011~111101111~111000111~

~~~~000101000~000000000~001101100~001000100~010101010~010000010~011101110~011000110

88~~101101101~101000101~100101001~100000001~111101111~111000111~110101011~110000011~

~~~~001101100~001000100~000101000~000000000~011101110~011000110~010101010~010000010

314~010101010~010000010~011101110~011000110~000101000~000000000~001101100~001000100~

~~~~110101011~110000011~111101111~111000111~100101001~100000001~101101101~101000101

98~~001101100~001000100~000101000~000000000~011101110~011000110~010101010~010000010~

~~~~101101101~101000101~100101001~100000001~111101111~111000111~110101011~110000011

233~001000100~001101100~000000000~000101000~011000110~011101110~010000010~010101010~

~~~~101000101~101101101~100000001~100101001~111000111~111101111~110000011~110101011

423~101101101~101000101~100101001~100000001~111101111~111000111~110101011~110000011~

~~~~001101100~001000100~000101000~000000000~011101110~011000110~010101010~010000010

211~101000101~101101101~100000001~100101001~111000111~111101111~110000011~110101011~

~~~~001000100~001101100~000000000~000101000~011000110~011101110~010000010~010101010

116~110000011~110101011~111000111~111101111~100000001~100101001~101000101~101101101~

~~~~010000010~010101010~011000110~011101110~000000000~000101000~001000100~001101100

\uline{489}~\uline{111101111}~111000111~110101011~110000011~101101101~101000101~100101001~100000001~

~~~~011101110~011000110~010101010~010000010~001101100~001000100~000101000~000000000

\uline{255}~\uline{110101011}~110000011~111101111~111000111~100101001~100000001~101101101~101000101~

~~~~010101010~010000010~011101110~011000110~000101000~000000000~001101100~001000100

511~000000000~000101000~001000100~001101100~010000010~010101010~011000110~011101110~

~~~~100000001~100101001~101000101~101101101~110000011~110101011~111000111~111101111

>\textcompwordmark{}>\textcompwordmark{}>~len(intersect)

31

~

>\textcompwordmark{}>\textcompwordmark{}>~pb(22)

'000010110'

>\textcompwordmark{}>\textcompwordmark{}>~pb(489)

'111101001'

>\textcompwordmark{}>\textcompwordmark{}>~pb(255)

'011111111'

>\textcompwordmark{}>\textcompwordmark{}>~pb(256)

'100000000'

\end{codesmall}
Notice that complimentary sequences have the same pattern of differences.
Seed differences for 22 and 489 above both start with 111101111, sequences
for 255 and 256 start with 110101011, etc.

\subsection{\label{apn:MapSeqExpr}Mapping Sequences from Symmetric to Asymmetric
CAs}

In the following experiment, continuing the above environment, all
seeds for all sequences from a 9-cell uniform rule 150 CA are looked
up in the reverse dictionary (where the key is the seed and the resulting
sequence is the value) of an asymmetric CA. The differences between
the two sequences are shown. The interesting observation is that there
are only differences in the last four bits. This pattern is dependent
on the two rules (or more likely, just the asymmetric CA rule set)
at play. Other CA pairs show differences in earlier bit positions
of the temporal sequence, but those cases have a more predictable
arrangement of differences across seed in the symmetric CA.

Again, we note that complementary sequences have the same pattern
of differences between symmetric and asymmetric CAs. See sequences
010001011 and 101110100 below.
\begin{codesmall}
>\textcompwordmark{}>\textcompwordmark{}>~hr3~=~{[}90,~90,~90,~90,~90,~90,~150,~90,~105{]}

>\textcompwordmark{}>\textcompwordmark{}>~d3~=~SeedDiffFromTemp(hr3,~9)

>\textcompwordmark{}>\textcompwordmark{}>~hr5~=~{[}150{]}~{*}~9

>\textcompwordmark{}>\textcompwordmark{}>~d5~=~SeedDiffFromTemp(hr5,~9)

>\textcompwordmark{}>\textcompwordmark{}>~d3r~=~\{\}

>\textcompwordmark{}>\textcompwordmark{}>~for~k~in~d3:

	d6r{[}d6{[}k{]}{[}0{]}{]}~=~k

>\textcompwordmark{}>\textcompwordmark{}>~for~t~in~d5:

	print~'sym~seq='~+~pb(t)~+~':'

	for~j,~i~in~enumerate(d5{[}t{]}):

		print~'{[}'~+~pb(i)~+~':'~+~pb(d3r{[}i{]})~+~':'~+~pb(d3r{[}i{]}\textasciicircum{}t)~+~'{]}',

		if~j~\&~1:

			print

key:~~{[}~Ssym~:~Tasym~:~T~\textasciicircum{}~Tasym{]}~~where

~~~T~=~temporal~sequence~generated~by~symmetrical~CA,

~~~Ssym~=~seed~of~symmetrical~CA~that~generates~T

~~~Tasym~=~temporal~sequence~of~asymmetrical~CA~on~seed~Ssym

sym~seq=100000000:

{[}000011101:100001000:000001000{]}~{[}000110101:100001001:000001001{]}

{[}001011001:100001011:000001011{]}~{[}001110001:100001010:000001010{]}

{[}010011111:100001101:000001101{]}~{[}010110111:100001100:000001100{]}

{[}011011011:100001110:000001110{]}~{[}011110011:100001111:000001111{]}

{[}100011100:100000100:000000100{]}~{[}100110100:100000101:000000101{]}

{[}101011000:100000111:000000111{]}~{[}101110000:100000110:000000110{]}

{[}110011110:100000001:000000001{]}~{[}110110110:100000000:000000000{]}

{[}111011010:100000010:000000010{]}~{[}111110010:100000011:000000011{]}

sym~seq=000000000:

{[}000101000:000000001:000000001{]}~{[}001000100:000000011:000000011{]}

{[}001101100:000000010:000000010{]}~{[}010000010:000000101:000000101{]}

{[}010101010:000000100:000000100{]}~{[}011000110:000000110:000000110{]}

{[}011101110:000000111:000000111{]}~{[}100000001:000001100:000001100{]}

{[}100101001:000001101:000001101{]}~{[}101000101:000001111:000001111{]}

{[}101101101:000001110:000001110{]}~{[}110000011:000001001:000001001{]}

{[}110101011:000001000:000001000{]}~{[}111000111:000001010:000001010{]}

{[}111101111:000001011:000001011{]}~

sym~seq=100010110:

{[}000011100:100010101:000000011{]}~{[}000110100:100010100:000000010{]}

{[}001011000:100010110:000000000{]}~{[}001110000:100010111:000000001{]}

{[}010011110:100010000:000000110{]}~{[}010110110:100010001:000000111{]}

{[}011011010:100010011:000000101{]}~{[}011110010:100010010:000000100{]}

{[}100011101:100011001:000001111{]}~{[}100110101:100011000:000001110{]}

{[}101011001:100011010:000001100{]}~{[}101110001:100011011:000001101{]}

{[}110011111:100011100:000001010{]}~{[}110110111:100011101:000001011{]}

{[}111011011:100011111:000001001{]}~{[}111110011:100011110:000001000{]}

sym~seq=100101100:

{[}000011111:100100011:000001111{]}~{[}000110111:100100010:000001110{]}

{[}001011011:100100000:000001100{]}~{[}001110011:100100001:000001101{]}

{[}010011101:100100110:000001010{]}~{[}010110101:100100111:000001011{]}

{[}011011001:100100101:000001001{]}~{[}011110001:100100100:000001000{]}

{[}100011110:100101111:000000011{]}~{[}100110110:100101110:000000010{]}

{[}101011010:100101100:000000000{]}~{[}101110010:100101101:000000001{]}

{[}110011100:100101010:000000110{]}~{[}110110100:100101011:000000111{]}

{[}111011000:100101001:000000101{]}~{[}111110000:100101000:000000100{]}

sym~seq=110001011:

{[}000010101:110000010:000001001{]}~{[}000111101:110000011:000001000{]}

{[}001010001:110000001:000001010{]}~{[}001111001:110000000:000001011{]}

{[}010010111:110000111:000001100{]}~{[}010111111:110000110:000001101{]}

{[}011010011:110000100:000001111{]}~{[}011111011:110000101:000001110{]}

{[}100010100:110001110:000000101{]}~{[}100111100:110001111:000000100{]}

{[}101010000:110001101:000000110{]}~{[}101111000:110001100:000000111{]}

{[}110010110:110001011:000000000{]}~{[}110111110:110001010:000000001{]}

{[}111010010:110001000:000000011{]}~{[}111111010:110001001:000000010{]}

sym~seq=101011000:

{[}000011010:101011110:000000110{]}~{[}000110010:101011111:000000111{]}

{[}001011110:101011101:000000101{]}~{[}001110110:101011100:000000100{]}

{[}010011000:101011011:000000011{]}~{[}010110000:101011010:000000010{]}

{[}011011100:101011000:000000000{]}~{[}011110100:101011001:000000001{]}

{[}100011011:101010010:000001010{]}~{[}100110011:101010011:000001011{]}

{[}101011111:101010001:000001001{]}~{[}101110111:101010000:000001000{]}

{[}110011001:101010111:000001111{]}~{[}110110001:101010110:000001110{]}

{[}111011101:101010100:000001100{]}~{[}111110101:101010101:000001101{]}

sym~seq=000010110:

{[}000000001:000011101:000001011{]}~{[}000101001:000011100:000001010{]}

{[}001000101:000011110:000001000{]}~{[}001101101:000011111:000001001{]}

{[}010000011:000011000:000001110{]}~{[}010101011:000011001:000001111{]}

{[}011000111:000011011:000001101{]}~{[}011101111:000011010:000001100{]}

{[}100000000:000010001:000000111{]}~{[}100101000:000010000:000000110{]}

{[}101000100:000010010:000000100{]}~{[}101101100:000010011:000000101{]}

{[}110000010:000010100:000000010{]}~{[}110101010:000010101:000000011{]}

{[}111000110:000010111:000000001{]}~{[}111101110:000010110:000000000{]}

sym~seq=110011101:

{[}000010100:110011111:000000010{]}~{[}000111100:110011110:000000011{]}

{[}001010000:110011100:000000001{]}~{[}001111000:110011101:000000000{]}

{[}010010110:110011010:000000111{]}~{[}010111110:110011011:000000110{]}

{[}011010010:110011001:000000100{]}~{[}011111010:110011000:000000101{]}

{[}100010101:110010011:000001110{]}~{[}100111101:110010010:000001111{]}

{[}101010001:110010000:000001101{]}~{[}101111001:110010001:000001100{]}

{[}110010111:110010110:000001011{]}~{[}110111111:110010111:000001010{]}

{[}111010011:110010101:000001000{]}~{[}111111011:110010100:000001001{]}

sym~seq=111000101:

{[}000010011:111001001:000001100{]}~{[}000111011:111001000:000001101{]}

{[}001010111:111001010:000001111{]}~{[}001111111:111001011:000001110{]}

{[}010010001:111001100:000001001{]}~{[}010111001:111001101:000001000{]}

{[}011010101:111001111:000001010{]}~{[}011111101:111001110:000001011{]}

{[}100010010:111000101:000000000{]}~{[}100111010:111000100:000000001{]}

{[}101010110:111000110:000000011{]}~{[}101111110:111000111:000000010{]}

{[}110010000:111000000:000000101{]}~{[}110111000:111000001:000000100{]}

{[}111010100:111000011:000000110{]}~{[}111111100:111000010:000000111{]}

sym~seq=010100111:

{[}000001010:010100001:000000110{]}~{[}000100010:010100000:000000111{]}

{[}001001110:010100010:000000101{]}~{[}001100110:010100011:000000100{]}

{[}010001000:010100100:000000011{]}~{[}010100000:010100101:000000010{]}

{[}011001100:010100111:000000000{]}~{[}011100100:010100110:000000001{]}

{[}100001011:010101101:000001010{]}~{[}100100011:010101100:000001011{]}

{[}101001111:010101110:000001001{]}~{[}101100111:010101111:000001000{]}

{[}110001001:010101000:000001111{]}~{[}110100001:010101001:000001110{]}

{[}111001101:010101011:000001100{]}~{[}111100101:010101010:000001101{]}

sym~seq=000101100:

{[}000000010:000101011:000000111{]}~{[}000101010:000101010:000000110{]}

{[}001000110:000101000:000000100{]}~{[}001101110:000101001:000000101{]}

{[}010000000:000101110:000000010{]}~{[}010101000:000101111:000000011{]}

{[}011000100:000101101:000000001{]}~{[}011101100:000101100:000000000{]}

{[}100000011:000100111:000001011{]}~{[}100101011:000100110:000001010{]}

{[}101000111:000100100:000001000{]}~{[}101101111:000100101:000001001{]}

{[}110000001:000100010:000001110{]}~{[}110101001:000100011:000001111{]}

{[}111000101:000100001:000001101{]}~{[}111101101:000100000:000001100{]}

sym~seq=010011101:

{[}000001001:010010111:000001010{]}~{[}000100001:010010110:000001011{]}

{[}001001101:010010100:000001001{]}~{[}001100101:010010101:000001000{]}

{[}010001011:010010010:000001111{]}~{[}010100011:010010011:000001110{]}

{[}011001111:010010001:000001100{]}~{[}011100111:010010000:000001101{]}

{[}100001000:010011011:000000110{]}~{[}100100000:010011010:000000111{]}

{[}101001100:010011000:000000101{]}~{[}101100100:010011001:000000100{]}

{[}110001010:010011110:000000011{]}~{[}110100010:010011111:000000010{]}

{[}111001110:010011101:000000000{]}~{[}111100110:010011100:000000001{]}

sym~seq=010110001:

{[}000001011:010111100:000001101{]}~{[}000100011:010111101:000001100{]}

{[}001001111:010111111:000001110{]}~{[}001100111:010111110:000001111{]}

{[}010001001:010111001:000001000{]}~{[}010100001:010111000:000001001{]}

{[}011001101:010111010:000001011{]}~{[}011100101:010111011:000001010{]}

{[}100001010:010110000:000000001{]}~{[}100100010:010110001:000000000{]}

{[}101001110:010110011:000000010{]}~{[}101100110:010110010:000000011{]}

{[}110001000:010110101:000000100{]}~{[}110100000:010110100:000000101{]}

{[}111001100:010110110:000000111{]}~{[}111100100:010110111:000000110{]}

sym~seq=\uline{101110100}:

{[}000011000:101110101:000000001{]}~{[}000110000:101110100:000000000{]}

{[}001011100:101110110:000000010{]}~{[}001110100:101110111:000000011{]}

{[}010011010:101110000:000000100{]}~{[}010110010:101110001:000000101{]}

{[}011011110:101110011:000000111{]}~{[}011110110:101110010:000000110{]}

{[}100011001:101111001:000001101{]}~{[}100110001:101111000:000001100{]}

{[}101011101:101111010:000001110{]}~{[}101110101:101111011:000001111{]}

{[}110011011:101111100:000001000{]}~{[}110110011:101111101:000001001{]}

{[}111011111:101111111:000001011{]}~{[}111110111:101111110:000001010{]}

sym~seq=000111010:

{[}000000011:000110110:000001100{]}~{[}000101011:000110111:000001101{]}

{[}001000111:000110101:000001111{]}~{[}001101111:000110100:000001110{]}

{[}010000001:000110011:000001001{]}~{[}010101001:000110010:000001000{]}

{[}011000101:000110000:000001010{]}~{[}011101101:000110001:000001011{]}

{[}100000010:000111010:000000000{]}~{[}100101010:000111011:000000001{]}

{[}101000110:000111001:000000011{]}~{[}101101110:000111000:000000010{]}

{[}110000000:000111111:000000101{]}~{[}110101000:000111110:000000100{]}

{[}111000100:000111100:000000110{]}~{[}111101100:000111101:000000111{]}

sym~seq=\uline{010001011}:

{[}000001000:010001010:000000001{]}~{[}000100000:010001011:000000000{]}

{[}001001100:010001001:000000010{]}~{[}001100100:010001000:000000011{]}

{[}010001010:010001111:000000100{]}~{[}010100010:010001110:000000101{]}

{[}011001110:010001100:000000111{]}~{[}011100110:010001101:000000110{]}

{[}100001001:010000110:000001101{]}~{[}100100001:010000111:000001100{]}

{[}101001101:010000101:000001110{]}~{[}101100101:010000100:000001111{]}

{[}110001011:010000011:000001000{]}~{[}110100011:010000010:000001001{]}

{[}111001111:010000000:000001011{]}~{[}111100111:010000001:000001010{]}

sym~seq=011000101:

{[}000001110:011000001:000000100{]}~{[}000100110:011000000:000000101{]}

{[}001001010:011000010:000000111{]}~{[}001100010:011000011:000000110{]}

{[}010001100:011000100:000000001{]}~{[}010100100:011000101:000000000{]}

{[}011001000:011000111:000000010{]}~{[}011100000:011000110:000000011{]}

{[}100001111:011001101:000001000{]}~{[}100100111:011001100:000001001{]}

{[}101001011:011001110:000001011{]}~{[}101100011:011001111:000001010{]}

{[}110001101:011001000:000001101{]}~{[}110100101:011001001:000001100{]}

{[}111001001:011001011:000001110{]}~{[}111100001:011001010:000001111{]}

sym~seq=101100010:

{[}000011001:101101000:000001010{]}~{[}000110001:101101001:000001011{]}

{[}001011101:101101011:000001001{]}~{[}001110101:101101010:000001000{]}

{[}010011011:101101101:000001111{]}~{[}010110011:101101100:000001110{]}

{[}011011111:101101110:000001100{]}~{[}011110111:101101111:000001101{]}

{[}100011000:101100100:000000110{]}~{[}100110000:101100101:000000111{]}

{[}101011100:101100111:000000101{]}~{[}101110100:101100110:000000100{]}

{[}110011010:101100001:000000011{]}~{[}110110010:101100000:000000010{]}

{[}111011110:101100010:000000000{]}~{[}111110110:101100011:000000001{]}

sym~seq=001001110:

{[}000000110:001001011:000000101{]}~{[}000101110:001001010:000000100{]}

{[}001000010:001001000:000000110{]}~{[}001101010:001001001:000000111{]}

{[}010000100:001001110:000000000{]}~{[}010101100:001001111:000000001{]}

{[}011000000:001001101:000000011{]}~{[}011101000:001001100:000000010{]}

{[}100000111:001000111:000001001{]}~{[}100101111:001000110:000001000{]}

{[}101000011:001000100:000001010{]}~{[}101101011:001000101:000001011{]}

{[}110000101:001000010:000001100{]}~{[}110101101:001000011:000001101{]}

{[}111000001:001000001:000001111{]}~{[}111101001:001000000:000001110{]}

sym~seq=110110001:

{[}000010110:110110100:000000101{]}~{[}000111110:110110101:000000100{]}

{[}001010010:110110111:000000110{]}~{[}001111010:110110110:000000111{]}

{[}010010100:110110001:000000000{]}~{[}010111100:110110000:000000001{]}

{[}011010000:110110010:000000011{]}~{[}011111000:110110011:000000010{]}

{[}100010111:110111000:000001001{]}~{[}100111111:110111001:000001000{]}

{[}101010011:110111011:000001010{]}~{[}101111011:110111010:000001011{]}

{[}110010101:110111101:000001100{]}~{[}110111101:110111100:000001101{]}

{[}111010001:110111110:000001111{]}~{[}111111001:110111111:000001110{]}

sym~seq=111010011:

{[}000010010:111010100:000000111{]}~{[}000111010:111010101:000000110{]}

{[}001010110:111010111:000000100{]}~{[}001111110:111010110:000000101{]}

{[}010010000:111010001:000000010{]}~{[}010111000:111010000:000000011{]}

{[}011010100:111010010:000000001{]}~{[}011111100:111010011:000000000{]}

{[}100010011:111011000:000001011{]}~{[}100111011:111011001:000001010{]}

{[}101010111:111011011:000001000{]}~{[}101111111:111011010:000001001{]}

{[}110010001:111011101:000001110{]}~{[}110111001:111011100:000001111{]}

{[}111010101:111011110:000001101{]}~{[}111111101:111011111:000001100{]}

sym~seq=101001110:

{[}000011011:101000011:000001101{]}~{[}000110011:101000010:000001100{]}

{[}001011111:101000000:000001110{]}~{[}001110111:101000001:000001111{]}

{[}010011001:101000110:000001000{]}~{[}010110001:101000111:000001001{]}

{[}011011101:101000101:000001011{]}~{[}011110101:101000100:000001010{]}

{[}100011010:101001111:000000001{]}~{[}100110010:101001110:000000000{]}

{[}101011110:101001100:000000010{]}~{[}101110110:101001101:000000011{]}

{[}110011000:101001010:000000100{]}~{[}110110000:101001011:000000101{]}

{[}111011100:101001001:000000111{]}~{[}111110100:101001000:000000110{]}

sym~seq=001011000:

{[}000000111:001010110:000001110{]}~{[}000101111:001010111:000001111{]}

{[}001000011:001010101:000001101{]}~{[}001101011:001010100:000001100{]}

{[}010000101:001010011:000001011{]}~{[}010101101:001010010:000001010{]}

{[}011000001:001010000:000001000{]}~{[}011101001:001010001:000001001{]}

{[}100000110:001011010:000000010{]}~{[}100101110:001011011:000000011{]}

{[}101000010:001011001:000000001{]}~{[}101101010:001011000:000000000{]}

{[}110000100:001011111:000000111{]}~{[}110101100:001011110:000000110{]}

{[}111000000:001011100:000000100{]}~{[}111101000:001011101:000000101{]}

sym~seq=100111010:

{[}000011110:100111110:000000100{]}~{[}000110110:100111111:000000101{]}

{[}001011010:100111101:000000111{]}~{[}001110010:100111100:000000110{]}

{[}010011100:100111011:000000001{]}~{[}010110100:100111010:000000000{]}

{[}011011000:100111000:000000010{]}~{[}011110000:100111001:000000011{]}

{[}100011111:100110010:000001000{]}~{[}100110111:100110011:000001001{]}

{[}101011011:100110001:000001011{]}~{[}101110011:100110000:000001010{]}

{[}110011101:100110111:000001101{]}~{[}110110101:100110110:000001100{]}

{[}111011001:100110100:000001110{]}~{[}111110001:100110101:000001111{]}

sym~seq=001100010:

{[}000000100:001100000:000000010{]}~{[}000101100:001100001:000000011{]}

{[}001000000:001100011:000000001{]}~{[}001101000:001100010:000000000{]}

{[}010000110:001100101:000000111{]}~{[}010101110:001100100:000000110{]}

{[}011000010:001100110:000000100{]}~{[}011101010:001100111:000000101{]}

{[}100000101:001101100:000001110{]}~{[}100101101:001101101:000001111{]}

{[}101000001:001101111:000001101{]}~{[}101101001:001101110:000001100{]}

{[}110000111:001101001:000001011{]}~{[}110101111:001101000:000001010{]}

{[}111000011:001101010:000001000{]}~{[}111101011:001101011:000001001{]}

sym~seq=011101001:

{[}000001100:011101010:000000011{]}~{[}000100100:011101011:000000010{]}

{[}001001000:011101001:000000000{]}~{[}001100000:011101000:000000001{]}

{[}010001110:011101111:000000110{]}~{[}010100110:011101110:000000111{]}

{[}011001010:011101100:000000101{]}~{[}011100010:011101101:000000100{]}

{[}100001101:011100110:000001111{]}~{[}100100101:011100111:000001110{]}

{[}101001001:011100101:000001100{]}~{[}101100001:011100100:000001101{]}

{[}110001111:011100011:000001010{]}~{[}110100111:011100010:000001011{]}

{[}111001011:011100000:000001001{]}~{[}111100011:011100001:000001000{]}

sym~seq=110100111:

{[}000010111:110101001:000001110{]}~{[}000111111:110101000:000001111{]}

{[}001010011:110101010:000001101{]}~{[}001111011:110101011:000001100{]}

{[}010010101:110101100:000001011{]}~{[}010111101:110101101:000001010{]}

{[}011010001:110101111:000001000{]}~{[}011111001:110101110:000001001{]}

{[}100010110:110100101:000000010{]}~{[}100111110:110100100:000000011{]}

{[}101010010:110100110:000000001{]}~{[}101111010:110100111:000000000{]}

{[}110010100:110100000:000000111{]}~{[}110111100:110100001:000000110{]}

{[}111010000:110100011:000000100{]}~{[}111111000:110100010:000000101{]}

sym~seq=011010011:

{[}000001111:011011100:000001111{]}~{[}000100111:011011101:000001110{]}

{[}001001011:011011111:000001100{]}~{[}001100011:011011110:000001101{]}

{[}010001101:011011001:000001010{]}~{[}010100101:011011000:000001011{]}

{[}011001001:011011010:000001001{]}~{[}011100001:011011011:000001000{]}

{[}100001110:011010000:000000011{]}~{[}100100110:011010001:000000010{]}

{[}101001010:011010011:000000000{]}~{[}101100010:011010010:000000001{]}

{[}110001100:011010101:000000110{]}~{[}110100100:011010100:000000111{]}

{[}111001000:011010110:000000101{]}~{[}111100000:011010111:000000100{]}

sym~seq=001110100:

{[}000000101:001111101:000001001{]}~{[}000101101:001111100:000001000{]}

{[}001000001:001111110:000001010{]}~{[}001101001:001111111:000001011{]}

{[}010000111:001111000:000001100{]}~{[}010101111:001111001:000001101{]}

{[}011000011:001111011:000001111{]}~{[}011101011:001111010:000001110{]}

{[}100000100:001110001:000000101{]}~{[}100101100:001110000:000000100{]}

{[}101000000:001110010:000000110{]}~{[}101101000:001110011:000000111{]}

{[}110000110:001110100:000000000{]}~{[}110101110:001110101:000000001{]}

{[}111000010:001110111:000000011{]}~{[}111101010:001110110:000000010{]}

sym~seq=111101001:

{[}000010001:111100010:000001011{]}~{[}000111001:111100011:000001010{]}

{[}001010101:111100001:000001000{]}~{[}001111101:111100000:000001001{]}

{[}010010011:111100111:000001110{]}~{[}010111011:111100110:000001111{]}

{[}011010111:111100100:000001101{]}~{[}011111111:111100101:000001100{]}

{[}100010000:111101110:000000111{]}~{[}100111000:111101111:000000110{]}

{[}101010100:111101101:000000100{]}~{[}101111100:111101100:000000101{]}

{[}110010010:111101011:000000010{]}~{[}110111010:111101010:000000011{]}

{[}111010110:111101000:000000001{]}~{[}111111110:111101001:000000000{]}

sym~seq=011111111:

{[}000001101:011110111:000001000{]}~{[}000100101:011110110:000001001{]}

{[}001001001:011110100:000001011{]}~{[}001100001:011110101:000001010{]}

{[}010001111:011110010:000001101{]}~{[}010100111:011110011:000001100{]}

{[}011001011:011110001:000001110{]}~{[}011100011:011110000:000001111{]}

{[}100001100:011111011:000000100{]}~{[}100100100:011111010:000000101{]}

{[}101001000:011111000:000000111{]}~{[}101100000:011111001:000000110{]}

{[}110001110:011111110:000000001{]}~{[}110100110:011111111:000000000{]}

{[}111001010:011111101:000000010{]}~{[}111100010:011111100:000000011{]}

sym~seq=111111111:

{[}000010000:111111111:000000000{]}~{[}000111000:111111110:000000001{]}

{[}001010100:111111100:000000011{]}~{[}001111100:111111101:000000010{]}

{[}010010010:111111010:000000101{]}~{[}010111010:111111011:000000100{]}

{[}011010110:111111001:000000110{]}~{[}011111110:111111000:000000111{]}

{[}100010001:111110011:000001100{]}~{[}100111001:111110010:000001101{]}

{[}101010101:111110000:000001111{]}~{[}101111101:111110001:000001110{]}

{[}110010011:111110110:000001001{]}~{[}110111011:111110111:000001000{]}

{[}111010111:111110101:000001010{]}~{[}111111111:111110100:000001011{]}
\end{codesmall}

\subsection{\label{apn:NumPerSeq}Number of Periods and Unique Sequences of 9-cell
CAs}

\begin{table}[t]
\begin{centering}
\resizebox*{\textwidth}{.9\textheight}{%
\begin{tabular}{|c|c|c|c|c|c|c|c|c|}
\hline 
Ruleset & Period & Sequences & Ruleset & Period & Sequences & Ruleset & Period & Sequences\tabularnewline
\hline 
\hline 
110111101 & 30 & 511 & 110001100 & 14 & 511 & 110001110 & 10 & 256\tabularnewline
\hline 
110100101 & 30 & 511 & 110010111 & 14 & 511 & {*}110000011 & 10 & 32\tabularnewline
\hline 
110110000 & 30 & 511 & 110011000 & 14 & 511 & {*}110010011 & 10 & 32\tabularnewline
\hline 
111000011 & 30 & 511 & 110100111 & 14 & 256 & 110101000 & 8 & 256\tabularnewline
\hline 
111001000 & 30 & 511 & 111001011 & 14 & 256 & 111100011 & 8 & 256\tabularnewline
\hline 
111011110 & 30 & 511 & 111111101 & 14 & 256 & 111110001 & 8 & 256\tabularnewline
\hline 
111101010 & 30 & 511 & 110101100 & 14 & 128 & 110001111 & 8 & 256\tabularnewline
\hline 
111101110 & 30 & 511 & 110010100 & 14 & 128 & 110010000 & 8 & 256\tabularnewline
\hline 
111110000 & 30 & 511 & 110011010 & 14 & 128 & 111111000 & 8 & 128\tabularnewline
\hline 
110000110 & 30 & 511 & 111110010 & 14 & 64 & 110000100 & 8 & 128\tabularnewline
\hline 
110000111 & 30 & 511 & {*}111101111 & 14 & 32 & 110001010 & 8 & 128\tabularnewline
\hline 
110001001 & 30 & 511 & 111001110 & 12 & 511 & {*}111000111 & 8 & 32\tabularnewline
\hline 
111000100 & 30 & 256 & 111001111 & 12 & 511 & {*}111111111 & 7 & 32\tabularnewline
\hline 
111010010 & 30 & 256 & 111011100 & 12 & 511 & 110010010 & 6 & 511\tabularnewline
\hline 
111010101 & 30 & 256 & 111100111 & 12 & 511 & 111000101 & 6 & 256\tabularnewline
\hline 
111011101 & 30 & 256 & 111111001 & 12 & 511 & 111010001 & 6 & 256\tabularnewline
\hline 
111100001 & 30 & 256 & 110001000 & 12 & 511 & 111101000 & 6 & 128\tabularnewline
\hline 
110010001 & 30 & 256 & 110011101 & 12 & 511 & 110100011 & 6 & 64\tabularnewline
\hline 
{*}110111011 & 30 & 32 & 110110111 & 12 & 256 & 111100010 & 6 & 64\tabularnewline
\hline 
{*}110101011 & 30 & 32 & 110111001 & 12 & 256 & 110001011 & 6 & 64\tabularnewline
\hline 
110100000 & 28 & 511 & 110111110 & 12 & 256 & 110100100 & 6 & 32\tabularnewline
\hline 
111001100 & 28 & 511 & 110100001 & 12 & 256 & 110101001 & 4 & 256\tabularnewline
\hline 
110000010 & 28 & 511 & 111000010 & 12 & 256 & 111001101 & 4 & 256\tabularnewline
\hline 
110011001 & 28 & 511 & 111011011 & 12 & 256 & 111010100 & 4 & 256\tabularnewline
\hline 
110011100 & 28 & 511 & 111101101 & 12 & 256 & 111011001 & 4 & 256\tabularnewline
\hline 
110110010 & 24 & 511 & 111110011 & 12 & 256 & 111101100 & 4 & 256\tabularnewline
\hline 
110010110 & 24 & 511 & 111111100 & 12 & 256 & 110011110 & 4 & 256\tabularnewline
\hline 
110100110 & 24 & 64 & 110000000 & 12 & 256 & 110111100 & 4 & 128\tabularnewline
\hline 
110110100 & 24 & 32 & 110011111 & 12 & 256 & 110110011 & 4 & 64\tabularnewline
\hline 
110101111 & 16 & 511 & 111010000 & 12 & 128 & 111001010 & 4 & 64\tabularnewline
\hline 
111101011 & 16 & 511 & 111110110 & 12 & 32 & 110011011 & 4 & 64\tabularnewline
\hline 
111110101 & 16 & 511 & 110000101 & 12 & 32 & 111100110 & 4 & 32\tabularnewline
\hline 
111111010 & 16 & 511 & 110100010 & 10 & 511 & 110010101 & 4 & 32\tabularnewline
\hline 
{*}111010111 & 16 & 32 & 110101010 & 10 & 511 & 110110110 & 3 & 64\tabularnewline
\hline 
110111111 & 14 & 511 & 111000110 & 10 & 511 & 110000001 & 1 & 256\tabularnewline
\hline 
111010011 & 14 & 511 & 111011000 & 10 & 511 & 111000000 & 1 & 128\tabularnewline
\hline 
111011111 & 14 & 511 & 111100000 & 10 & 511 & 110110101 & 0 & 511\tabularnewline
\hline 
111100101 & 14 & 511 & 111100100 & 10 & 511 & 110111010 & 0 & 511\tabularnewline
\hline 
111101001 & 14 & 511 & 110001101 & 10 & 511 & 110101101 & 0 & 511\tabularnewline
\hline 
111110100 & 14 & 511 & 110110001 & 10 & 256 & 111010110 & 0 & 511\tabularnewline
\hline 
111110111 & 14 & 511 & 110111000 & 10 & 256 & 110101110 & 0 & 256\tabularnewline
\hline 
111111011 & 14 & 511 & 111000001 & 10 & 256 & 111011010 & 0 & 64\tabularnewline
\hline 
111111110 & 14 & 511 & 111001001 & 10 & 256 & \multicolumn{1}{c}{} & \multicolumn{1}{c}{} & \multicolumn{1}{c}{}\tabularnewline
\cline{1-6} 
\end{tabular}}
\par\end{centering}

\textsf{\caption{\textsf{\label{tab:RawPerSeq}Raw data for number of periods and sequences
of various 9-cell CA. Rulesets have a 1 for rule 150 cells, 0 for
rule 90 cells. Rulesets marked with '{*}' are symmetric.}}
}
\end{table}

\end{document}